\documentclass[11pt]{article}
\usepackage[utf8]{inputenc}
\usepackage{amsfonts}
\usepackage{xcolor}         
\usepackage{amsthm}
\usepackage{amssymb}
\usepackage{amsmath}
\usepackage[ruled]{algorithm2e}
\usepackage{comment}
\usepackage{microtype}
\SetAlgoSkip{smallskip}
\usepackage{enumitem} 
\setlist{noitemsep,topsep=3pt,parsep=0pt,partopsep=0pt}
\usepackage{thmtools}
\usepackage{thm-restate}

\newif\ifsub
\subtrue



\usepackage[margin=1in]{geometry}

\newtheorem{theorem}{Theorem}[section]
\newtheorem{lemma}[theorem]{Lemma}
\newtheorem{claim}[theorem]{Claim}
\newtheorem{proposition}[theorem]{Proposition}
\newtheorem{corollary}[theorem]{Corollary}
\newtheorem{fact}{Fact}
\newtheorem{definition}{Definition}
\newtheorem{example}{Example}
\newtheorem{remark}{Remark}

\usepackage{hyperref}

\newcommand*{\email}[1]{%
    \normalsize\href{mailto:#1}{#1}\par
    }

\newcommand{\authnote}[2]{{ \footnotesize \bf{[#1: #2]~}}}
\newcommand{\vm}[1]{{\color{orange}\authnote{VM}{#1}}}
\newcommand{\pc}[1]{{\color{cyan}\authnote{PC}{#1}}}

\newcommand{\ignore}[1]{}

\ifsub
\renewcommand{\authnote}[2]{}
\fi

\renewcommand{\vec}[1]{\mathbf{#1}}
\newcommand{\vecp}{\vec{p}}

\newcommand{\lazy}{\bot}

\title{Blockchain Participation Games}
\author{Pyrros Chaidos\footnote{National and Kapodistrian University of Athens
and IOG, \email{p.chaidos@di.uoa.gr}} \and Aggelos Kiayias\footnote{University of Edinburgh and IOG, \email{aggelos.kiayias@ed.ac.uk}} \and Evangelos Markakis\footnote{Athens University of Economics and Business and IOG, \email{markakis@aueb.gr}}
}

\begin{document}
\maketitle

\begin{abstract}
    We study game-theoretic models for capturing participation in blockchain systems. Permissionless  blockchains can be naturally viewed as games, where a set of potentially interested users is faced with the dilemma of whether to engage with the protocol or not. 
    Engagement here implies that the user will be asked to complete certain tasks, whenever they are selected to contribute (typically according to some stochastic process) and be rewarded if they choose to do so.
    Apart from the basic dilemma of engaging or not, even more strategic considerations arise in settings  where users may be able to declare participation and then retract before completing their tasks (but are still  able to receive rewards) or are rewarded independently of whether they contribute. Such variations occur naturally in the blockchain setting due to the complexity of tracking ``on-chain'' the behavior of the participants.  
    
    We capture these participation considerations offering a series of models that 
    enable us to reason about 
    the basic dilemma, the case where  retraction effects influence the outcome and 
    the case when payments are given universally irrespective of the stochastic process. 
    In all cases we provide characterization results or necessary conditions on the structure of Nash equilibria. Our findings reveal that appropriate reward mechanisms can be used to stimulate participation and avoid negative effects of free riding, results that are in line but also can inform  real world blockchain system deployments.
\end{abstract}

\section{Introduction}

Blockchain protocols \cite{nakamoto} are typified by so called ``permissionless participation'', where the agents get to decide whether they wish to engage in the  protocol or not and if they choose to, they can do so {\em unilaterally}. This means that the system is capable of making the necessary adjustments to accommodate for increased or decreased participation, while there is no authority that whitelists the agents who participate --- for any user, merely downloading the software and running it is sufficient to become a part of the network. 

Based on the above, we observe that every running blockchain defines a {\em participation game}. A simple version of this game, can be described as follows: consider a protocol that operates in distinct units of time that we will call {\em epochs}. Imagine now a population of potentially interested agents, with a binary action space, who need to decide whether to engage (participate) or not. There are two prominent features that we are interested in studying in this work.
First, blockchain protocols such as Bitcoin \cite{nakamoto}, Algorand \cite{algorand}, Ouroboros \cite{ouroboros}, and Ethereum \cite{wood2014ethereum}, incorporate a stochastic process, where only some of the agents who chose to participate are {\em eligible} to contribute within an epoch, and the others are not. This is an essential component that is either achieved via so called, proof of work, or proof of stake or other similar techniques and ensures that the resulting  complexity of the protocol will be {\em sublinear} in the number of participating agents. Otherwise, it becomes very unlikely to have reasonable performance guarantees. 

The second feature is the {\em threshold} behavior of such systems, where in order for the blockchain protocol to make progress, the number of contributing users (among the eligible ones in each epoch) should exceed a certain {\em threshold} $k$. In some protocols, this threshold can be merely $1$ (e.g., in Bitcoin it is sufficient that one agent produces a block for the blockchain to advance), protocols, 
a larger $k$ is required (e.g., Ethereum currently needs a 2/3-majority voting among its randomly selected committees of at least 128 block validators in each epoch  \cite{UEthereumCommittees}). 




With respect to the utility of the agents, in most cases of interest, the protocol issues a reward to those who were both eligible and actually contributed (i.e., completing whatever task was dictated by the protocol for the eligible users) within an epoch, while at the same time, participating incurs a cost, incorporating effort, time and equipment. The utility then clearly depends on the stochastic process that determines the agents' eligibility.
In the simplest scenario, for example, all agents are treated equally and have the same probability of being eligible.
Finally, on top of rewards and costs, the 
protocol also induces a non-negative public benefit which is expressed as an additive ``bonus'' parameter, enjoyed by all the agents (including those who abstain) as long as the blockchain makes progress.
We stress that the game described so far is applicable not only for the process of producing new blocks from one epoch to another, but also for other applications within blockchain systems, where a group task needs to be completed, such as producing SNARKs (e.g., \cite{groth2016size}) for bootstrapping new users in the system or for building bridges between blockchains. 

The model described so far already gives rise to some interesting consequences, as elaborated in Section \ref{sec:simple}. If we delve into the implementations of such reward mechanisms however, there can be a significant burden imposed by keeping track of all participants who contributed in order to issue rewards. This may come in conflict with efficiency considerations and the use  of cryptographic primitives which  compress the participation information in order to offer complexities sublinear in the number of engaging parties. A concrete  example of this behavior in the context of blockchains is compact certificates~\cite{micali2021compact}), which carefully select what signatures to include when composing a multi-user certificate so that the certificate's size is kept small. 
In systems that utilize such more efficient primitives, it is impossible to reward exactly those who were eligible and participated and thus one has to resort to rewarding even those that may not be fully participating\footnote{An example of such a mechanism from the real world was staking rewards in Algorand up until April 2022, see e.g., \url{https://www.algorand.foundation/200-million-algo-staking-rewards-program}.}.

From the perspective of our work, such implementation considerations open up further strategic choices. 
In case  all eligible players are rewarded irrespective
of whether they complete all the assigned tasks, we have a participation
game {\em with retraction}, where  agents can declare they will participate, but afterwards refrain from completing all their assigned tasks. On the other hand, in case all players are rewarded 
irrespective of whether they were even eligible at an epoch, we have a participation game {\em with universal payments}. 
In these settings we observe a trade-off: the complexity of implementing the mechanism is lower (as the system does not need to keep track of detailed information regarding how players perform their assigned tasks) but 
 the possibility is opened up that some participants can become {\em free-riders}, reaping the rewards of an advancing blockchain without incurring costs to themselves. 
The  question we seek to answer here is whether the blockchain system remains viable and under what conditions. 
%

\subsection{Contribution}
Based on the previous discussion, we propose a formal framework to study participation games that focus on the aspects of engagement and free riding in blockchain systems. 
We start in Section \ref{sec:simple} with introducing a simple model, where each user is simply faced with the basic dilemma of participating or not and rewards are given accordingly. In Section \ref{sec:richer-model}, we then consider a richer model with the possibility of retraction, while in Section~\ref{sec:algorand-model} we extend our investigation to the setting of universal payments.  Our first observation is that in any attempt to define such games, the trivial profile where nobody participates is an equilibrium. We view this more as an artifact of the definition, and certainly far from what is observed in practice. We are therefore interested in the following questions.
\begin{quote}
\emph{Q1: Do these games possess other non-trivial Nash equilibria? If so, how big is the percentage of users that chooses to contribute at an equilibrium? }
\smallskip

\emph{Q2: How should we set the reward to the users so as to incentivize participation for a sufficiently large fraction of users so that the blockchain system remains in operation ?} 

\end{quote}

We start in Section \ref{sec:simple} with introducing a basic model for participation games, with the features we have outlined so far, and where each user faces the dilemma of participating or not with rewards given accordingly. We consider different variations of the game based on the selection probability, with an emphasis on the homogeneous case that treats all players equally. Our findings show that we can have equilibria with a very high level of engagement from the users, as long as the reward parameter is set within appropriate ranges, dependent on the cost and the other game parameters. We also extend our analysis to the non-homogeneous case, and even though the participation level may not always be as high as before, we can still show that it is high enough that the blockchain makes progress with high probability in every epoch.

Moving on, in Section \ref{sec:richer-model}, we consider a richer model with the possibility of retraction. This model requires a more involved analysis,  since users now have an additional choice of declaring participation and then ``retracting'' i.e., not completing all the assigned tasks (but still get rewarded when eligible). Even so, we are able to show that non-trivial equilibria will have a relatively high number of contributors. 

Finally, in Section~\ref{sec:algorand-model}, we consider the case of universal payments where participants are being paid irrespective of their eligibility. This respresents the simplest bookkeeping possible in terms of implementing the reward mechanism. In the context of universal payments, we consider both the basic participation game as well as the case of games with retraction. In both accounts, expectedly, the simplification in bookkeeping comes with the negatives of higher overall expenditure as well as the unfairness that stems from paying participants who do the actual work the same with those that skip their tasks. 

Overall, we believe our findings reveal a positive picture, confirming that simple reward mechanisms can stimulate engagement; this is consistent with what is observed in various actual blockchains, having reasonably large and active user populations, even with the possibility of retraction. 
Moreover our quantitative analysis for equilibria conditions can inform blockchain designers who wish to understand the level of rewards needed to make such systems viable and successful. 

As a roadmap to the paper we illustrate the different variants of our modeling  of participation games in table~\ref{tab:overview}.

\begin{table}[tbp]
\begin{center}
\begin{tabular}{|c|p{7cm}|c|c|c|}
\hline
Event & Intepretation  & Sec.~\ref{sec:simple} & Sec.~\ref{sec:richer-model} & Sec.~\ref{sec:algorand-model} \\ \hline
Abstain & Player chooses  not to participate & no reward & no reward & no reward \\ \hline
Participate & Player downloads and runs software & (depends) & (depends) & reward  \\ \hline
Eligible & The participating player is selected to contribute & reward & reward & reward \\ \hline
Not Eligible & The participating player is not selected & no reward & no reward & reward \\ \hline
Retract & The participating eligible player does not perform all assigned tasks & no reward & reward & reward \\ \hline
\end{tabular}
\caption{Overview of the different types of participation games showing different reward outcomes depending on the various possible events taking place.} \label{tab:overview}
\end{center}
\end{table}

\subsection{Related work}
The games that we introduce in Section \ref{sec:simple} bear some similarities with classic discrete public good games, also referred to as step-level games, which have been extensively studied, both theoretically (see e.g. \cite{PR84,GN90} and even more recent variants in \cite{GN22}), and experimentally, within behavioral game theory (indicatively see e.g., \cite{EP90}). The main difference is that we have a randomized process for determining the eligible players per epoch, whereas in public good games, any person is automatically eligible to contribute. We also pay special attention to the case where the public benefit is zero (for users who care only for the monetary reward). Moreover, our models in Section \ref{sec:richer-model}, and in Section \ref{sec:univeral+retraction} diverge further from public good games, as they have a richer strategy space.

There are already numerous works that have studied various 
game-theoretic aspects of blockchain systems. A common theme right from the outset of blockchains has been the study of {\em mining games}, cf.~\cite{mininggame},
where participants are facing the dilemma of which version of the ledger to extend. 
Important results in this context include models for selfish-mining, such as 
\cite{selfish}, as well as  \cite{kroll,mininggame,DBLP:conf/ec/FiatKKP19,10.1145/3490486.3538337}, 
that established both positive and negative
results on  whether the underlying  protocol is an equilibrium
or whether parties are incentivized to deviate from ``honest'' behavior and resort to block withholding.   
Another common theme has emerged from game theoretic aspects of 
{\em pooling} behavior: given that such protocols are
permissionless and they do not incorporate any mechanism capable 
of distinguishing the participants as being separate entities,
it is possible for them to form coalitions --- called mining pools or stake pools --- 
and act in tandem as a single entity in the protocol. 
Prior work established various conditions under which such pools arise
and studied their relative size~
\cite{DBLP:conf/fc/SchrijversBBR16,DBLP:conf/eurosp/BrunjesKKS20,DBLP:conf/aft/KwonLKSK19,DBLP:conf/innovations/ArnostiW19,kiasto2021,kiayias-sagt}. 

In blockchain systems users also compete for transaction processing time 
and space. Auction mechanisms can be used to increase the overall 
welfare  or the profit extracted by maintainers cf.~\cite{DBLP:journals/sigecom/Roughgarden21}. Moreover, given that ordering
and injecting transactions may increase the profits of maintainers, a ``miner extractable value''  (MEV) game may arise between maintainers and users~\cite{DBLP:conf/sp/DaianGKLZBBJ20}. 

A recent work by Motepalli and Jacobsen \cite{DBLP:journals/corr/abs-2104-05849} uses evolutionary game theory to study reward mechanisms in blockchains but under a very different model: in their theoretical model the actions of the user being analyzed do not affect the outcome of the game. Furthermore, they adopt a tighter focus on blockchain-related games with selfish actions modelling potentially malicious behavior such as validating invalid transactions. We opt for a more traditional model where selfish players simply try to avoid work. This has the benefit of making our analysis more general, as blockchain-specific measures against malicious behaviour may not translate well to other settings. \pc{was cut from camera}

We stress that the above studies on selfish mining and stake pools are conditioned 
on having sufficient participants
engaged  in the system. Hence, we view these approaches as orthogonal to ours, since our goal
is to study the basic dilemma of participating or not. The question we are interested in, is whether the participation game itself has favorable equilibria and under what conditions 
they occur.

\ignore{  
Suppose a large number of parties are participating in the operation of a computer system. Periodically, a subset of them is called upon to summarise the status of the system by digitally signing a checkpoint. Checkpoints are mostly used to help new users join the system; existing users have little use for them.  For any given checkpoint, each party privately learns if they are a member of the eligible subset. Eligibility is broadly analogous to amount of stake in the system held by each user, so external parties can estimate the amount of participation expected by each party even whilst the eligibility of each party on any particular period remains secret until the party opts to invoke it. At that time, their eligibility can be verified easily by other parties.

A checkpoint can be certified if a set of signatures has been posted by eligible users filling a predetermined number of ``seats’’\footnote{For technical reasons, eligibility is considered over a number of ``seats’’ in a committee. It’s possible for multiple users to be eligible for the same seat, for a user to be eligible for multiple seats and also for seats to remain empty.}. That number is chosen so that an adversarial minority will have a negligible probability of producing the required number of signatures whilst an honest majority will be able to do so with high or overwhelming probability.

To produce the certificate, an aggregator is tasked with combining the required amount of signatures into the actual certificate. Certificates are not necessarily unique: it’s possible for more than the minimum number of signatures to be present, so that there exists a degree of freedom in choosing a subset of them to go into the certificate.

Most simple cryptographic models separate users into honest and adversarial ones: adversarial ones are assumed to coordinate so that they may attempt to exploit and subvert the system, whilst honest ones can be trusted to follow a prescribed protocol without fail. An extension to this, is to consider ``rational’’ adversaries who will only act to subvert a system if they stand to benefit from the subversion (i.e. they will refuse to pay $\$$2 to steal $\$1$). 

In our case we wish to extend this concept to the honest users of our system: system operators may opt not to engage in a protocol if they believe they do not gain from doing so. In particular, as checkpoints are only valuable to users (currently) outside the system it seems reasonable to assume that signers will need to be directly compensated for producing signatures and certificates. 
}

\ignore{
\subsection{Incentive Applications in Blockchains}

Blockchain applications are a good example of participation games: the performance and security of such systems is predicated on honest users expending effort so as to progress the system. This is particularly relevant to proof of stake designs as they often rely on voting protocols making use of numerous users typically called validators.

In Ethereum 2.0, validators are tasked with a number of issues to be checked and voted on. Validators are monitored with regard to their performance with penalties leveraged against underperforming validators (i.e ones that diverge from consensus or that fail to vote in a timely manner). Harsher penalties are enforced if there is evidence of malicious intent (i.e double voting or validating invalid data). The amount staked by validators lies in a narrow interval:  effective stake is capped at 32ETH (ca. \$1900 USD) whilst validators are removed if their balance falls under 16ETH due to penalties. This design is party due to technical limitations: a lower stake amount would be desirable for decentralisation but would impose higher overheads.

In Algorand, no minimum is enforced, and a weighted randomisation process is used to assess the eligibility of users for various tasks. As of 2023, Algorand has transitioned to a governance reward model rather than a participation reward model. In the governance model, users pledge some of their stake and vote on issues according to their preferences. Rewards are forfeited if any votes are missed or the pledged stake is moved.
}

\section{A basic model for participation games in permissionless protocols}
\label{sec:simple}


The basic model of participation games captures the  setting where a permissionless system invites agents to participate (a prospect that requires some expenditure on their part, e.g., run a software that processes transactions and attempts to reach consensus with other participants) 
and rewards them if they become eligible. 

Let $N=\{1, 2, \dots, n\}$ be a population of $n$ users. We consider protocols that run on a per epoch basis without loss of generality, and where the goal is to make ``progress'' in each epoch. Progress may be related to block production in longest chain protocols or it could even be related to other applications within blockchains, where some task needs to be carried out by the users (like issuing some group certificate, a checkpoint,
or validating a rollup). 
We focus on modeling applications with the following two features.

\begin{itemize}
    \item[(i)] The use of randomization. As it is impractical to ask all users to contribute, due to throughput and other practical considerations, the protocol selects in each epoch only a subset of users that are eligible to contribute and be rewarded. 
    
    \item[(ii)] The successful completion of the task is {\it threshold-based}, i.e., there is a public parameter $k$, so that progress is made  only if at least $k$ users contributed towards carrying out the task. 

\end{itemize}


We introduce now some of the relevant model parameters. In particular, we use the following abstractions:

\begin{itemize}
    \item Let $\alpha$ be the per-epoch cost of participation. This cost can be seen as the average per-epoch cost of running the protocol software uninterruptedly throughout an epoch, possibly updating it when necessary.
    This cost can capture both actual monetary cost (in electricity, etc.) and perceived human effort cost. 
    
    \item Let $r$ denote the monetary reward\footnote{We can think of $r$ as the  {\em expected} reward per epoch, conditioned on eligibility and on completion of the task. Using an expectation here enables us to capture the fact that even when players are eligible, they may occasionally lose the opportunity to contribute and be rewarded (e.g., in Bitcoin they can lose the race to distribute their block which is subsequently dropped by the network).} that is given to each player who is eligible (i.e., selected, according to the randomization procedure) in a given epoch, as long as progress is made. We assume here that the reward is given to a user after the protocol checks that indeed a user performed the actions necessary to contribute.   
    
    \item Let $k$ be the threshold, i.e., the minimum required number of eligible participants that need to contribute in an epoch, so that the blockchain makes progress.
    
    \item Let $q$ be the probability with which a player is eligible in a given epoch. This parameter may depend on $k$ and on the number of participating users. We assume an independent Bernoulli trial for each user, hence we do not insist that we have the same number of eligible players per epoch. Reasonable choices for $q$ is to make it sufficiently high so that in expectation we have enough eligible players per epoch to reach the threshold. Indicatively, we could think of $q$ in the range of $[2k/n, 3k/n]$. See also Remark \ref{rem:symmetric} below on the treatment of players with different stake or hashing power.
    \item Let $v$ be the inherent value that a player associates with the blockchain making progress. We note that this could be quite small compared to the blockchain rewards for maintenance but is intended to model the potential benefit the existence of a blockchain (or of the specific system that the players are invited to contribute to) brings to its users.  
\end{itemize}

\begin{remark}
\label{rem:symmetric}
    The model we are considering describes a homogeneous population, where both the probability of selection and the inherent value are the same for every player. We view this as an initial step on the analysis of such systems, and in Section \ref{sec:asym} we expand to cover asymmetric players in terms of the selection probability. 
\end{remark}

We assume that every user has two possible pure strategies in every epoch. The first is to simply abstain, and the second choice is for the user to participate. The latter means that the user chooses to be active throughout the epoch and also to proceed with completing the necessary tasks if  selected to do so. We analyze here players that when they choose to participate, they will not tamper with the software, and will abide by the protocol rules. We discuss in Section \ref{sec:richer-model} a richer model with ``retractors'' where users may also consider skipping their assigned tasks even though they expressed willingness to participate.

For brevity, let $\mathcal{S} = \{\lazy, \mathrm{P}\}$, be the binary strategy space of each player, where $\lazy$ means abstaining and $\mathrm{P}$ stands for choosing to participate. A strategy profile is a tuple $s = (s_1,\dots, s_n)\in \mathcal{S}^n$ specifying a choice for each player. As usual, given a profile $s$, and a player $i$, we denote by $s_{-i}$ the profile $s$, restricted to all players except $i$.

Our goal is to study the pure Nash equilibria of the static 1-epoch game. These are the profiles where the system can converge if the game is played repeatedly over many epochs. To define equilibria, we need first to define the (expected) utility that a player receives in a strategy profile. We outline first, in Table \ref{tab:tab-new-events}, the utility under the possible scenarios that may occur.

\begin{table}[tbp]
\begin{center}
\begin{tabular}{|c|c|c|}
\hline
Possible scenarios & Progress is made & No Progress \\ \hline
Abstain & $v$ & $0$   \\ \hline
Participate but not eligible& $v - \alpha$ & $-\alpha$ \\ \hline
Participate and eligible  & $r+v - \alpha$ & $- \alpha$ \\ \hline
\end{tabular}
\caption{Possible events and corresponding rewards in the basic model of a participation game. } \label{tab:tab-new-events}
\end{center}
\end{table}

\paragraph{Example.} As an illustration, we cast the above  game in the context of Bitcoin~\cite{nakamoto}. We consider one ``epoch'' to be the period during which a single block is added to the blockchain. The threshold can be taken to be $k=1$, since a single eligible participant is needed for the system to make progress. The probability $q$ corresponds to the relative hashing power of the miner. The cost $\alpha$ corresponds to the cost of running the mining equipment. Finally $r$ is determined by the block reward times the probability that an eligible miner will succeed in adding her block to the blockchain (which will be below $1$ due to the possibility of  other eligible miners disseminating competing blocks). 

\subsection{Equilibrium constraints}

To describe the constraints that need to hold in order to have an equilibrium, it is convenient to use the terminology introduced in the following definition.
\begin{definition}
Given a player $i$, consider a strategy profile $s_{-i}$, for all players except $i$. Then, let
\begin{itemize}
    \item $p(s_{-i})$ be the probability that progress is made in a given epoch, without taking into account what player $i$ does. This is equal to the probability that at least $k$ people out of the players who have selected to participate under the profile $s_{-i}$, are selected to be eligible. 
    \item $p(i, s_{-i})$ be the probability that progress is made in a given epoch, given that $i$ chose to participate, was eligible, and completed her task. This is equal to the probability that at least $k-1$ people out of the remaining participating players, except $i$, are eligible to contribute.
\end{itemize}
\end{definition}

Given a profile $s_{-i}$ for all players except $i$, the expected utility of player $i$, for each one of his pure strategies is described in Table \ref{tab:tab-new-utilities}.

\begin{table}[tbp]
\begin{center}
\begin{tabular}{|c|c|}
\hline
Action of pl. $i$ & Expected utility of pl. $i$, given $s_{-i}$   \\ \hline
$\lazy$ & $p(s_{-i})v$    \\ \hline
$P$ & $(1-q) p(s_{-i}) v + q p(i, s_{-i})(r+v) - \alpha$  \\ \hline
\end{tabular}
\caption{Expected utility under the possible events for a player $i$.} \label{tab:tab-new-utilities}
\end{center}
\end{table}

We can think of any strategy profile $s = (s_1,\dots, s_n)$, as partitioning the players into 2 sets, the set of possible contributors $C$, who are the people choosing to participate, and the set $A$ of abstainers.
Therefore, a profile $s$ is a Nash equilibrium if and only if

\noindent\begin{minipage}{0.5\linewidth}
\begin{eqnarray*}
    u_i(s) & \geq & u_i(\lazy, s_{-i}) ~~\forall i\in C
\end{eqnarray*}
\end{minipage}%
\begin{minipage}{0.5\linewidth}
\begin{eqnarray*}
    u_i(s)  & \geq & u_i(\mathrm{P}, s_{-i}) ~~\forall i\in A \label{eq:between-L-and-C} 
\end{eqnarray*}
\end{minipage}\par\vspace{\belowdisplayskip}

After substituting the expressions of Table \ref{tab:tab-new-utilities}, the above inequalities are equivalent to:
\begin{eqnarray}
    q\cdot [(r+v)p(i, s_{-i}) - vp(s_{-i})] & \geq & \alpha ~~\forall i\in C \label{eq:initial-from-C-to-L} \\
        q\cdot [(r+v)p(i, s_{-i}) - vp(s_{-i})] & \leq & \alpha ~~\forall i\in A \label{eq:initial-from-L-to-C} 
\end{eqnarray}

\begin{fact}
The profile where every player chooses to abstain is a pure strategy Nash equilibrium for $k>1$. We refer to it as the trivial equilibrium. 
\end{fact}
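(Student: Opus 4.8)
The plan is to verify the equilibrium conditions \eqref{eq:initial-from-C-to-L} and \eqref{eq:initial-from-L-to-C} directly for the all-abstain profile. In this profile, the set of contributors is $C = \emptyset$ and the set of abstainers is $A = N$, so only the inequality \eqref{eq:initial-from-L-to-C} needs to be checked, and only for players $i \in A = N$; the constraint \eqref{eq:initial-from-C-to-L} is vacuous since $C$ is empty.

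The key computational step is to evaluate $p(s_{-i})$ and $p(i, s_{-i})$ when no one other than $i$ participates. First I would observe that under $s_{-i}$ (all of $N \setminus \{i\}$ abstaining), there are zero participating players among the others, so the probability that at least $k$ of them are eligible is $p(s_{-i}) = 0$ whenever $k \geq 1$, which certainly holds here. Similarly, $p(i, s_{-i})$ is the probability that at least $k-1$ of the remaining participating players (again zero of them) are eligible; this equals $0$ when $k - 1 \geq 1$, i.e., when $k > 1$, and would equal $1$ only in the degenerate case $k = 1$. This is precisely where the hypothesis $k > 1$ enters: it forces $p(i, s_{-i}) = 0$ as well.

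Substituting $p(s_{-i}) = p(i, s_{-i}) = 0$ into the left-hand side of \eqref{eq:initial-from-L-to-C} gives $q \cdot [(r+v)\cdot 0 - v \cdot 0] = 0$, and since the per-epoch cost $\alpha$ is (implicitly) non-negative, we have $0 \leq \alpha$, so \eqref{eq:initial-from-L-to-C} holds for every $i \in N$. Hence no abstainer has a profitable deviation to participating: a lone participant cannot single-handedly reach a threshold of $k > 1$, collects no reward, pays cost $\alpha$, and enjoys no progress bonus either. Therefore the all-abstain profile is a pure Nash equilibrium.

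There is essentially no obstacle here — the statement is a sanity check and the argument is a one-line substitution. The only subtlety worth flagging explicitly is the role of $k > 1$: with $k = 1$ a single participating player would be eligible with probability $q$ and would then single-handedly cause progress, earning $q(r+v) - \alpha$, which can exceed the abstainer's payoff of $0$, so the all-abstain profile need not be an equilibrium in that boundary case. I would state this caveat in a sentence after the proof rather than belabor it.
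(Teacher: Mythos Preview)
Your verification is correct: with $k>1$, both $p(s_{-i})$ and $p(i,s_{-i})$ vanish when no one else participates, so the left-hand side of \eqref{eq:initial-from-L-to-C} is $0\leq\alpha$ and no abstainer wants to deviate. The paper does not actually supply a proof of this Fact---it simply states it and later (in the proof of Theorem~\ref{thm:simplemodel}) calls it ``trivial''---so your argument is a correct and appropriately short expansion of what the paper leaves implicit, including the $k=1$ caveat that the paper also records in Theorem~\ref{thm:simplemodel}.
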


Given the above fact, the main question is whether there exist other non-trivial equilibria, and whether they achieve high levels of participation, which is the focus of the remaining section.

\subsection{Pure Nash equilibria for the case that $v=0$}
\label{sec:simple-v=0}
To further simplify  the game, suppose that $v=0$. Apart from serving as a simplification, we also find this case to be quite important from the perspective of a protocol designer. The reason is that one of the goals in the study of such systems is to identify how large should the monetary rewards be in order to incentivize users to engage with the protocol. When $v$ is large, users are already motivated to participate and we would need a lower reward to incentivize them, thus, the worst-case scenario in terms of upper bounding the total monetary rewards needed, is when $v=0$. 
Interestingly, this model falls into the broader class of games with {\emph strategic complements}, where the incentive for a player to take the "desirable" action has a monotonic behavior in terms of how many other people took the same action. 

\begin{proposition}
\label{prop:lattice}
    The family of games with $v=0$ exhibits strategic complements, as defined in \cite{Jackson08}, and hence its set of pure Nash equilibria forms a complete lattice.
\end{proposition}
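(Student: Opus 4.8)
The plan is to verify directly that every game in this family is a supermodular game (a game of strategic complements) in the sense of \cite{Jackson08}, and then to invoke the standard structural theorem (Topkis/Tarski) that the set of pure Nash equilibria of such a game is a nonempty complete lattice. First I would fix the orientation of the strategy lattice: on $\mathcal{S}=\{\lazy,\mathrm{P}\}$ declare $\lazy\prec\mathrm{P}$, so that each player's strategy set is a (trivially complete) two-element chain and the profile space $\mathcal{S}^n$ is a complete lattice under the coordinatewise order. Since each player's strategy set is one-dimensional, supermodularity of $u_i$ in $s_i$ is vacuous, so the only condition left to check is that $u_i$ has increasing differences in $(s_i,s_{-i})$ --- equivalently, that the marginal utility of switching to $\mathrm{P}$,
\[
D_i(s_{-i}) \;:=\; u_i(\mathrm{P},s_{-i}) - u_i(\lazy,s_{-i}),
\]
is non-decreasing in $s_{-i}$ with respect to the coordinatewise order.

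Next I would compute $D_i$ from Table~\ref{tab:tab-new-utilities} specialized to $v=0$: the abstaining utility is $p(s_{-i})\,v=0$, while the participation utility is $q\,p(i,s_{-i})\,r-\alpha$, so
\[
D_i(s_{-i}) \;=\; q\,r\,p(i,s_{-i}) - \alpha .
\]
It then remains to argue that $p(i,s_{-i})$ is monotone non-decreasing in $s_{-i}$. For this I would use that, given $s_{-i}$, we have $p(i,s_{-i})=\Pr[\mathrm{Bin}(m,q)\ge k-1]$, where $m=m(s_{-i})$ is the number of players other than $i$ who choose $\mathrm{P}$, since each participating player is independently eligible with probability $q$. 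If $s_{-i}\preceq s'_{-i}$ coordinatewise then the set of participants other than $i$ under $s_{-i}$ is contained in that under $s'_{-i}$, so $m(s_{-i})\le m(s'_{-i})$; and $\Pr[\mathrm{Bin}(m,q)\ge k-1]$ is non-decreasing in $m$ (by the obvious coupling that appends extra independent $\mathrm{Bernoulli}(q)$ trials). Hence $D_i$ is non-decreasing in $s_{-i}$, which is precisely the increasing-differences / strategic-complements condition.

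Finally I would conclude by citing the result (as recorded in \cite{Jackson08}, originally due to Topkis and Tarski) that in a game of strategic complements the set of pure Nash equilibria is a nonempty complete lattice, in particular possessing a smallest and a largest equilibrium. I do not anticipate a genuine obstacle; the only points needing care are fixing the orientation of the lattice and the elementary combinatorial monotonicity of $p(i,\cdot)$. It is worth noting explicitly why the hypothesis $v=0$ is essential: for $v>0$ the term $-q\,v\,p(s_{-i})$ appears in the marginal utility (cf.\ \eqref{eq:initial-from-C-to-L}), and since $p(s_{-i})$ is non-decreasing in $s_{-i}$ this term enters with the ``wrong'' sign, so increasing differences --- and hence the lattice structure --- can fail in general.
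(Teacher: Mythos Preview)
Your argument is correct and is exactly what the paper has in mind: the paper itself does not supply a proof of this proposition but simply asserts it and points the reader to \cite{Jackson08}, so your explicit verification of increasing differences via $D_i(s_{-i})=q\,r\,p(i,s_{-i})-\alpha$ and the binomial monotonicity of $p(i,\cdot)$, followed by the Topkis/Tarski conclusion, fills in precisely the details the paper omits. Your closing remark on why $v=0$ is needed is a useful addition not present in the paper.
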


We refer to \cite{Jackson08} for further discussion on strategic complements. Proposition \ref{prop:lattice} reveals a structural property on the set of equilibria, but does not give us any further information on their precise form. Hence, we continue by investigating the possible number of contributors that may arise. Suppose that there was an equilibrium with $|C| = \lambda$ contributors and $n-\lambda$ abstainers.
By expanding the probability terms in Equations \eqref{eq:initial-from-C-to-L} and \eqref{eq:initial-from-L-to-C}, we get the following  inequalities (since the same inequality has to hold for each contributor and ditto for each abstainer):


\begin{eqnarray}
q\cdot \sum_{j=k-1}^{\lambda-1}{\lambda-1 \choose j} q^{j} (1-q)^{\lambda-1-j} & \geq & \frac{\alpha}{r}  \label{eq:from-C-to-A} \\
q\cdot \sum_{j=k-1}^{\lambda}{\lambda \choose j} q^{j} (1-q)^{\lambda-j} & \leq & \frac{\alpha}{r} \label{eq:from-A-to-C} 
\end{eqnarray}

We elaborate further on how the above inequalities were derived. For \eqref{eq:from-C-to-A}, it has come from \eqref{eq:initial-from-C-to-L}, and with $v=0$, it is equivalent for a player $i$, to $rp(i, s_{-i}) \geq \alpha$. Now to calculate $p(i, s_{-i})$, one needs to consider all possible ways that the progress is made, given that $i$ has completed her task. This corresponds precisely to all the possible ways of selecting $k-1$ other players to be eligible, out of the $\lambda-1$ (excluding $i$) who have chosen to be in $C$. In an analogous manner, to calculate $p(i, s_{-i})$ in \eqref{eq:from-A-to-C}, we need to take into account all possible ways of selecting $k-1$ other players to be eligible, but now out of the $\lambda$ available contributors (since, for this case $i\in A$). 

The main result of this subsection is the characterization obtained in the following theorem, showing a sharp picture, that we can have at most two pure Nash equilibria. 

\begin{theorem}[Characterization]
\label{thm:simplemodel}
We can have at most two Nash equilibria as follows:
\begin{itemize}
    \item The trivial (all-out) profile $(\lazy, \lazy, \lazy)$, where nobody contributes, is a pure Nash equilibrium for $k>1$, or when $k=1$ and $q\leq {\alpha \over r}$.
    \item There is no equilibrium that has both a positive number of contributors and a positive number of abstainers.
    \item The all-in profile, where everybody participates, is an equilibrium if and only if:
\begin{equation}
 q\cdot \sum_{j=k-1}^{n-1}{n-1 \choose j} q^{j} (1-q)^{n-1-j} \geq \frac{\alpha}{r} \end{equation}
\end{itemize}
\end{theorem}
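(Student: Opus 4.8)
The plan is to characterize all pure Nash equilibria by exploiting the structure forced by inequalities \eqref{eq:from-C-to-A} and \eqref{eq:from-A-to-C}, which must hold simultaneously whenever there is an equilibrium with $|C|=\lambda$ contributors and a nonempty set of abstainers. First I would handle the trivial equilibrium: for $k>1$ it is the all-out profile (this is the stated Fact), and for $k=1$ the all-out profile is an equilibrium precisely when a lone deviator to $P$ does not gain, i.e. when $q\cdot 1 \le \alpha/r$ (since with $\lambda=0$ abstainers, a single participant who is eligible always makes progress as $k=1$), giving the condition $q\le\alpha/r$.

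The heart of the argument is the second bullet: no equilibrium has both a contributor and an abstainer. I would argue this by comparing the two inequalities directly. Define $f(\lambda,q) := q\cdot\sum_{j=k-1}^{\lambda}\binom{\lambda}{j}q^j(1-q)^{\lambda-j}$, so that \eqref{eq:from-A-to-C} says $f(\lambda,q)\le \alpha/r$ and \eqref{eq:from-C-to-A} says $f(\lambda-1,q)\ge\alpha/r$. The key observation is that $\sum_{j=k-1}^{\lambda}\binom{\lambda}{j}q^j(1-q)^{\lambda-j}$ is the probability that at least $k-1$ successes occur in $\lambda$ Bernoulli($q$) trials, and this is \emph{strictly increasing} in $\lambda$ (adding one more trial can only increase the event ``at least $k-1$ successes'', and strictly so as long as $q\in(0,1)$ and $k-1\le\lambda$ — one can couple the two experiments and note the new trial can push a ``$k-2$ successes'' outcome over the threshold with positive probability). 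Hence $f(\lambda-1,q)<f(\lambda,q)$, which forces $\alpha/r \le f(\lambda-1,q) < f(\lambda,q)\le\alpha/r$, a contradiction. I should be careful about the boundary cases $\lambda-1 < k-1$ (empty sum, so $f(\lambda-1,q)=0$, and then \eqref{eq:from-C-to-A} reads $0\ge\alpha/r$, impossible since $\alpha,r>0$) so that the strict monotonicity claim is only invoked where it genuinely holds; these degenerate cases are easy to dispatch separately.

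Given the first two bullets, the only candidates left are the all-out profile and the all-in profile, so the third bullet just records when the all-in profile is an equilibrium: there are no abstainers, so only constraint \eqref{eq:from-C-to-A} applies, with $\lambda=n$, which is exactly the displayed inequality $q\cdot\sum_{j=k-1}^{n-1}\binom{n-1}{j}q^j(1-q)^{n-1-j}\ge\alpha/r$. Combined with Proposition~\ref{prop:lattice}, which guarantees the equilibrium set is a lattice (so at most the all-out and all-in profiles, with all-out the bottom and all-in the top when it exists), we get the ``at most two equilibria'' conclusion.

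The main obstacle is making the strict-monotonicity step fully rigorous, i.e. establishing that the tail probability $\Pr[\mathrm{Bin}(\lambda,q)\ge k-1]$ is \emph{strictly} increasing in $\lambda$ over the relevant range, rather than merely non-decreasing; the clean way is an explicit coupling (realize $\mathrm{Bin}(\lambda-1,q)$ and add an independent Bernoulli($q$)) together with the observation that $\Pr[\mathrm{Bin}(\lambda-1,q)=k-2]>0$ whenever $k-2\ge 0$ and $\lambda-1\ge k-2$, so the added trial strictly increases the tail. One must also double-check the edge case $k=1$ separately in the ``mixed equilibrium'' analysis, since there the threshold $k-1=0$ makes the relevant sum identically $1$, and the argument that rules out mixed profiles still goes through (indeed $f(\lambda-1,q)=q=f(\lambda,q)$ would seem to tie, but then both inequalities become $q=\alpha/r$, and one checks this still cannot yield a profile with both a contributor and an abstainer because the inequality for the abstainer would need to be strict or one re-examines the $k=1$ case directly via $p(i,s_{-i})=1$ for all participants). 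I would flag this edge case explicitly to keep the characterization airtight.
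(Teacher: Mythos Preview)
Your proof is correct and follows the same overall logic as the paper, but you establish the key incompatibility of \eqref{eq:from-C-to-A} and \eqref{eq:from-A-to-C} by a different route. The paper proves an exact algebraic identity (Lemma~\ref{lem:no-intermediate}): the left-hand side of \eqref{eq:from-A-to-C} equals the left-hand side of \eqref{eq:from-C-to-A} plus the strictly positive term $\binom{\lambda-1}{k-2}q^{k-1}(1-q)^{\lambda-k+1}$ (for $k\ge 2$), which immediately rules out simultaneous satisfaction. You instead argue strict monotonicity of the binomial tail $\Pr[\mathrm{Bin}(\lambda,q)\ge k-1]$ in $\lambda$ via a coupling. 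Both are valid; your argument is more probabilistic and self-contained, while the paper's identity pins down the exact gap and is reused later (e.g., in Theorems~\ref{thm:v>0} and~\ref{thm:no-C-and-A}).

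Two minor remarks. First, your appeal to Proposition~\ref{prop:lattice} for the ``at most two'' conclusion is unnecessary: once the second bullet is established, the only candidate profiles are all-out and all-in, so the count follows directly without invoking the lattice structure. Second, you are right to flag the $k=1$ knife-edge $q=\alpha/r$; the paper's Lemma~\ref{lem:no-intermediate} is also silent there (the extra term is $\binom{\lambda-1}{-1}\cdots=0$), so your caution is warranted rather than overcautious.
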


\begin{proof}[{\bf Proof of Theorem \ref{thm:simplemodel}}]
	The fact that the all-out profile is an equilibrium is trivial. For the all-in profile, we have that $C=N$ and hence, we only need to check that Equation \eqref{eq:from-C-to-A} holds when $\lambda=n$. But this is true precisely when the ratio $\alpha/r$ satisfies the stated bound.
	
	The most interesting part of the proof is to show that we cannot have any other pure equilibria. For the sake of contradiction, suppose that there is another equilibrium profile $s$, with $\lambda$ contributors and $n-\lambda$ users opting out, where $0< \lambda < n$. We should show that it is not possible to satisfy Equations \eqref{eq:from-C-to-A} and \eqref{eq:from-A-to-C} simultaneously. This is implied by the following lemma.

 \begin{lemma}
 \label{lem:no-intermediate}
     For every integer $\lambda$, with $0< \lambda < n$ it holds that
 $$\sum_{j=k-1}^{\lambda}{\lambda \choose j} q^{j} (1-q)^{\lambda-j}  = \sum_{j=k-1}^{\lambda-1}{\lambda-1 \choose j} q^{j} (1-q)^{\lambda-1-j}  + {\lambda-1 \choose k-2} q^{k-1} (1-q)^{\lambda-k+1}$$ 
 \end{lemma}
The proof of Lemma \ref{lem:no-intermediate} can be obtained as a special case of a more general result, namely Lemma \ref{lemma-asymnosameclass}, that we use in Section \ref{sec:asym-proof}. We therefore refer the reader to that proof. 
\end{proof}

We refer to the all-out profile as the trivial equilibrium.
Despite the existence of such an undesirable equilibrium, we do not view this as a disastrous property. In particular, the all-out profile is hard to be sustained over time, as it is easy to see that any coalition of at least $k$ users could deviate and gain more, and this also agrees with what we observe in practice. On the contrary, the all-in profile has much more attractive properties against coalitional deviations.
\begin{fact}
\label{fact:strong-eq}
    Whenever the all-in profile is an equilibrium, it is also a strong equilibrium, hence no coalition has a profitable deviation.
\end{fact}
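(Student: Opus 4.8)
The plan is to show that in the all-in profile, no coalition $T \subseteq N$ has a deviation that makes every member of $T$ strictly better off (the standard definition of strong equilibrium). Since we are in the regime where the all-in profile is a Nash equilibrium, Equation \eqref{eq:from-C-to-A} holds with $\lambda = n$, which says that each participating player already has nonnegative marginal gain from participating when all $n-1$ others participate. The key structural observation is that this game (with $v=0$) exhibits strategic complements, by Proposition~\ref{prop:lattice}: a player's incentive to participate is monotone nondecreasing in the number of other participants. So if a coalition $T$ deviates, the only sensible deviation to consider is one where some subset of $T$ switches from $\mathrm{P}$ to $\lazy$ (switching the other direction is impossible since in the all-in profile everyone already plays $\mathrm{P}$).

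First I would fix such a deviation, letting $T' \subseteq T$ be the set of coalition members who actually switch to $\lazy$, so that after the deviation there are $n - |T'|$ participants. For the deviation to be profitable for all of $T$, in particular it must be profitable for each member of $T'$ that switched out. But consider any $i \in T'$: after the deviation, $i$ abstains and gets utility $0$ (recall $v = 0$, so an abstainer's utility is $p(s_{-i}) \cdot 0 = 0$). For this to be a strict improvement, $i$'s utility in the all-in profile must have been strictly negative. However, from Equation \eqref{eq:from-C-to-A} with $\lambda = n$, player $i$'s expected utility when participating in the all-in profile is $q(r+v)p(i,s_{-i}) + (1-q)v\,p(s_{-i}) - \alpha = q\,r\,p(i,s_{-i}) - \alpha \geq 0$ since $v = 0$. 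Thus no member of $T'$ can strictly gain, contradicting the assumption that the deviation is profitable for all of $T$. Hence $T'$ must be empty, meaning the deviation changes nothing, and no profitable coalitional deviation exists.

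The main subtlety — and the step I would be most careful about — is handling the definition of ``profitable'' for a coalition and whether we require \emph{all} members to strictly gain or only a weak improvement for all with at least one strict. Under the standard strong-equilibrium definition (all members weakly better, at least one strictly better), the argument still goes through: any member who switches to $\lazy$ gets exactly $0$, which by the Nash condition is not strictly more than their equilibrium utility $q\,r\,p(i,s_{-i}) - \alpha \geq 0$; and any member of $T$ who does \emph{not} switch sees their payoff only weakly change, but since the non-switchers who matter are exactly those whose incentive to participate is governed by strategic complements with a weakly smaller participant set, one checks their utility cannot strictly increase either. So the only candidate for a strictly-gaining member would have to be a switcher, and we have just ruled that out.

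A cleaner way to package the last point: after any coalitional deviation from all-in, the set of participants shrinks (weakly); by strategic complements, each player still participating has a marginal incentive to participate that is at most what it was, but every abstainer's utility is identically $0$, which equals the lower bound on what they had while participating at equilibrium. Therefore total welfare and in particular each individual's payoff cannot strictly improve for anyone who switched, so no coalition deviation is strictly profitable for all its members. I would write this up in two short paragraphs: one recalling the strategic-complements monotonicity and the $v=0$ abstainer payoff of $0$, and one executing the contradiction with Equation \eqref{eq:from-C-to-A} at $\lambda = n$.
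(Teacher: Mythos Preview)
Your argument is correct. Note that the paper states this as a Fact without providing a proof, so your write-up supplies the omitted details. The key step---that any coalition member who switches to $\lazy$ receives utility $0$ while by the Nash condition \eqref{eq:from-C-to-A} their all-in utility was $q\,r\,p(i,s_{-i})-\alpha\geq 0$, and any coalition member who stays in $C$ sees $p(i,s_{-i})$ weakly drop---is exactly right and suffices under either definition of strong equilibrium you mention.

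One minor remark: for the non-switchers you do not actually need to invoke strategic complements, which concerns the \emph{marginal} gain from participation. It is enough to observe directly that a participant's utility $q\,r\,p(i,s_{-i})-\alpha$ is monotone nondecreasing in the number of other participants, since $p(i,s_{-i})$ is. This makes the non-switcher step immediate.
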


Reflecting upon Theorem \ref{thm:simplemodel} and Fact \ref{fact:strong-eq}, we view these as positive news since they show that as long as the reward is sufficiently high, it is possible to incentivize all users to participate.

\begin{remark}
    By being able to enforce, via the reward, that all rational users participate, we can also tolerate a relatively high number of malicious or indifferent users, who may never wish to participate. In particular, even if we have close to $n/2$ such users, the remaining ones can still be motivated to participate, and therefore be able to make progress with high probability, when e.g., $q\geq 2k/n$. 
\end{remark}

\subsection{Analysis when $v>0$}
Coming now back to having a positive inherent value, suppose that $v>0$.
It is natural to expect that the inherent value is typically smaller than the monetary reward $r$. 
The presence of the value $v$ introduces some differences with the previous subsection and we can no longer have such a sharp characterization as in Theorem \ref{thm:simplemodel}. Nevertheless, for non-trivial equilibria, we will still have a relatively high number of contributors, as described in the necessary condition below.

\begin{theorem}
\label{thm:v>0}
    When $v\leq r$, then any non-trivial equilibrium must have at least $(2-q)\frac{k-1}{q}$ contributors.
\end{theorem}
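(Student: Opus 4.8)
The plan is to derive a necessary condition on the number of contributors $\lambda$ in any non-trivial equilibrium by examining the incentive constraint of a single contributor. Recall from Table~\ref{tab:tab-new-utilities} that for a contributor $i$ not to want to deviate to $\lazy$, we need $q\cdot[(r+v)p(i,s_{-i}) - v\,p(s_{-i})] \geq \alpha$, i.e. inequality \eqref{eq:initial-from-C-to-L}. Since all contributors are symmetric, write $\lambda = |C|$; then $p(s_{-i})$ is the probability that at least $k$ out of the $\lambda-1$ other contributors are eligible, and $p(i,s_{-i})$ is the probability that at least $k-1$ out of those $\lambda-1$ are eligible. The key observation is that $p(i,s_{-i}) \geq p(s_{-i})$ always, and in fact $p(i,s_{-i}) - p(s_{-i})$ equals the probability that exactly $k-1$ of the $\lambda-1$ others are eligible, namely $\binom{\lambda-1}{k-1}q^{k-1}(1-q)^{\lambda-k}$.

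First I would use $v\leq r$ to bound the left-hand side: since $(r+v)p(i,s_{-i}) - v\,p(s_{-i}) = r\,p(i,s_{-i}) + v\,(p(i,s_{-i}) - p(s_{-i})) \leq r\,p(i,s_{-i}) + r\,(p(i,s_{-i}) - p(s_{-i})) \le 2r\,p(i,s_{-i})$, the contributor constraint forces $2rq\,p(i,s_{-i}) \geq \alpha > 0$, hence in particular $p(i,s_{-i}) > 0$, which already requires $\lambda - 1 \geq k-1$. But that only gives $\lambda \geq k$; to get the stronger bound $\lambda \geq (2-q)\frac{k-1}{q}$ I would instead argue about where the binomial tail $p(i,s_{-i}) = \Pr[\mathrm{Bin}(\lambda-1,q) \geq k-1]$ can be non-negligible. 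The cleaner route: if $\lambda - 1 < (k-1)/q$, i.e. $q(\lambda-1) < k-1$, then the mean of $\mathrm{Bin}(\lambda-1,q)$ is below $k-1$ and a Markov/Chernoff-type estimate would bound the tail — but the exact constant $(2-q)$ suggests the intended argument is sharper and more elementary.

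The more likely intended argument uses a ratio-of-consecutive-terms comparison. For the contributor constraint to hold we certainly need $q(r+v)p(i,s_{-i}) \geq q\,v\,p(s_{-i}) + \alpha \geq q\,v\,p(s_{-i})$, i.e. $(r+v)p(i,s_{-i}) \geq v\,p(s_{-i})$; combining with $v\le r$, this is automatic, so the real content must come from comparing the contributor inequality \eqref{eq:initial-from-C-to-L} with the abstainer inequality or from a direct tail bound. I would therefore show: if $\lambda < (2-q)(k-1)/q$ then $q\cdot[(r+v)p(i,s_{-i}) - v\,p(s_{-i})] < \alpha$ is forced. Writing $P_{\geq t}(m) := \Pr[\mathrm{Bin}(m,q)\ge t]$, the quantity to bound from above is $q\bigl((r+v)P_{\geq k-1}(\lambda-1) - v\,P_{\geq k}(\lambda-1)\bigr) = q\bigl(r\,P_{\geq k-1}(\lambda-1) + v\,\binom{\lambda-1}{k-1}q^{k-1}(1-q)^{\lambda-k}\bigr)$, and with $v\le r$ this is at most $2rq\,P_{\geq k-1}(\lambda-1)$. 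Since $\alpha$ appears, I would need the hypothesis that the profile is an equilibrium in a way that pins $\alpha/r$; here the abstainer constraint \eqref{eq:initial-from-L-to-C} for $v\le r$ gives $\alpha \geq q\bigl(r\,P_{\geq k-1}(\lambda) - v\,P_{\geq k}(\lambda)\bigr) \geq q\,r\bigl(P_{\geq k-1}(\lambda) - P_{\geq k}(\lambda)\bigr) = q\,r\,\binom{\lambda}{k-1}q^{k-1}(1-q)^{\lambda-k+1}$ — wait, when $\lambda = n$ there is no abstainer, so this branch only applies when $0<\lambda<n$, which is exactly the non-trivial-with-abstainers case; the all-in case must be handled separately (and there $\lambda = n$, so the bound is about whether $n \ge (2-q)(k-1)/q$, which should follow from the all-in equilibrium condition in Theorem~\ref{thm:simplemodel} together with the tail being positive).

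The main obstacle I anticipate is getting the exact constant $2-q$ rather than a cruder bound. I expect the clean derivation chains the two inequalities: the contributor constraint upper-bounds $\alpha/r$ via something like $2q\,P_{\geq k-1}(\lambda-1)$ while the *contributor* constraint itself also needs $q\bigl[(r+v)P_{\geq k-1}(\lambda-1) - vP_{\geq k}(\lambda-1)\bigr]\ge \alpha > 0$; then the trick is to use the algebraic identity from Lemma~\ref{lem:no-intermediate} (relating the $\lambda$-sum, the $(\lambda-1)$-sum, and the single extra term $\binom{\lambda-1}{k-2}q^{k-1}(1-q)^{\lambda-k+1}$) to rewrite $P_{\geq k-1}(\lambda-1)$ in terms of $P_{\geq k}(\lambda-1)$ plus the $j=k-1$ term $\binom{\lambda-1}{k-1}q^{k-1}(1-q)^{\lambda-k}$, and then observe that for the contributor inequality $q(r+v)P_{\geq k-1}(\lambda-1) \ge qvP_{\geq k}(\lambda-1) + \alpha$ to be satisfiable with $v\le r$, a necessary condition is that the $j=k-1$ term is not too small relative to the rest, i.e. $(r+v)\binom{\lambda-1}{k-1}q^{k-1}(1-q)^{\lambda-k} + r\,P_{\geq k}(\lambda-1) \gtrsim \alpha/q$, and pairing this with a lower bound on $\alpha$ forces $\binom{\lambda-1}{k-1}q^{k-1}(1-q)^{\lambda-k}$ to dominate, which via the ratio of consecutive binomial terms $\binom{m}{t-1}q^{t-1}(1-q)^{m-t+1} / \binom{m}{t}q^{t}(1-q)^{m-t} = \frac{t(1-q)}{(m-t+1)q}$ translates into $\frac{(k-1)(1-q)}{(\lambda-k+1)q}$ being $O(1)$, yielding $\lambda - k + 1 \geq \frac{(k-1)(1-q)}{q}$, i.e. $\lambda \geq k - 1 + \frac{(k-1)(1-q)}{q} = \frac{(k-1)q + (k-1)(1-q)}{q} = \frac{k-1}{q}$ — and a slightly more careful accounting of the two-term sum should upgrade $\frac{k-1}{q}$ to $(2-q)\frac{k-1}{q}$. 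I would work out precisely which consecutive-term comparison produces the factor $(2-q)$ and present that as the crux; everything else is bookkeeping with binomial tails.
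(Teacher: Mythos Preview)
Your proposal wanders through several approaches without landing on the one that actually works. The correct route---which you circle but never commit to---is to combine the contributor constraint \eqref{eq:initial-from-C-to-L} and the abstainer constraint \eqref{eq:initial-from-L-to-C} directly, rather than trying to bound $\alpha$ from one and feed it into the other. Assume a non-trivial equilibrium with $0<\lambda<n$. Expanding both constraints (as in the paper's equations \eqref{eq:with-v-from-C-to-A}--\eqref{eq:with-v-from-A-to-C}) and writing the pivotal probabilities as $f_1=\binom{\lambda-1}{k-1}q^{k-1}(1-q)^{\lambda-k}$ and $f_2=\binom{\lambda}{k-1}q^{k-1}(1-q)^{\lambda-k+1}$, one gets the sandwich
\[
r\,\Sigma_1 + v\,f_1 \;\ge\; \tfrac{\alpha}{q} \;\ge\; r\,\Sigma_2 + v\,f_2,
\]
where $\Sigma_1,\Sigma_2$ are the two binomial tails. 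So necessarily $r(\Sigma_2-\Sigma_1)\le v(f_1-f_2)$. Using $v\le r$, any $\lambda$ for which $\Sigma_2-\Sigma_1 > f_1-f_2$ is ruled out.

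This is where the exact constant $(2-q)$ appears, and it is the step you explicitly leave undone. By Lemma~\ref{lem:no-intermediate}, $\Sigma_2-\Sigma_1=\binom{\lambda-1}{k-2}q^{k-1}(1-q)^{\lambda-k+1}$. A short computation (using $\binom{\lambda}{k-1}=\binom{\lambda-1}{k-1}+\binom{\lambda-1}{k-2}$) gives
\[
f_1-f_2 \;=\; q^{k-1}(1-q)^{\lambda-k}\Bigl[q\tbinom{\lambda-1}{k-1}-(1-q)\tbinom{\lambda-1}{k-2}\Bigr].
\]
Hence $\Sigma_2-\Sigma_1>f_1-f_2$ is equivalent to $2(1-q)\binom{\lambda-1}{k-2}>q\binom{\lambda-1}{k-1}$, i.e.\ $2(1-q)(k-1)>q(\lambda-k+1)$, which rearranges to $\lambda<(2-q)\tfrac{k-1}{q}$. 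Your detours through Chernoff bounds and the loose estimate $\le 2rq\,P_{\ge k-1}(\lambda-1)$ cannot produce this constant; only the direct comparison of $\Sigma_2-\Sigma_1$ against $f_1-f_2$ does.
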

\begin{proof}
Suppose that there exists an equilibrium with $|C|=\lambda$ and $|A| = n-\lambda$, and with $0< \lambda < n$. If there was an equilibrium with both contributors and free riders, by expanding the equilibrium conditions \eqref{eq:initial-from-C-to-L} and \eqref{eq:initial-from-L-to-C}, and rearranging terms, the following would have to hold.

\begin{eqnarray}
r\cdot \sum_{j=k-1}^{\lambda-1}{\lambda-1 \choose j} q^{j} (1-q)^{\lambda-1-j} + v\cdot {\lambda-1 \choose k-1} q^{k-1} (1-q)^{\lambda-k} & \geq & \frac{\alpha}{q}  \label{eq:with-v-from-C-to-A} \\
r\cdot \sum_{j=k-1}^{\lambda}{\lambda \choose j} q^{j} (1-q)^{\lambda-j} + v\cdot {\lambda \choose k-1} q^{k-1} (1-q)^{\lambda-k+1} & \leq & \frac{\alpha}{q} \label{eq:with-v-from-A-to-C} 
\end{eqnarray}

For brevity, let us rewrite \eqref{eq:with-v-from-C-to-A} as $r\cdot \Sigma_1 + v\cdot f_1\geq \alpha/q$, and similarly, we can rewrite \eqref{eq:with-v-from-A-to-C} as $r\cdot \Sigma_2 + v\cdot f_2\leq \alpha/q$. Given the form of these inequalities, if we would show that 
$r\cdot \Sigma_2 + v\cdot f_2 > r\cdot \Sigma_1 + v\cdot f_1$, we would get a contradiction, and hence there cannot exist any equilibrium with $\lambda$ contributors, for the $\lambda$ we started with. 
This condition is equivalent to:
$$ r(\Sigma_2 - \Sigma_1) > v(f_1-f_2)$$

To proceed, we note that it is a simple calculation to verify that when $\lambda < \frac{k-1}{q}$, we have that $f_1<f_2$, and we get already the desired contradiction. Therefore, at a non-trivial equilibrium, it must hold that $\lambda \geq \frac{k-1}{q}$. In fact, we can  obtain an even stronger lower bound for $\lambda$. Given that $v\leq r$, to obtain a contradiction it suffices that 
\begin{equation} 
\label{eq:contr}
\Sigma_2 - \Sigma_1 > f_1-f_2
\end{equation}

Consulting Lemma \ref{lem:no-intermediate}, we can see that  $\Sigma_2 - \Sigma_1$,   is actually 
$$ \Sigma_2 - \Sigma_1 = {\lambda-1 \choose k-2} q^{k-1} (1-q)^{\lambda-k+1}$$
By plugging this in \eqref{eq:contr}, and by substituting $f_1, f_2$ and simplifying the resulting expression, we obtain the claimed bound for $\lambda$. 
\end{proof}

Theorem \ref{thm:v>0} provides a necessary condition for $\lambda$, but it does not tell us for which values of $\lambda$ we do have an equilibrium.
This is dependent on the other parameters, such as $\alpha$ and $v$. However, it is possible to check for every $\lambda$ if there exists a range for the reward $r$ as a function of $\alpha, v$ and $k$, so that we can have an equilibrium with $\lambda$ contributors. Additionally , we point out that we can always set the reward appropriately so as to incentivize all players to participate.

\ignore{
To do this, we simply need to solve \eqref{eq:with-v-from-C-to-A} and \eqref{eq:with-v-from-A-to-C} w.r.t. $r$. These will yield us a lower and an upper bound for $r$ and if they are compatible, then an equilibrium exists.
Furthermore, we can restrict further our search for equilibria by the following properties.
\vm{2 priorities in this subsection:
First verify proof of theorem below, Second, find an example with an equilibium for $0< \lambda< n$}

\begin{theorem}
    Given $n, \alpha, k, v$ and $r$, there can be either a single value or two consecutive values for $\lambda$, for which there exists an equilibrium with $\lambda$ contributors.
\end{theorem}
\begin{proof}
    \vm{this proof is still missing one claim, see below, if not convinced by Friday, we remove it}
    To see this, let us define first the following quantities
    $$\Phi(\lambda, k, q) =  \sum_{j=k-1}^{\lambda-1}{\lambda-1 \choose j} q^{j} (1-q)^{\lambda-1-j}, ~~~~ f(\lambda, k, q) = {\lambda-1 \choose k-1} q^{k-1} (1-q)^{\lambda-k}$$
    Let us now look at Equations \eqref{eq:with-v-from-C-to-A} and \eqref{eq:with-v-from-A-to-C}. If we solve both of them w.r.t. $r$, we get the following bounds that $r$ needs to satisfy for an equilibrium with $\lambda$ contributors to exist:
    $$\frac{\frac{\alpha}{q} - vf(\lambda, k, q)}{\Phi(\lambda, k, q)} \leq r \leq \frac{\frac{\alpha}{q} - vf(\lambda+1, k, q)}{\Phi(\lambda+1, k, q)}  $$
    If we now ask whether an equilibrium with $\lambda+1$ contributors exists with the same reward $r$, then by repeating the above argument, the reward should also satisfy the following bounds
    $$\frac{\frac{\alpha}{q} - vf(\lambda+1, k, q)}{\Phi(\lambda+1, k, q)} \leq r \leq \frac{\frac{\alpha}{q} - vf(\lambda+2, k, q)}{\Phi(\lambda+2, k, q)}  $$

    \begin{claim}
        The function $\frac{\frac{\alpha}{q} - vf(\lambda, k, q)}{\Phi(\lambda, k, q)}$ is (strictly?) mototone w.r.t. $\lambda$.  
    \end{claim}
\begin{proof}
    We analyze the function in parts. First, as we have assumed that $\lambda \geq \frac{k-1}{q}$, $f$ is decreasing, and thus the numerator is increasing \pc{not strictly though?}. Second, we will show that the denominator $\Phi(\lambda, k, q)$ is strictly decreasing w.r.t. $\lambda$. Towards that, we rewrite $\Phi(\lambda, k, q) = 1-  \sum_{j=1}^{k-1}  f(\lambda, j, q).$ \pc{but then the denominator is also increasing :-(}
\end{proof}
    
    \vm{this is the claim that I have not checked yet}\pc{can follow up to that point, but no progress on the claim. wolframalpha produces nothing useful either}

    Given the above, it is obvious that $r$ cannot satisfy the constraints for multiple values of $\lambda$, except if we are at an extreme case where one of the constraints is satisfied with equality, in which case we can have two consecutive values for $\lambda$ that can arise at an equilibrium. 
\end{proof}
}

\begin{theorem}
    There always exists a non-empty range for the reward $r$ so that the all-in profile is an equilibrium. Namely this holds as long as 
    $$r \geq \frac{\frac{\alpha}{q} - v{\lambda-1 \choose k-1} q^{k-1} (1-q)^{\lambda-k}}{\sum_{j=k-1}^{\lambda-1}{\lambda-1 \choose j} q^{j} (1-q)^{\lambda-1-j}} $$
    
\end{theorem}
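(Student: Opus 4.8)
The plan is to instantiate the contributor constraint already derived in this section at the all-in profile and solve it for $r$. In the all-in profile we have $C = N$ and $A = \emptyset$, so the only equilibrium conditions that need checking are those for contributors, i.e.\ \eqref{eq:initial-from-C-to-L}; expanding the probability terms exactly as was done to obtain \eqref{eq:with-v-from-C-to-A} (and using that $p(i,s_{-i}) - p(s_{-i})$ is the single term ${n-1 \choose k-1} q^{k-1} (1-q)^{n-k}$, the probability that exactly $k-1$ of the other $n-1$ participants are eligible), these all collapse to the single inequality
\begin{equation*}
r\cdot \sum_{j=k-1}^{n-1}{n-1 \choose j} q^{j} (1-q)^{n-1-j} \;+\; v\cdot {n-1 \choose k-1} q^{k-1} (1-q)^{n-k} \;\geq\; \frac{\alpha}{q},
\end{equation*}
which is \eqref{eq:with-v-from-C-to-A} with $\lambda = n$. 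Solving for $r$ yields exactly the displayed bound (with $\lambda$ read as $n$).

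First I would check that this bound is well defined, i.e.\ that the denominator $\sum_{j=k-1}^{n-1}{n-1 \choose j} q^{j} (1-q)^{n-1-j}$ is strictly positive. This holds whenever $0 < q \leq 1$ and $k \leq n$ — the latter being a standing assumption, since if $k > n$ the threshold can never be met and there is no non-trivial equilibrium at all. Indeed, the $j=n-1$ summand alone equals $q^{n-1} > 0$. Hence the right-hand side of the stated bound is a finite real number $R$.

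Finally I would conclude that the set of rewards for which the all-in profile is an equilibrium is precisely the half-line $\{\, r : r \geq R \,\}$: because there are no abstainers, no upper-bound constraint of the form \eqref{eq:with-v-from-A-to-C} is imposed, so the admissible set is one-sided, unbounded above, and in particular non-empty. If one additionally insists on non-negative rewards, the admissible set is $[\max(0,R),\infty)$, still non-empty; note that when the numerator $\tfrac{\alpha}{q} - v{n-1 \choose k-1} q^{k-1} (1-q)^{n-k}$ is negative — exactly when $v$ is already large enough to sustain full participation unaided — every $r \geq 0$ works. There is essentially no hard step here: given \eqref{eq:with-v-from-C-to-A} from the earlier analysis, the only things to verify are the strict positivity of the denominator (so the quotient makes sense) and that the all-in constraint on $r$ is a lower bound only, which is immediate from the absence of abstainers.
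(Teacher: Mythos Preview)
Your argument is correct. The paper states this theorem without proof, and what you have written is exactly the natural derivation: specialize the contributor constraint \eqref{eq:with-v-from-C-to-A} to $\lambda=n$, observe there are no abstainer constraints to impose an upper bound, and verify the denominator is strictly positive so the right-hand side is a finite real number. Your additional remarks about the one-sidedness of the admissible set and the case where the numerator is negative (so any $r\geq 0$ works) are accurate and make the ``non-empty range'' claim fully explicit.
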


Note that the smaller the nominator above, the looser the constraint for $r$, and as $v$ drops down to 0, this is when we get stricter constraints for $r$. To summarize, when $v>0$, it is conceivable that we have equilibria other than the all-in and the all-out profile, in contrast to Section \ref{sec:simple-v=0}. However, all the non-trivial equilibria have a sufficiently high number of participants. For reasonable choices of $q$ in the range of $[2k/n, 3k/n]$, as alluded to in the beginning of Section \ref{sec:simple}, Theorem \ref{thm:v>0} implies that we can have equilibria with more than $n/2$ participants, when the reward is set appropriately. 

\subsection{Equilibria for the non-symmetric case w.r.t. selection probability}\label{sec:asym}
We move now to study the asymmetric case in terms of the selection probability. In particular, suppose that each player now has a possibly different probability $q_i$ of being selected. We will stick to the case where $v=0$ for simplicity.

The equilibrium constraints are now more complex because different  contributors have a different probability of being  selected. In  an equilibrium with a set $C$ of $\lambda$ contributors and a set $A$ of $n-\lambda$ abstainers, we must have:

\begin{eqnarray}
q_i\cdot \sum_{m=k-1}^{\lambda-1} ~\sum_{S\subseteq C\setminus \{i\}, |S|=m} ~\left( \prod_{j\in S} q_j \cdot \prod_{j\in C\setminus S\cup \{i\} } (1-q_j) \right) & \geq & \frac{\alpha}{r} ~~~\forall i\in C
\label{eq:asym-C-to-A}\\
q_i\cdot \sum_{m=k-1}^{\lambda} ~\sum_{S\subseteq C, |S|=m} ~\left( \prod_{j\in S} q_j \cdot \prod_{j\in C\setminus S} (1-q_j) \right) & \leq & \frac{\alpha}{r} ~~~\forall i\in A \label{eq:asym-A-to-C}
\end{eqnarray}

Despite the added complexity, it is still feasible to understand the composition of contributors and abstainers at equilibrium. 
Proposition \ref{prop:lattice} holds for the asymmetric setting as well, hence the equilibrium set forms a complete lattice. Furthermore, 
the main theorem of this subsection gives a characterization of the possible equilibrium structures that can arise.

\begin{theorem}
\label{thm:structure-char}
Given a game with $n$ players, let $q_1 \geq q_2 \geq \dots \geq q_\ell$ be the distinct selection probabilities.
Then for any non-trivial equilibrium, there must exist a threshold $q_e$ so that all players with $q_i \geq q_e$ are contributors, and they are at least $k$, and all players with $q_i < q_e$ are abstainers.
\end{theorem}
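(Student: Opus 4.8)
The plan is to reformulate the equilibrium inequalities \eqref{eq:asym-C-to-A} and \eqref{eq:asym-A-to-C} so that each player's own probability $q_i$ is cleanly factored out, and then read the claimed ordering directly. For a candidate contributor set $C$ and a player $i$, write $g_i(C)$ for the probability that at least $k-1$ of the players in $C\setminus\{i\}$ are eligible, and $h(C)$ for the probability that at least $k-1$ of the players in $C$ are eligible. Then \eqref{eq:asym-C-to-A} is precisely $q_i\, g_i(C)\ge\alpha/r$ for every $i\in C$, and \eqref{eq:asym-A-to-C} is precisely $q_j\, h(C)\le\alpha/r$ for every $j\in A$; crucially, $h(C)$ is the same quantity for all abstainers.

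Next I would record the size bound. Since the equilibrium is non-trivial, $C\neq\emptyset$; picking any $i\in C$ we get $q_i g_i(C)\ge\alpha/r>0$, hence $g_i(C)>0$, which forces $C\setminus\{i\}$ to contain at least $k-1$ players, i.e.\ $|C|\ge k$.

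The core step is the decomposition obtained by conditioning $h(C)$ on whether the contributor $i$ is eligible: for every $i\in C$,
\[
 h(C)\;=\;g_i(C)\;+\;q_i\,\delta_i,\qquad \delta_i:=\Pr\big[\text{exactly }k-2\text{ of the players in }C\setminus\{i\}\text{ are eligible}\big].
\]
This is the asymmetric generalization of Lemma~\ref{lem:no-intermediate}, proved by the same disjoint‑cases argument (it is the content of Lemma~\ref{lemma-asymnosameclass}). Because $|C\setminus\{i\}|\ge k-2$ by the previous paragraph and all selection probabilities lie strictly between $0$ and $1$, we have $\delta_i>0$ (assuming $k\ge 2$). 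Now fix any $i\in C$ and any $j\in A$ and chain the two equilibrium inequalities through $\alpha/r$:
\[
 q_j\,(g_i(C)+q_i\delta_i)\;=\;q_j\, h(C)\;\le\;\frac{\alpha}{r}\;\le\;q_i\, g_i(C),
\]
which rearranges to $(q_i-q_j)\,g_i(C)\ge q_i q_j\,\delta_i$. The right‑hand side is strictly positive (contributors have $q_i>0$ by the non‑triviality bound, and $\delta_i>0$), and $g_i(C)>0$, so $q_i>q_j$: every contributor has strictly larger selection probability than every abstainer. (For $k=1$ the reformulated conditions read $q_i\ge\alpha/r$ for contributors and $q_j\le\alpha/r$ for abstainers, so $q_e=\alpha/r$ works, with the understanding that players sitting exactly at $\alpha/r$ are indifferent and may be placed on either side.)

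To finish, put $q_e:=\min_{i\in C}q_i$. Trivially $C\subseteq\{i:q_i\ge q_e\}$; conversely the last paragraph gives $q_j<q_i$ for every $j\in A$ and every $i\in C$, so $q_j<q_e$, meaning no player with $q_i\ge q_e$ is an abstainer, hence $\{i:q_i\ge q_e\}\subseteq C$. Thus $\{i:q_i\ge q_e\}=C$, and $|C|\ge k$, which is the statement. I expect the delicate point to be the strict inequality $q_i>q_j$: it rests entirely on $\delta_i$ being \emph{strictly} positive, which is exactly where one needs $q_i\in(0,1)$ together with the bookkeeping $|C\setminus\{i\}|\ge k-2$; dropping strictness only yields a weak separation, which — as the $k=1$ boundary shows — is genuinely all one can get when ties at the indifference level occur.
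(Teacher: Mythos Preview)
Your proof is correct and uses essentially the same approach as the paper: the decomposition $h(C)=g_i(C)+q_i\delta_i$ is exactly the identity $M_b=L_a+\Sigma_3$ established in the proof of Lemma~\ref{lemma-asymnosameclass} (and reused verbatim for Lemma~\ref{lemma-asymnoswaps}), from which the threshold structure follows. Your packaging is slightly more direct---you derive $q_i>q_j$ for every contributor/abstainer pair in one chained inequality rather than via two separate lemmas---and you are explicit about the $k=1$ boundary, where $\delta_i=0$ and only the weak separation $q_i\ge q_j$ survives.
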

\begin{proof}
    First, we show that players with the same selection probability must select the same action at equilibrium (Lemma \ref{lemma-asymnosameclass}). 
Second, due to Lemma \ref{lemma-asymnoswaps}, in an equilibrium, there can be no contributor with smaller selection probability than \textit{any} non-contributor. Thus, in any equilibrium with both contributors and non-contributors, we must have that the ``poorest'' contributor is strictly richer than the ``richest'' non-contributor. It must also be the case that if  we do have contributors they must number at least $k$, as otherwise their reward would be negative.\end{proof}

Theorem \ref{thm:structure-char} shows that it is conceivable to have equilibria with a relatively low  number of contributors in the asymmetric case. Nevertheless, any non-trivial equilibrium will have at least $k$ contributors. Moreover, since in an actual proof of stake blockchain system, the players with higher selection probabilities are expected to be the ones who possess higher amounts of cryptocurrency, it shows us that the richer players will ``do their duty'' and choose to contribute.

Finally, note that the above theorem is a necessary condition tells us how the equilibria can look like, but it does not identify actual equilibrium profiles nor does it give any information on whether we can have multiple equilibria with different thresholds. 
This will necessarily depend on the range of the ratio $\alpha/r$. Fortunately, we can efficiently investigate all possible equilibria via a small number of checks.

\vm{Pyrro, check this}\pc{checked, added proof}
\begin{theorem} \label{thm:structure-equiv}
Let $q_1 \geq q_2 \geq \dots \geq q_\ell$ be the distinct selection probabilities. For every 
$a\in[\ell]$, there exists an equilibrium with the threshold for the contributors being $q_a$ if and only if \eqref{eq:asym-C-to-A} is satisfied by using $q_i=q_a$ and \eqref{eq:asym-A-to-C} is satisfied when we use $q_i=q_{a+1}$. As for the all-in profile, there exists a non-empty range for $\alpha/r$ that makes it an equilibrium.
\end{theorem}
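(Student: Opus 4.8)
The plan is to establish the "if and only if" by combining the structural characterization of Theorem \ref{thm:structure-char} with a monotonicity observation on the left-hand sides of the equilibrium inequalities. Fix $a \in [\ell]$ and consider the candidate profile in which every player with selection probability at least $q_a$ contributes and everyone else abstains; call this set of contributors $C_a$. By Theorem \ref{thm:structure-char}, every non-trivial equilibrium has exactly this form for some $a$, so it suffices to pin down precisely when $C_a$ is an equilibrium. The profile $C_a$ is an equilibrium iff inequality \eqref{eq:asym-C-to-A} holds for every $i \in C_a$ and \eqref{eq:asym-A-to-C} holds for every $i \in A = N \setminus C_a$.

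The key step is to argue that among contributors, the binding constraint in \eqref{eq:asym-C-to-A} is the one for the player with the \emph{smallest} $q_i$, namely a player with $q_i = q_a$, and that among abstainers the binding constraint in \eqref{eq:asym-A-to-C} is the one for the player with the \emph{largest} $q_i$, namely a player with $q_i = q_{a+1}$. For the first claim: the left-hand side of \eqref{eq:asym-C-to-A} for player $i$ is $q_i$ times the probability that at least $k-1$ of the \emph{other} contributors are eligible; the inner sum ranges over subsets of $C_a \setminus \{i\}$, and replacing $i$ by a contributor $i'$ with $q_{i'} \ge q_i$ only swaps which of $q_i, q_{i'}$ appears inside the probability expression while the outer multiplier grows — a stochastic-dominance argument (the same flavor as Lemma \ref{lemma-asymnoswaps}) shows the whole expression is monotone nondecreasing in $q_i$. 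Hence if \eqref{eq:asym-C-to-A} holds for the worst-off contributor (the one with probability exactly $q_a$), it holds for all of $C_a$. Symmetrically, the left-hand side of \eqref{eq:asym-A-to-C} for an abstainer $i$ is $q_i$ times the probability at least $k-1$ of the $\lambda$ contributors in $C_a$ are eligible — note this probability does not depend on $i$ at all, only the multiplier $q_i$ does — so it is largest for the abstainer with the largest $q_i$, i.e. a player with $q_i = q_{a+1}$. Thus \eqref{eq:asym-A-to-C} holds for all of $A$ iff it holds when we substitute $q_i = q_{a+1}$, which is exactly the stated condition. (When $a = \ell$ the set $A$ is empty and the second condition is vacuous; this is the all-in case handled separately below.)

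Combining the two monotonicity reductions gives the claimed equivalence: $C_a$ is an equilibrium iff \eqref{eq:asym-C-to-A} holds at $q_i = q_a$ and \eqref{eq:asym-A-to-C} holds at $q_i = q_{a+1}$. For the all-in profile ($C = N$), only the contributor constraints remain; the left-hand side of \eqref{eq:asym-C-to-A} at the worst-off player is a fixed positive quantity $\Gamma$ independent of $r$, so the constraint $\Gamma \ge \alpha/r$ is equivalent to $r \ge \alpha/\Gamma$, exhibiting the claimed non-empty range for $\alpha/r$ (equivalently $\alpha/r \le \Gamma$).

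I expect the main obstacle to be the monotonicity claim for \eqref{eq:asym-C-to-A}: unlike the abstainer case, here the player's own probability $q_i$ appears both as the outer multiplier and (via the complementary product over $C_a \setminus S \cup \{i\}$) inside the summand, so one cannot simply read off monotonicity. The cleanest route is probabilistic: write the left-hand side as $\Pr[i \text{ eligible and } \ge k-1 \text{ of } C_a\setminus\{i\} \text{ eligible}] = \Pr[\ge k \text{ of } C_a \text{ eligible, including } i]$, and observe that replacing $q_i$ by a larger value $q_{i'}$ corresponds to coupling the eligibility indicators so that $i$'s indicator can only increase — which weakly increases the event "at least $k$ eligible including $i$". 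This is essentially the content of Lemma \ref{lemma-asymnoswaps}, which I would invoke directly rather than re-prove.
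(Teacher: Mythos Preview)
Your proposal is correct and follows essentially the same route as the paper: reduce the full family of constraints to the two extremal ones via monotonicity, then handle the all-in case by noting the contributor constraint is a fixed positive number. The paper does exactly this, invoking Lemmas~\ref{lemma-asymcontrib} and~\ref{lemma-asymnoncon} for the two monotonicity steps.

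One correction worth making: the lemma you cite for the contributor side, Lemma~\ref{lemma-asymnoswaps}, is not the right one---that lemma compares a contributor against an \emph{abstainer}, not two contributors in the same $C_a$. What you actually need is Lemma~\ref{lemma-asymcontrib}. Relatedly, your coupling sketch (``replacing $q_i$ by a larger value $q_{i'}$ so that $i$'s indicator can only increase'') is not quite the right comparison: switching the focal player from $i$ to $i'$ also changes which probability sits \emph{inside} the sum over $C_a\setminus\{\cdot\}$, so it is not a unilateral increase of one coordinate. The clean argument conditions on both $E_i$ and $E_{i'}$ and observes that the difference of the two left-hand sides equals $(q_{i'}-q_i)\Pr\bigl[\sum_{j\in C_a\setminus\{i,i'\}}E_j\ge k-1\bigr]\ge 0$, which is precisely the computation in Lemma~\ref{lemma-asymcontrib}. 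Your abstainer-side argument (the inner probability is independent of $i$, only the multiplier $q_i$ varies) is exactly Lemma~\ref{lemma-asymnoncon} and is fine as stated.
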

\begin{proof} We begin with the second clause. Set all player to be contributors and set $r$ so that equation \eqref{eq:asym-C-to-A} holds for players with $q_i=q_\ell$. Then, by Lemma \ref{lemma-asymcontrib} the same equation will also hold for all other players and we will have an equilibrium. For the first clause, we begin with the first direction: suppose \eqref{eq:asym-C-to-A} is satisfied by using $q_i=q_a$, and \eqref{eq:asym-A-to-C} is satisfied when we use $q_i=q_{a+1}$. Then, by Lemma \ref{lemma-asymcontrib} eq. \eqref{eq:asym-C-to-A} is also satisfied for all players with $q_i\geq q_a$, and by Lemma \ref{lemma-asymnoncon} eq. \eqref{eq:asym-A-to-C} is satisfied for all players with $q_i \leq q_{a+1}<q_a$. This accounts for all users, and describes an equilibrium in which only users with $q_i\geq q_a$ contribute. 
For the converse, suppose there exists a non-trivial equilibrium. By Theorem \ref{thm:structure-char}, it is characterized by a threshold $q_e$. If $q_e=q_j$ for $j\in[\ell]$, the statement is true. 
Else, if $q_e>q_1$, the equilibrium is trivial. Thus there must exist some  $j\in[\ell]$ so that $q_e \leq q_{j}$. Let $j^*$ be the maximum such index. We set $q_e'=q_{j^*}$. We observe that the new threshold describes the same equilibrium: there exist no players with probability in the interval $[q_e,q_{j^*})$ so the truth value of  ($q_i \geq q_e$) is identical to ($q_i \geq q_{j^*})$ for all $i \in [\ell]$.

\end{proof}

The following examples provide some further intuition about these results.

\begin{example}
    Suppose we have a 4 player game with $q_1=q_2={1\over2}$ and $q_3=q_4={1\over4}$. Let $\alpha=1,r=5$ and $k=2$. Then there exist 3 different equilibria. $C= \emptyset$ i.e. ``all out'' is trivially an equilibrium as no player can reach $k=2$ alone. $C=\{1,2\}$ is also an equilibrium. The LHS of  \eqref{eq:asym-C-to-A} is ${1\over 4} > {1 \over 5}$ for the contributors. 
    For the non-contributors, the LHS of \eqref{eq:asym-A-to-C} is ${1\over 4} \cdot {3\over 4} = {3\over 16}$ which is less than ${1\over 5}$: if one of the poor players abstains, the frequency of reward is not enough for the other poor player to participate. 
    Finally, the ``all in'' profile $C=\{1,2,3,4\}$ is also an equilibrium. Checking the   LHS of  \eqref{eq:asym-C-to-A}, we have ${1\over 4} \cdot [1-{1\over2}\cdot{1\over2}\cdot{3\over4}]={13\over 64} > {1\over 5}$. 
\end{example}

We also remark that is is not always the case that we have an equilibrium for every threshold. 


\begin{example}
    Suppose we modify the above game to  have 4 players with $q_1=q_2={1\over3}$, $q_3=q_4={1\over4}$ and $r=8$. Then only $C= \emptyset$ and  $C=\{1,2,3,4\}$ are equilibria. The ``rich only'' profile $C=\{1,2\}$ is no longer an equilibrium. Checking the   LHS of  \eqref{eq:asym-C-to-A}, we have ${1\over 3} \cdot [1-{2\over3}]={1\over 9} < {1\over 8}$. The ``all in'' profile is still an equilibrium as (for the poorest contributor, $q_4$) the LHS of \eqref{eq:asym-C-to-A} is ${1\over4}\cdot\left( 1- {{2\over3}\cdot{2\over 3}\cdot {3\over 4} } \right)={1\over6} > {1\over 8}$.
\end{example}

\subsubsection{Supporting Lemmas for Theorems \ref{thm:structure-char} and \ref{thm:structure-equiv}}
\label{sec:asym-proof}
The proof of the theorems is obtained via a series of lemmas.



\begin{restatable}{rlm}{asymnosameclass}
\label{lemma-asymnosameclass}
Consider two players a and b, with $q_a = q_b$. At an equilibrium, these two players cannot choose different actions.  
\end{restatable}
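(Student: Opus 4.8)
The plan is to argue by contradiction: suppose at an equilibrium $s$ players $a$ and $b$ have $q_a = q_b =: q$ but choose different actions, say $a \in C$ (contributor) and $b \in A$ (abstainer). I would first write down the equilibrium constraints that $a$ and $b$ must respectively satisfy. Since $a$ contributes, inequality \eqref{eq:asym-C-to-A} applies with $i=a$; since $b$ abstains, inequality \eqref{eq:asym-A-to-C} applies with $i=b$. The key observation is that the relevant probability terms for $a$ and $b$ can be compared directly. For $a$, the progress event (given $a$ participates and is eligible) requires at least $k-1$ further eligible players among $C \setminus \{a\}$. For $b$, the progress event (if $b$ were to join) requires at least $k-1$ further eligible players among $C$, i.e.\ among the set $C \setminus \{a\} \cup \{a\}$ — which is exactly $C \setminus \{a\}$ together with $a$, and $a$ has selection probability $q = q_b$. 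So the contributor set that $b$ "sees" is $C$ itself, while the one $a$ "sees" is $C \setminus \{a\}$, and $a$'s own probability equals $b$'s.

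The main step is therefore a combinatorial identity analogous to Lemma \ref{lem:no-intermediate}, but in the asymmetric setting: if I let $\Sigma_a$ denote the sum over $S \subseteq C\setminus\{a\}$, $|S| \ge k-1$, of $\prod_{j\in S} q_j \prod_{j \in (C\setminus\{a\})\setminus S}(1-q_j)$, and $\Sigma_b$ the analogous sum over $S \subseteq C$ with $|S| \ge k-1$, then conditioning on whether $a$ is eligible gives
$$\Sigma_b = q\,\Sigma_a' + (1-q)\,\Sigma_a,$$
where $\Sigma_a'$ is the same sum but with threshold $k-2$ instead of $k-1$ (because if $a$ is eligible we only need $k-2$ more from $C\setminus\{a\}$). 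Since $\Sigma_a' \ge \Sigma_a$, we get $\Sigma_b \ge \Sigma_a$, and in fact $\Sigma_b - \Sigma_a = q(\Sigma_a' - \Sigma_a) \ge 0$ with the difference being exactly $q$ times the probability that exactly $k-2$ players of $C\setminus\{a\}$ are eligible (times the indicator that this term is nonempty). Multiplying through by $q_a = q_b = q$, we obtain that the left-hand side of \eqref{eq:asym-A-to-C} for $b$ is at least the left-hand side of \eqref{eq:asym-C-to-A} for $a$. But then $a$'s constraint forces this quantity to be $\ge \alpha/r$ while $b$'s constraint forces it to be $\le \alpha/r$; this is only consistent if both hold with equality and the extra term vanishes, which can be ruled out (or one simply notes that in the boundary case $b$ is indifferent and we may reassign, or strictness of the relevant binomial term when $\lambda \ge k$ gives a genuine contradiction). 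Symmetrically, the case $a \in A$, $b \in C$ is identical with the roles swapped.

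The obstacle I anticipate is handling the \emph{boundary/equality case} cleanly: the inequality $\Sigma_b \ge \Sigma_a$ is only strict when the "exactly $k-2$ eligible among $C\setminus\{a\}$" term is strictly positive, i.e.\ when $|C\setminus\{a\}| \ge k-2$ and $q \in (0,1)$. One must check that at a non-trivial equilibrium the contributor set is large enough ($|C| \ge k$, which follows since otherwise contributors get zero reward and negative utility — this is the same reasoning used in Theorem \ref{thm:structure-char}) so that the gap term is genuinely positive, yielding $\Sigma_b > \Sigma_a$ and hence a strict contradiction with $q\,\Sigma_a \ge \alpha/r \ge q\,\Sigma_b$. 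If $q = 0$ the player is never eligible and never contributes, so the situation is vacuous. Assembling these pieces gives the result, and since Lemma \ref{lem:no-intermediate} is claimed to follow as the special case $q_i \equiv q$, the same computation should specialize correctly.
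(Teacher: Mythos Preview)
Your proposal is correct and follows essentially the same approach as the paper. Both arguments decompose the abstainer's sum $\Sigma_b$ (the paper calls it $M_b$) by conditioning on whether $a$ is eligible, obtaining $\Sigma_b = q\,\Sigma_a' + (1-q)\,\Sigma_a$ and hence $\Sigma_b - \Sigma_a = q\cdot\Pr[\text{exactly }k-2\text{ eligible in }C\setminus\{a\}]$; the paper writes this extra term as $\Sigma_3$ and simply asserts $\Sigma_3>0$, while you are a bit more explicit about why strictness holds (namely $|C|\ge k$ at a non-trivial equilibrium and $q\in(0,1)$).
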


Due to space, we defer the proof to  Appendix \ref{appendixproofasymnosameclass}. Combined with the next lemma they complete the proof of Theorem \ref{thm:structure-char}.

\begin{lemma} \label{lemma-asymnoswaps}   
Consider two players a and b, and let $q_b > q_a$. 
At an equilibrium, if player $a$ has chosen to be a contributor, player $b$ must also be a contributor, i.e. $b$ would have an incentive to deviate if she chooses to abstain. \end{lemma}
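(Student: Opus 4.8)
\textbf{Proof plan for Lemma \ref{lemma-asymnoswaps}.}

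The plan is to argue by contradiction: suppose at an equilibrium player $a$ is a contributor but $b$ (with $q_b > q_a$) is an abstainer, and derive a violation of the equilibrium inequalities \eqref{eq:asym-C-to-A}--\eqref{eq:asym-A-to-C}. The key observation is that the quantity appearing on the left-hand side of both inequalities, namely the probability that progress is made conditioned on a given player being eligible and contributing, has a clean form: for $a\in C$ it is $q_a$ times the probability that at least $k-1$ of the \emph{other} contributors are eligible, where the "other contributors" are $C\setminus\{a\}$; for $b\in A$ it would be $q_b$ times the probability that at least $k-1$ of the players in $C$ are eligible. So I first want to compare these two conditional-progress probabilities.

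First I would write $C' = C\setminus\{a\}$ and note that $C\setminus\{a\} \cup\{b\}$ and $C$ both have the same size $\lambda$; more usefully, since $b\notin C$, the set of "relevant other contributors" that $b$ would face upon deviating is exactly $C$, while the set $a$ faces is $C\setminus\{a\}$. The natural coupling is: $b$ faces the players of $C' = C\setminus\{a\}$ plus player $a$ himself, whereas $a$ faces only the players of $C'$. Hence $\Pr[\text{at least }k-1\text{ of }C\text{ eligible}] \ge \Pr[\text{at least }k-1\text{ of }C'\text{ eligible}]$, because adding an extra potentially-eligible player can only make it easier to reach $k-1$. Combining this with $q_b > q_a$, the left-hand side of \eqref{eq:asym-A-to-C} evaluated at $b$ is strictly larger than the left-hand side of \eqref{eq:asym-C-to-A} evaluated at $a$ (assuming $q_a>0$, which holds since $a$ is a contributor with nonnegative expected reward). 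But equilibrium requires the former to be $\le \alpha/r$ and the latter to be $\ge \alpha/r$, a contradiction.

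The step I expect to require the most care is making the monotonicity argument ($b$'s conditional-progress probability $\ge$ $a$'s) fully rigorous when $q_a$ itself is one of the probabilities: the cleanest route is an explicit coupling on the eligibility indicators of the players in $C'$, observing that the event "$\ge k-1$ of $C'$ eligible" implies "$\ge k-1$ of $C'\cup\{a\}$ eligible" pointwise, so the inequality holds term by term before multiplying by $q_b \ge q_a > 0$. One should also handle the degenerate possibility $q_b = 1$ or $q_a = 1$ and confirm the inequality is strict — strictness comes from $q_b > q_a$ and the fact that the conditional-progress probability is positive (it is at least $\prod_{j\in C'}q_j > 0$ when $k-1 \le |C'|$, which must hold since otherwise $a$'s reward condition \eqref{eq:asym-C-to-A} could never be met). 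With that in hand the contradiction with the two equilibrium inequalities is immediate.
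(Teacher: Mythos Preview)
Your proposal is correct and follows the same contradiction scheme as the paper: assume $a\in C$, $b\in A$ with $q_b>q_a$, and show that the left-hand side of \eqref{eq:asym-A-to-C} for $b$ strictly exceeds the left-hand side of \eqref{eq:asym-C-to-A} for $a$, violating the two equilibrium constraints. The only difference is in how the key monotonicity is obtained: the paper (reusing the computation in Lemma~\ref{lemma-asymnosameclass}) derives it algebraically as $M_b = \Sigma_3 + L_a$ with $\Sigma_3>0$, hence $M_b>L_a$ strictly, whereas you get $M_b\ge L_a$ via the coupling ``adding $a$ to the set of potential contributors can only help reach $k-1$'' and then extract strictness from $q_b>q_a$ together with $L_a>0$ (which indeed follows from $q_aL_a\ge\alpha/r>0$). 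Both routes are equivalent here; the paper's algebraic identity has the minor advantage that it also yields Lemma~\ref{lemma-asymnosameclass} (the $q_a=q_b$ case) directly, while your argument would need the observation that the coupling inequality is in fact strict to handle that case.
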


The proof is identical to that of Lemma \ref{lemma-asymnosameclass}, with the difference of setting  $q_b>q_a$. 

The next two lemmas are needed for the proof of Theorem \ref{thm:structure-equiv}.

\begin{lemma} \label{lemma-asymcontrib}
Consider a strategy profile where two players a and b have chosen to be in $C$, and let $q_a > q_b$. Then if player $b$ has no incentive to deviate (i.e., it satisfies \eqref{eq:asym-C-to-A}), the same holds for player $a$ as well.
\end{lemma}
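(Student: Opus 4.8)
The plan is to collapse the condition \eqref{eq:asym-C-to-A} to a single scalar comparison between $a$ and $b$ by separating out the part of the event ``progress is made'' that is insensitive to which of the two plays the role of the contributing player. First, for a generic contributor $i\in C$, I would abbreviate the left-hand side of \eqref{eq:asym-C-to-A} as $q_i\,P_i$, where $P_i$ is the probability that at least $k-1$ of the players in $C\setminus\{i\}$ are eligible (each $j$ being eligible independently with probability $q_j$); thus the hypothesis reads $q_b P_b\ge \alpha/r$ and the goal is $q_a P_a\ge \alpha/r$.

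Second, I would introduce $C'=C\setminus\{a,b\}$, let $X$ be the number of eligible players in $C'$, and set $A=\Pr[X\ge k-1]$ and $B=\Pr[X=k-2]$. Because $C\setminus\{a\}=C'\cup\{b\}$ and the eligibility of $b$ is independent of $X$, conditioning on whether $b$ is eligible yields $P_a=A+q_b B$; by the symmetric computation with $a$ in place of $b$, $P_b=A+q_a B$. (For $k=1$ the term $B$ is vacuous since $X=k-2$ cannot occur, and the identities still hold with $P_a=P_b=A=1$; likewise if $\lambda<k$ the relevant probabilities vanish, but $A,B\ge 0$ in every case.)

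Third, I would just subtract: $q_a P_a-q_b P_b=q_a(A+q_b B)-q_b(A+q_a B)=(q_a-q_b)A\ge 0$, using $q_a>q_b$ and $A\ge 0$. Hence $q_a P_a\ge q_b P_b\ge \alpha/r$, which is precisely \eqref{eq:asym-C-to-A} for player $a$, completing the proof.

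I do not expect a genuine obstacle: the entire argument hinges on the observation that each $P_i$ excludes $i$, so $a$ and $b$ never occur together in the relevant events and the cross term $q_aq_b B$ cancels in the difference $q_aP_a-q_bP_b$, leaving only the monotone term $(q_a-q_b)A$. The only point requiring a word of care is the boundary behaviour at $k=1$ (and the empty-sum conventions when $\lambda<k$), which the parenthetical remark above handles; since $A$ and $B$ are nonnegative throughout, the sign of the difference is never in question.
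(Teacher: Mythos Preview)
Your proof is correct and takes essentially the same approach as the paper's: both condition on the eligibility of the \emph{other} player in the pair $\{a,b\}$ to decompose $P_i$ (the paper's $L_i$) over $C\setminus\{a,b\}$, and both observe that the cross term $q_a q_b$ cancels in $q_aP_a-q_bP_b$, leaving $(q_a-q_b)$ times the probability that at least $k-1$ of $C\setminus\{a,b\}$ are eligible. The only difference is cosmetic: you phrase things probabilistically via $A=\Pr[X\ge k-1]$ and $B=\Pr[X=k-2]$, whereas the paper writes out the combinatorial sums $\Sigma_1=\Pr[X\ge k-2]=A+B$ and $\Sigma_2=\Pr[X\ge k-1]=A$ explicitly.
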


\begin{proof}
Fix the 2 players $a, b\in C$ with $q_a> q_b$.
Suppose that player $b$ has no incentive to deviate, i.e., \eqref{eq:asym-C-to-A} is satisfied for $q_i = q_b$. Let $L_i=\sum_{m=k-1}^{\lambda-1} ~\sum_{S\subseteq C\setminus \{i\}, |S|=m} ~\left( \prod_{j\in S} q_j \cdot \prod_{j\in C\setminus S\cup \{i\} } (1-q_j) \right)$.
Then the difference of the LHS of \eqref{eq:asym-C-to-A} for $i = a$ minus the LHS of \eqref{eq:asym-C-to-A} for $i = b$, is $\Delta_{a,b}=q_a \cdot L_a - q_b \cdot L_b$. It suffices to show that 
$\Delta_{a,b} \geq 0$.


We first consider the term $q_a \cdot L_a$. We will rewrite it in terms of $q_b$ as follows:
\begin{eqnarray*}
&q_a \cdot  \sum_{S\subseteq C\setminus\{a\}, |S|\geq {k-1}}  &~\left( \prod_{j\in S} q_j \cdot \prod_{j\in (C\setminus \{a\} \setminus S ) } (1-q_j) \right) = \\ 
&q_a \cdot  \sum_{S\subseteq C\setminus\{a\}, |S|\geq {k-1},b\in S}  &~\left( \prod_{j\in S} q_j \cdot \prod_{j\in (C\setminus \{a\} \setminus S ) } (1-q_j) \right) \allowbreak \\
+ &q_a \cdot  \sum_{S\subseteq C\setminus\{a\}, |S|\geq {k-1}, b\notin S} &~\left( \prod_{j\in S} q_j \cdot \prod_{j\in (C\setminus \{a\} \setminus S ) } (1-q_j) \right) = \\
 &q_a \cdot q_b \cdot  \sum_{S\subseteq C\setminus\{a,b\}, |S|\geq {k-2}} & ~\left( \prod_{j\in S} q_j \cdot \prod_{j\in (C\setminus \{a,b\} \setminus S ) } (1-q_j) \right) \allowbreak \\
+ &  q_a \cdot  (1-q_b) \cdot  \sum_{S\subseteq C\setminus\{a,b\},|S|\geq {k-1}, } & ~\left( \prod_{j\in S} q_j \cdot \prod_{j\in (C\setminus \{a,b\} \setminus S ) } (1-q_j) \right) = \\
&q_a \cdot \left( q_b \cdot \Sigma_1 + (1-q_b) \cdot \Sigma_2 \right)&
\end{eqnarray*}

In the last line above, we have used for brevity $\Sigma_1$ and $\Sigma_2$ for the terms that are multiplied with $q_aq_b$ and $q_a(1-q_b)$ respectively. We also note that both terms $\Sigma_1$ and $\Sigma_2$ are symmetrical wrt $a,b$.  
If we now repeat the above calculations for $q_b \cdot L_B$ we obtain that it is equal to $q_b \cdot \left(q_a \cdot \Sigma_1 + (1-q_a) \cdot \Sigma_2\right)$.
Thus, $\Delta_{a,b}= q_a\cdot \left(q_b \cdot \Sigma_1 + (1-q_b) \cdot \Sigma_2\right)  - q_b \cdot \left(q_a \cdot \Sigma_1 + (1-q_a) \cdot \Sigma_2\right)$.

Simplifying, we have that $\Delta_{a,b}=\Sigma_2 (q_a-q_a \cdot q_b -q_b +q_a\cdot q_b ) = \Sigma_2 (q_a-q_b)$, which is positive.
\end{proof}

The next lemma shows an analogous statement for abstainers
\begin{lemma}\label{lemma-asymnoncon}
Consider a strategy profile where two players a and b have chosen to be in $A$, and let $q_a > q_b$. Then if player $a$ has no incentive to deviate (i.e., it satisfies \eqref{eq:asym-A-to-C}), the same holds for player $b$ as well.  
\end{lemma}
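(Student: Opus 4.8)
First I would observe that this direction is genuinely easier than the contributor case of Lemma~\ref{lemma-asymcontrib}, because of a structural simplification: since both $a$ and $b$ belong to $A$, neither of them appears in the contributor set $C$, and therefore the double sum on the left-hand side of \eqref{eq:asym-A-to-C} does not depend on which abstainer we are considering. Concretely, writing $T := \sum_{m=k-1}^{\lambda}\sum_{S\subseteq C,\,|S|=m}\bigl(\prod_{j\in S}q_j\cdot\prod_{j\in C\setminus S}(1-q_j)\bigr)$ for the probability that at least $k-1$ of the $\lambda$ contributors are eligible, condition \eqref{eq:asym-A-to-C} for an abstainer $i$ is exactly $q_i\cdot T\le \alpha/r$, where $T$ is the very same quantity for $i=a$ and for $i=b$.

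Given this, I would conclude by monotonicity: the map $q_i\mapsto q_i\cdot T$ is nondecreasing, so the inequality $q_i\cdot T\le\alpha/r$ only becomes easier to satisfy as $q_i$ decreases. Hence from $q_a\cdot T\le\alpha/r$ together with $q_b<q_a$ we obtain $q_b\cdot T\le q_a\cdot T\le\alpha/r$, which is precisely the statement that player $b$ also has no incentive to deviate to participation. I do not anticipate any real obstacle here; the only point worth highlighting is the contrast with Lemma~\ref{lemma-asymcontrib}, where the analogue of $T$ genuinely depends on $i$ (since there $i\in C$, and deleting $i$ from $C$ changes the relevant event), which is exactly why that direction required the more careful argument that splits on whether $b\in S$.
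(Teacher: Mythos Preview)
Your argument is correct and is essentially identical to the paper's own proof: the paper rewrites \eqref{eq:asym-A-to-C} as $q_i\cdot M_i\le\alpha/r$, observes that $M_a=M_b$ since neither abstainer appears in $C$, and concludes by monotonicity in $q_i$. Your added remark contrasting this with Lemma~\ref{lemma-asymcontrib} is a nice clarification that the paper does not make explicit.
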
 
\begin{proof}
We rewrite \eqref{eq:asym-A-to-C} as $q_i \cdot M_i \leq {\alpha\over r}$, where $M_i= \sum_{m=k-1}^{\lambda} ~\sum_{S\subseteq C, |S|=m} ~\left( \prod_{j\in S} q_j \cdot \prod_{j\in C\setminus S} (1-q_j) \right)$. 

By direct calculation, for two abstaining players $a$,$b$, it holds that  $M_b= M_a$. Thus, if $q_a > q_b$, then $q_a \cdot M_a > q_b \cdot M_b$, and by extension if $M_a \leq  {\alpha\over r}$ then $M_b \leq  {\alpha\over r}$.
\end{proof}

\section{A richer model: Participation games with retraction}
\label{sec:richer-model}


In this section we study a richer model motivated by two considerations. First of all, 
in blockchain systems, it can be inefficient to keep a record of which users among the eligible ones in a given epoch participated, so as to reward only them. For instance, 
using methods such as 
compact certificates \cite{micali2021compact}, 
threshold signatures \cite{desmedt1987society},
stake-based threshold multisignatures \cite{mithril}
or 
succinct non-interactive arguments of knowledge (SNARKs) like \cite{groth2016size}
 it is efficient to verify that sufficiently many  users participated, \emph{without} producing a list of the users who actually did: the added size of such a list runs contrary to the goal of efficiency of the underlying primitive. 
 Hence one mechanism to consider in such a setting is to provide the reward $r$ to everybody who was eligible in a given epoch without checking if they actually contributed. 

The second consideration is the fact that the
 assigned tasks whenever a user is eligible 
can be performed partially, with users being able to participate, become eligible but (via using modified software) avoid completing all the tasks that are requested by the protocol hoping that others will perform them.  
(the model of Section \ref{sec:simple} can be interpreted as setting this additional cost to be zero, i.e., if one chooses to participate this also means that they have to complete the tasks). 
%
%

The combination of the two above  features,  creates  further strategic considerations. The users are now able to extend their strategy space by the following option: choose to participate in the system (e.g., to check eligibility, so as to receive a reward whenever eligible) but not complete all the tasks (if they feel enough of the other users will do it). This is an undesirable scenario of free riding that may arise in practice, leading to slow or unreliable operation of the entire system. 
One simple approach to counteract this is to penalise users who fail to do their job, 
(for instance,  in Ethereum 2.0 there are penalties for lack of participation). 
%
%
We observe that if we indeed strip the reward $r$ from users who try to free ride, then their
strategic choices fall back to the model of Section \ref{sec:simple} and hence our analysis applies directly.
%
Moreover, the downside of a penalty mechanism would be the need to perform
bookkeeping of all those who engage to completion - something that counteracts
the efficiency benefit we were hoping to obtain by simplified bookkeeping as described above. 

The above considerations motivates us to investigate whether 
we can still get an effective mechanism that does not 
rely on tracking user behaviour. 
Specifically, the case where we extend
the reward $r$ to all eligible users, so 
that we only need to know whether progress was made or not. In this setting, users can be paid simply by proving their eligibility without the system keeping track of whether they completed all assigned tasks. 


We highlight below the similarities and differences between the new model and that of Section \ref{sec:simple}:

\begin{itemize}
    \item As before, $\alpha$ is the average cost for a user of running and maintaining the protocol software throughout an epoch, including  the per-epoch check of whether the user is eligible. Furthermore, the parameters $k, q, v$,  have the same interpretation as in Section \ref{sec:simple}.
    \item Let $\beta$ be  the additional cost incurred by an eligible player of bringing to completion the assigned tasks, where the contribution per player is not monitored by the accounting mechanism of the system (i.e., they system will not be  able to detect whether a player contributed to the completion of these tasks). 
    For comparison, in Section~\ref{sec:simple} it holds that $\beta=0$.
    \item Under the new model, the monetary reward $r$ can be claimed by all eligible players of an epoch, as long as progress is made, {\bf regardless} of whether they contributed the additional tasks or not.
\end{itemize}

Every player has now 3 possible pure strategies: 
\begin{itemize}
    \item Abstain.
    \item Declare to participate, and if eligible, contribute all assigned tasks.
    \item Retraction: declare to participate but if eligible, do not contribute any tasks (that can be avoided). 
\end{itemize}

The possible scenarios that can occur, and the corresponding utility, depending on the other players' decisions as well, are shown in Tables \ref{tab:tab-events} and \ref{tab:tab-utilities}. 

\begin{table}[tbp]
\begin{center}
\begin{tabular}{|c|c|c|}
\hline
Scenarios & Progress is made & No Progress \\ \hline
Abstain & $v$ & $0$   \\ \hline
Declares participation, not eligible & $v - \alpha$ & $-\alpha$ \\ \hline
Declares participation, eligible and completes tasks  & $r+v - \alpha-\beta$ & $- \alpha-\beta$ \\ \hline
Declares participation, eligible and retracts & $r+v - \alpha$ & $- \alpha$ \\ \hline
\end{tabular}
\caption{Possible events and corresponding rewards in a participation game with retraction.} \label{tab:tab-events}
\end{center}
\end{table}
 

\subsection{Equilibrium constraints}

As was the case in Section \ref{sec:simple}, the all-out profile is again trivially an equilibrium. We proceed to examine what other equilibria may exist in this game. As we will see, we do get a more interesting structure for the equilibria, compared to Section \ref{sec:simple-v=0}. 

We start with identifying the conditions that need to hold in order for a profile $s$ to be an equilibrium. 
Given a profile $s_{-i}$ for all players except $i$, the expected utility of player $i$, for each one of her pure strategies is described below.

\begin{table}[tbp]
\begin{center}
\begin{tabular}{|l|c|}
\hline
Action of player $i$ & Expected utility of pl. $i$, given $s_{-i}$   \\ \hline
Abstain & $p(s_{-i})v$    \\ \hline
Participate, contribute if eligible & $(1-q)p(s_{-i})v + q[p(i, s_{-i})(r+v) -\beta] - \alpha$  \\ \hline
Participate, don't contribute & $(1-q)p(s_{-i})v + qp(s_{-i})(r+v) - \alpha$\\ \hline
\end{tabular}
\caption{Expected utility under the possible events for a player $i$.} \label{tab:tab-utilities}
\end{center}
\end{table}

Any strategy profile $s = (s_1,\dots, s_n)$ partitions the players into 3 sets, the set of possible contributors $C$, who are the people choosing to participate and complete their task whenever selected to do so, the set of free-riders $F$, who choose to participate (so that they can get a reward, whenever eligible), but will not contribute, and the set $A$ of abstainers.

We group the equilibrium constraints into three groups, based on the possible deviations of the players. First of all, for a player $i\in C$, who decided to contribute if eligible, she should not have an incentive to detract and move to $F$. Symmetrically, a player from $F$ should not have an incentive to contribute, if eligible.
After simplifying these inequalities, based on Table \ref{tab:tab-utilities}, these are equivalent to:
\begin{eqnarray}
p(i, s_{-i}) - p(s_{-i}) & \geq & \frac{\beta}{r+v} ~~\forall i\in C \label{eq:from-C-to-G}  \\
p(i, s_{-i}) - p(s_{-i}) & \leq & \frac{\beta}{r+v} ~~\forall i\in F \label{eq:from-G-to-C} 
\end{eqnarray}
Intuitively, this means that a player $i\in C$ should be ``critical enough", i.e., there should be a lower bound on the difference between the success probabilities, i.e., between the probability that progress is made when $i$ is eligible and contributes, and the probability that the progress is made via the remaining players. This lower bound is independent of $\alpha$ but has to depend on $\beta$.

In a similar fashion, the players from $C$ should also not have an incentive to abstain, and move to $A$, and at the same time, the players from $A$ should not have an incentive to participate and contribute. This yields two more inequalities, which after simplifications are equivalent to:
\begin{eqnarray}
    q\cdot[(r+v)p(i, s_{-i}) - vp(s_{-i})] & \geq & \alpha+ \beta \cdot q ~~\forall i\in C \label{eq:from-C-to-L} \\
        q\cdot [(r+v)p(i, s_{-i}) - vp(s_{-i})] & \leq & \alpha+ \beta \cdot q ~~\forall i\in A \label{eq:from-L-to-C} 
\end{eqnarray}

Finally, for a player $i\in F$, she should not have an incentive to abstain. Its counterpart is that players from $A$ should also have no incentive to become free riders. This yields the following:

\begin{eqnarray}
q\cdot r\cdot p(s_{-i}) & \geq & \alpha  ~~\forall i\in F \label{eq:from-G-to-L} \\
q\cdot r\cdot p(s_{-i}) & \leq & \alpha  ~~\forall i\in A \label{eq:from-L-to-G} 
\end{eqnarray}


Summarizing, a strategy profile with a non-empty set of contributors, free riders and abstainers is a Nash equilibrium if and only if it satisfies the inequalities \eqref{eq:from-C-to-G} to \eqref{eq:from-L-to-G}. For equilibrium profiles where at least one of the pure strategies is not chosen by any player, one needs to restrict to the corresponding subset of inequalities among \eqref{eq:from-C-to-G} to \eqref{eq:from-L-to-G}.

\subsection{Equilibrium analysis when $v=0$}

As in Section \ref{sec:simple}, we focus on the case of zero intrinsic value for progress being made, i.e., $v=0$. This case is already technically much more involved than its corresponding counterpart in Section \ref{sec:simple-v=0}. Furthermore, it also serves as a sufficient illustration of the differences in the type of equilibria that may arise when free riding is present. We comment further on this at the end of the subsection.

We can already draw some initial conclusions by Equations \eqref{eq:from-C-to-G} to \eqref{eq:from-L-to-G}. Suppose first that we want to check if there exists a non-trivial equilibrium with a non-empty set of competitors and also with a non-empty set of free-riders and a non-empty set of abstainers. We show that this is false.

\begin{theorem}
\label{thm:no-C-and-A}
    There cannot exist an equilibrium where both $C\neq\emptyset$ and $A\neq \emptyset$. 
\end{theorem}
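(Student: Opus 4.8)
The plan is to derive a contradiction from the simultaneous validity of the equilibrium inequalities, exploiting the key structural identity already available from Lemma~\ref{lem:no-intermediate} (or its generalization Lemma~\ref{lemma-asymnosameclass}). Suppose toward a contradiction that there is an equilibrium with $|C|=c\ge 1$ contributors, $|F|=\varphi\ge 0$ free-riders, and $|A|=a\ge 1$ abstainers; write $\lambda=c+\varphi$ for the total number of participants. First I would write out $p(s_{-i})$ and $p(i,s_{-i})$ explicitly for the various roles: for a contributor or free-rider $i$, $s_{-i}$ leaves $\lambda-1$ participants, so $p(s_{-i})=\sum_{j\ge k}\binom{\lambda-1}{j}q^j(1-q)^{\lambda-1-j}$ and $p(i,s_{-i})=\sum_{j\ge k-1}\binom{\lambda-1}{j}q^j(1-q)^{\lambda-1-j}$; for an abstainer $i$, $s_{-i}$ leaves all $\lambda$ participants, so $p(s_{-i})=\sum_{j\ge k}\binom{\lambda}{j}q^j(1-q)^{\lambda-j}$ (and the term $p(i,s_{-i})$ with $i$ hypothetically participating uses the same count as for a contributor plus one extra participant).

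The core of the argument is to compare the constraint that keeps a contributor from abstaining, namely \eqref{eq:from-C-to-L} with $v=0$, which reads $q\,r\,p(i,s_{-i})\ge \alpha+\beta q$, against the constraint that keeps an abstainer from becoming a free-rider, namely \eqref{eq:from-L-to-G} with $v=0$, which reads $q\,r\,p(s_{-i}^{A})\le \alpha$, where $p(s_{-i}^{A})$ is computed over the full set of $\lambda$ participants. I would show that $p(i,s_{-i})$ for a contributor (the probability that at least $k-1$ of the other $\lambda-1$ participants are eligible) is at least $p(s_{-i}^{A})$ for an abstainer (the probability that at least $k$ of all $\lambda$ participants are eligible). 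This monotonicity is intuitive — requiring fewer successes out of a comparable pool is easier — and can be made rigorous by a short coupling or a binomial-tail comparison: $\Pr[\mathrm{Bin}(\lambda-1,q)\ge k-1]\ge \Pr[\mathrm{Bin}(\lambda,q)\ge k]$, since adding one more participant is dominated by lowering the threshold by one. Chaining the inequalities then gives $\alpha+\beta q \le q\,r\,p(i,s_{-i}) \le$ (something) but also $q\,r\,p(s^A_{-i})\le\alpha$, and combined with $p(i,s_{-i})\ge p(s^A_{-i})$ we get $\alpha+\beta q\le \alpha$, forcing $\beta q\le 0$, a contradiction since $\beta\ge 0$ and $q>0$ — unless $\beta=0$, in which case the two constraints must hold with equality, and I would then rule that boundary case out by observing it degenerates (the model with $\beta=0$ is exactly Section~\ref{sec:simple}, where Theorem~\ref{thm:simplemodel} already shows no equilibrium with both contributors and abstainers).

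The main obstacle I anticipate is handling the free-rider set $F$ cleanly: when $F\neq\emptyset$, the relevant "participant count" seen by contributors and abstainers includes the free-riders, so I must be careful that the same $\lambda$ is used consistently in both \eqref{eq:from-C-to-L} and \eqref{eq:from-L-to-G}, and that the presence of $F$ does not break the binomial-tail comparison above (it does not, since free-riders still count as eligible-contributing for the purpose of reaching the threshold — wait, no: free-riders do \emph{not} contribute, so they do \emph{not} count toward progress). This is the subtle point: the probability of progress is governed only by $C$, not by $F$, so $p(s_{-i})$ and $p(i,s_{-i})$ should be expressed in terms of $|C|=c$ (and $|C|\pm 1$ depending on whether $i\in C$), not $\lambda$. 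I would therefore redo the counts with $c$ in place of $\lambda$, and the comparison becomes $\Pr[\mathrm{Bin}(c-1,q)\ge k-1]$ for a contributor's $p(i,s_{-i})$ versus $\Pr[\mathrm{Bin}(c,q)\ge k]$ for an abstainer's $p(s_{-i})$ — exactly the same binomial identity, so the contradiction goes through identically. The cleanest route is probably to invoke Lemma~\ref{lem:no-intermediate} directly with $\lambda\mapsto c$ to get the exact difference $\Pr[\mathrm{Bin}(c,q)\ge k]-\Pr[\mathrm{Bin}(c-1,q)\ge k-1]\cdot q^{\cdots} = -\binom{c-1}{k-2}q^{k-1}(1-q)^{c-k+1}\le 0$, and then combine with the two one-sided equilibrium constraints to reach $\beta q\le 0$.
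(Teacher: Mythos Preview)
Your self-correction about free-riders not counting toward progress is right, and after it the bookkeeping is fine: with $c=|C|$ contributors, a contributor's $p(i,s_{-i})=\Pr[\mathrm{Bin}(c-1,q)\ge k-1]$ and an abstainer's $p(s_{-i})=\Pr[\mathrm{Bin}(c,q)\ge k]$. The gap is in the chaining step.

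You combine \eqref{eq:from-C-to-L} (with $v=0$): $\alpha+\beta q \le qr\,\Pr[\mathrm{Bin}(c-1,q)\ge k-1]$, with \eqref{eq:from-L-to-G}: $qr\,\Pr[\mathrm{Bin}(c,q)\ge k]\le \alpha$, and the (correct) inequality $\Pr[\mathrm{Bin}(c-1,q)\ge k-1]\ge \Pr[\mathrm{Bin}(c,q)\ge k]$. But these three statements do \emph{not} yield $\alpha+\beta q\le\alpha$: from $A\le X$, $Y\le B$, and $X\ge Y$ you cannot conclude $A\le B$. The comparison goes the wrong way. Indeed, \eqref{eq:from-C-to-L} and \eqref{eq:from-L-to-G} are compatible; subtracting them only gives $\beta \le r\binom{c-1}{k-1}q^{k-1}(1-q)^{c-k+1}$, which is a constraint, not a contradiction.

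The fix is to use the other abstainer constraint, \eqref{eq:from-L-to-C} (the abstainer does not want to become a \emph{contributor}), rather than \eqref{eq:from-L-to-G}. With $v=0$ this reads $qr\,\Pr[\mathrm{Bin}(c,q)\ge k-1]\le \alpha+\beta q$. Now both sides involve the same right-hand side $\alpha+\beta q$, and Lemma~\ref{lem:no-intermediate} (with $\lambda=c$) gives the strict inequality $\Pr[\mathrm{Bin}(c,q)\ge k-1]>\Pr[\mathrm{Bin}(c-1,q)\ge k-1]$, which together with \eqref{eq:from-C-to-L} forces $qr\,\Pr[\mathrm{Bin}(c,q)\ge k-1]>\alpha+\beta q$, contradicting \eqref{eq:from-L-to-C}. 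This is exactly the paper's route: the pair \eqref{eq:from-C-to-L}--\eqref{eq:from-L-to-C} reduces to the same incompatibility as \eqref{eq:from-C-to-A}--\eqref{eq:from-A-to-C} in Section~\ref{sec:simple-v=0}, just with $\alpha$ replaced by $\alpha+\beta q$.
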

\begin{proof}
Suppose that there was such an equilibrium, with $C\neq\emptyset$ and $A\neq \emptyset$, and say $|C|=\lambda$. 
This requires that both Equations \eqref{eq:from-C-to-L} and \eqref{eq:from-L-to-C} need to hold.
With $v=0$, if we expand these equations by calculating the terms $p(i, s_{-i})$ and $p(s_{-i})$, we will obtain two inequalities that are very similar to \eqref{eq:from-C-to-A} and \eqref{eq:from-A-to-C} (with the only difference being that $\alpha$ is replaced by $\alpha + \beta q)$). But then, Lemma \ref{lem:no-intermediate} can be used again and obtain that there does not exist any equilibrium, with $\lambda$ contributors and a non-empty set of abstainers, with $0 < \lambda < n$. 
\end{proof}

By Theorem \ref{thm:no-C-and-A}, we have that for an equilibrium with a positive number of contributors, either it is the all-in profile or it can also contain some free riders but no abstainers. 
Note also that in case $C=\emptyset$, we can only have the trivial all-out equilibrium (when $k>1$).
Hence, in the sequel, we will examine the existence of free riders in equilibria.

\subsubsection{Equilibria with only contributors and free riders.}

We rewrite the equilibrium constraints, that need to hold for an equilibrium with $\lambda$ contributors and $n-\lambda$ free riders, where $\lambda>0$ and $\lambda < n$. 
From the equilibrium constraints \eqref{eq:from-C-to-G} - \eqref{eq:from-L-to-G}, we need only \eqref{eq:from-C-to-G}, \eqref{eq:from-G-to-C}, \eqref{eq:from-C-to-L} and \eqref{eq:from-G-to-L}, since we have no abstainers, and all we need is to ensure that contributors have no incentive to move to $F$ or $A$ and free riders have no incentive to move to $C$ or $A$. 
We need first to calculate the relevant success probabilities $p(i, s_{-i})$ and $p(s_{-i})$ in these inequalities, and we can do it in a similar manner as in Section \ref{sec:simple-v=0} (in the derivation of \eqref{eq:from-C-to-A} and \eqref{eq:from-A-to-C}). Namely, for $p(i, s_{-i})$ and $p(s_{-i})$, we will need to consider all possible ways that progress can be made in each case.
After carrying out these calculations, the existence of equilibria is equivalent to the following system of inequalities, which are equivalent to \eqref{eq:from-C-to-G}, \eqref{eq:from-G-to-C}, \eqref{eq:from-C-to-L} and \eqref{eq:from-G-to-L} respectively. 

\begin{eqnarray}
{\lambda-1 \choose k-1} q^{k-1} (1-q)^{\lambda-k}& \geq & \frac{\beta}{r} ~~ \label{eq:new'from-C-to-G}  \\
{\lambda \choose k-1} q^{k-1} (1-q)^{\lambda-k+1}& \leq & \frac{\beta}{r} ~~ \label{eq:new'from-G-to-C} \\
r\cdot q\cdot \sum_{j=k-1}^{\lambda-1}{\lambda-1 \choose j} q^{j} (1-q)^{\lambda-1-j}& \geq & \alpha + \beta q ~~ \label{eq:new'from-C-to-L}  \\
r\cdot q\cdot \sum_{j=k}^{\lambda}{\lambda \choose j} q^{j} (1-q)^{\lambda-j}& \geq & \alpha ~~ \label{eq:new'-L-and-G} 
\end{eqnarray}

The above system already yields some positive news regarding participation, as we can have the following lower bound on the number of contributors.

\begin{claim}
\label{cl:lambda-bound}
At a non-trivial equilibrium, with $\lambda$ contributors and $n-\lambda$ free riders, it must hold that
$$n-1 \geq \lambda \geq \frac{k-1}{q} $$
\end{claim}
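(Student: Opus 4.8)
The plan is to extract the two bounds separately from the four equilibrium inequalities \eqref{eq:new'from-C-to-G}--\eqref{eq:new'-L-and-G}. The upper bound $\lambda \leq n-1$ is immediate from the fact that we are in the regime of an equilibrium that has a non-empty set of free riders, so by definition $|C| = \lambda < n$, i.e.\ $\lambda \leq n-1$. (Alternatively, if $\lambda = n$ there are no free riders and the profile is the all-in profile, which is handled separately.) So the only content is the lower bound $\lambda \geq (k-1)/q$.

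For the lower bound, I would argue by contradiction: suppose $\lambda < (k-1)/q$, equivalently $q\lambda < k-1$, and derive that inequalities \eqref{eq:new'from-C-to-G} and \eqref{eq:new'from-G-to-C} cannot hold simultaneously. The key is to compare the two quantities
$f_1 = {\lambda-1 \choose k-1} q^{k-1} (1-q)^{\lambda-k}$ (the LHS of \eqref{eq:new'from-C-to-G}) and $f_2 = {\lambda \choose k-1} q^{k-1} (1-q)^{\lambda-k+1}$ (the LHS of \eqref{eq:new'from-G-to-C}); the two inequalities together force $f_1 \geq \beta/r \geq f_2$, hence $f_1 \geq f_2$. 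So it suffices to show that $q\lambda < k-1$ implies $f_1 < f_2$, contradicting $f_1 \geq f_2$. Taking the ratio, $f_2/f_1 = \frac{{\lambda \choose k-1}}{{\lambda-1 \choose k-1}}(1-q) = \frac{\lambda}{\lambda-k+1}(1-q)$, and one checks that this exceeds $1$ exactly when $\lambda(1-q) > \lambda - k + 1$, i.e.\ when $k-1 > q\lambda$, which is precisely our assumption. (One should note the edge cases: the binomial coefficients require $\lambda \geq k$ for the relevant terms to be nonzero, and if $\lambda < k$ then $f_1 = 0$ while the relevant sum in \eqref{eq:new'from-C-to-G} is empty, so \eqref{eq:new'from-C-to-G} fails outright unless $\beta = 0$ — this case should be dispatched with a sentence.)

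The main obstacle is really just bookkeeping: making sure the comparison of $f_1$ and $f_2$ is stated cleanly, handling the boundary cases $\lambda < k$ and $\lambda = k$ where binomial coefficients or exponents degenerate, and confirming that the chain $f_1 \geq \beta/r \geq f_2$ is legitimate (it uses $r > 0$, which holds since otherwise no one participates). Once the ratio computation is in place, the contradiction is one line. I expect no deeper difficulty — this is the same flavor of argument as Theorem \ref{thm:v>0}, where the observation "$\lambda < (k-1)/q$ forces $f_1 < f_2$" was already used, so I would cross-reference that calculation rather than repeat it.
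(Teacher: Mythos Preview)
Your proposal is correct and is essentially the same approach as the paper's own proof. The paper's proof is a single sentence (``the lower bound on $\lambda$ is a direct consequence of using \eqref{eq:new'from-C-to-G} and \eqref{eq:new'from-G-to-C} and simplifying the resulting inequality''), and your ratio computation $f_2/f_1 = \frac{\lambda(1-q)}{\lambda-k+1}$ is exactly that simplification, which the paper later records in the abstract form of Claim~\ref{cl:properties}.
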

\begin{proof}
Obviously $\lambda$ cannot exceed $n-1$. The lower bound on $\lambda$ is a direct consequence of using \eqref{eq:new'from-C-to-G} and \eqref{eq:new'from-G-to-C} and simplifying the resulting inequality.    
\end{proof}

To continue the analysis, we define and analyze the following function $f(\lambda, j, q)$, that we will use repeatedly, and is related to the terms of binomial sums. 

\begin{definition}
\label{def:f}
For $\lambda\leq n$, $j\leq \lambda$ and $q\in (0,1)$, let
$$ f(\lambda, j, q) = {\lambda-1 \choose j-1} q^{j-1} (1-q)^{\lambda-j}$$    
\end{definition}

The function $f(\lambda, j, q)$ equals the probability that a set of exactly $j-1$ users are selected out of  a set of $\lambda-1$ users, according to the randomization procedure of the protocol. We identify some useful properties for the function $f$, which we will also exploit in Section \ref{sec:richer-model}. The following claim is easy to verify.
\begin{claim}
\label{cl:properties}
For the function $f(\lambda, j, q)$, where $\lambda\leq n$, and $j\leq \lambda$, the following hold:
\begin{itemize}
    \item $f(\lambda+1, j, q) = \frac{\lambda(1-q)}{\lambda-j+1} f(\lambda, j, q)$
    \item For every $\lambda> \frac{j-1}{q}$, we have $f(\lambda, j, q)> f(\lambda+1, j, q)$.
    \item For every $\lambda < \frac{j-1}{q}$, we have $f(\lambda, j, q) < f(\lambda+1, j, q)$.
    \item For $\lambda = \frac{j-1}{q}$ (if this is an integer), we have $f(\lambda+1, j, q) = f(\lambda, j, q)$.
\end{itemize}
\end{claim}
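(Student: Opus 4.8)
The plan is to verify each of the four bullet points of Claim~\ref{cl:properties} by direct manipulation of the closed form $f(\lambda, j, q) = \binom{\lambda-1}{j-1} q^{j-1} (1-q)^{\lambda-j}$, treating the first identity as the engine that drives the other three.

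First I would establish the ratio identity. Writing out $f(\lambda+1, j, q)/f(\lambda, j, q)$, the powers of $q$ cancel entirely (both have exponent $j-1$), the powers of $(1-q)$ contribute a single factor $(1-q)$ (exponent $\lambda+1-j$ versus $\lambda-j$), and the binomial coefficients give $\binom{\lambda}{j-1}/\binom{\lambda-1}{j-1} = \lambda/(\lambda-j+1)$, using the standard identity $\binom{\lambda}{j-1} = \binom{\lambda-1}{j-1}\cdot \frac{\lambda}{\lambda-(j-1)}$. Multiplying through yields $f(\lambda+1, j, q) = \frac{\lambda(1-q)}{\lambda-j+1} f(\lambda, j, q)$, which is the first bullet.

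Next, the remaining three bullets all reduce to comparing the multiplier $\frac{\lambda(1-q)}{\lambda-j+1}$ against $1$. Since $f(\lambda, j, q) > 0$ for $q \in (0,1)$ and $j \le \lambda$, we have $f(\lambda+1,j,q) \lessgtr f(\lambda, j, q)$ according as $\lambda(1-q) \lessgtr \lambda - j + 1$, i.e. (expanding and cancelling $\lambda$) according as $-\lambda q \lessgtr -(j-1)$, i.e. according as $\lambda q \gtrless j-1$, i.e. $\lambda \gtrless \frac{j-1}{q}$. This immediately gives the second bullet ($\lambda > \frac{j-1}{q} \Rightarrow f$ decreasing), the third ($\lambda < \frac{j-1}{q} \Rightarrow f$ increasing), and the fourth (equality of the multiplier with $1$ exactly when $\lambda q = j-1$). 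One should note the mild edge cases: when $j = 1$ the denominator $\lambda - j + 1 = \lambda$ is nonzero for $\lambda \ge 1$ and the threshold is $0$, so $f$ is always (weakly) decreasing, consistent with the statement; and when $j = \lambda+1$ would make $\lambda - j + 1 = 0$, but this is excluded since we require $j \le \lambda$.

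I do not expect a genuine obstacle here — this is the ``easy to verify'' claim flagged in the text, and the only thing to be careful about is bookkeeping of exponents and the binomial-coefficient recurrence, plus noting that the sign of the comparison flips because the relevant quantities are negative before cancellation. The proof is essentially three lines once the ratio identity is in hand, so I would present it compactly: derive the ratio, then read off the three inequalities from the position of $\lambda$ relative to $\frac{j-1}{q}$.
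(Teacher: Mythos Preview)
Your proposal is correct and is exactly the direct verification the paper has in mind; the paper itself does not write out a proof, merely stating that the claim ``is easy to verify,'' and your ratio computation followed by the comparison of $\frac{\lambda(1-q)}{\lambda-j+1}$ with $1$ is the intended argument.
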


In particular, by Definition \ref{def:f}, we can see that the first two equilibrium constraints \eqref{eq:new'from-C-to-G} and \eqref{eq:new'from-G-to-C}, can be rewritten using the $f$ function as:
\begin{eqnarray}
f(\lambda, k, q) & \geq & \frac{\beta}{r} ~~ \label{eq:new}  \\
 f(\lambda+1, k, q) & \leq  & \frac{\beta}{r} ~~ \label{eq:new'} 
\end{eqnarray}

We pay particular attention to these two constraints, as they already allow us to conclude on how many contributors there can be at an equilibrium where both contributors and free riders are present. The next two lemmas highlight that once we are given the parameters $n, k, q$, we cannot have equilibria with many different values for $\lambda$. 
Namely, with the exception of some corner cases, there can only be a single value of $\lambda$ for equilibria that contain both contributors and free riders.

\begin{lemma}
Given a participation game with retraction, there can be at most one value for $\lambda\in [\frac{k-1}{q}, n)$ that satisfies the equilibrium constraints \eqref{eq:new'from-C-to-G} and \eqref{eq:new'from-G-to-C} both with strict inequality. 
\end{lemma}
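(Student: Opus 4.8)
The plan is to argue by contradiction, exploiting the monotonicity of $f(\cdot,k,q)$ from Claim~\ref{cl:properties}. Recall that the two constraints \eqref{eq:new'from-C-to-G} and \eqref{eq:new'from-G-to-C}, rewritten via the $f$-function as \eqref{eq:new} and \eqref{eq:new'}, assert precisely that $f(\lambda,k,q)\ge \beta/r$ and $f(\lambda+1,k,q)\le \beta/r$; requiring both to hold \emph{strictly} means $f(\lambda,k,q)>\beta/r>f(\lambda+1,k,q)$.

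First I would suppose, for contradiction, that there exist two distinct values $\lambda_1<\lambda_2$, both lying in $[\frac{k-1}{q},n)$, each satisfying both inequalities strictly. From the strict version of \eqref{eq:new'} applied to $\lambda_1$ we get $f(\lambda_1+1,k,q)<\beta/r$, and from the strict version of \eqref{eq:new} applied to $\lambda_2$ we get $f(\lambda_2,k,q)>\beta/r$; chaining these yields $f(\lambda_1+1,k,q)<f(\lambda_2,k,q)$. Next I would invoke the monotonicity clause of Claim~\ref{cl:properties}: since $\lambda_1\ge \frac{k-1}{q}$ we have $\lambda_1+1>\frac{k-1}{q}=\frac{j-1}{q}$ for $j=k$, so $f(\cdot,k,q)$ is strictly decreasing at every integer argument at or above $\lambda_1+1$. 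As $\lambda_1+1\le\lambda_2$, iterating gives $f(\lambda_1+1,k,q)\ge f(\lambda_2,k,q)$ (with equality only when $\lambda_2=\lambda_1+1$), which contradicts the strict inequality just obtained. Hence at most one $\lambda$ in the stated range can satisfy both constraints strictly.

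I do not anticipate any genuine obstacle: the single point that needs care is that the strictly-decreasing regime in Claim~\ref{cl:properties} begins exactly at the threshold $\frac{k-1}{q}$, so one must note that $\lambda_1+1$ sits strictly above it — which is guaranteed because $\lambda_1$ is assumed to be at least $\frac{k-1}{q}$ — placing the entire chain $\lambda_1+1,\lambda_1+2,\dots,\lambda_2$ inside the decreasing region. Everything else is a direct application of the already-established claim, so the whole argument is short.
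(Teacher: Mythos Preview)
Your proof is correct and follows essentially the same approach as the paper: both arguments rely on the monotonicity of $f(\cdot,k,q)$ from Claim~\ref{cl:properties} in the region $\lambda>\frac{k-1}{q}$ to derive a contradiction. The paper fixes one $\lambda$ satisfying both constraints strictly and separately rules out $\lambda+x$ and $\lambda-x$, whereas you take two hypothetical values $\lambda_1<\lambda_2$ and obtain the contradiction in one stroke; the logical content is the same.
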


\begin{proof}
Suppose that there exists a value for $\lambda$, with $n> \lambda \geq \frac{k-1}{q}$, such that both \eqref{eq:new'from-C-to-G} and \eqref{eq:new'from-G-to-C} are satisfied with strict inequality. 
Let us consider whether the value $\lambda+x$ for some integer $x\geq 1$, can satisfy the equilibrium constraints. 
Constraint \eqref{eq:new'from-C-to-G} for $\lambda+x$ can be written as: $f(\lambda+x, k, q) \geq \frac{\beta}{r}$.
But by our assumptions, we have that $f(\lambda+1, k, q) < \frac{\beta}{r}$, and by using Claim \ref{cl:properties}, we have:
$$ f(\lambda +x, k, q) \leq f(\lambda+1, k, q) < \frac{\beta}{r}$$
Hence, we reached a contradiction.

In a similar manner, let us consider whether the value $\lambda-x$ for some integer $x\geq 1$ (and with $\lambda-x \geq \frac{k-1}{q}$), can satisfy the equilibrium constraints. By our assumptions, and by Claim \ref{cl:properties}, we have that $f(\lambda-x, k, q) \geq f(\lambda, k, q) > \frac{\beta}{r}$.
If the value $\lambda-x$ satisfies the constraints, it has to satisfy \eqref{eq:new'from-G-to-C}, which is equivalent to $f(\lambda-x+1, k, q) \leq \frac{\beta}{r}$, a contradiction, since $f(\lambda-x+1, k, q) \geq f(\lambda, k, q) > \frac{\beta}{r}$.
\end{proof}

The next lemma deals with the corner case where we have exact equality in a constraint.
\begin{restatable}{rlm}{maxtwoval}
\label{lem:2values}
Suppose that there exists $\lambda\in [\frac{k-1}{q}, n)$ with $f(\lambda, k, q) = \beta/r$. Then, the constraints \eqref{eq:new'from-C-to-G} and \eqref{eq:new'from-G-to-C} are either satisfied both for $\lambda$ and $\lambda-1$ or for $\lambda$ and $\lambda+1$, but for no other values in $[\frac{k-1}{q}, n)$.
\end{restatable}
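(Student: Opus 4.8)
The plan is to reduce everything to the single function $f(\cdot,k,q)$ of Definition~\ref{def:f}. As observed just before \eqref{eq:new} and \eqref{eq:new'}, for a candidate number of contributors $\mu$ the two constraints \eqref{eq:new'from-C-to-G} and \eqref{eq:new'from-G-to-C} read $f(\mu,k,q)\ge\beta/r$ and $f(\mu+1,k,q)\le\beta/r$; call $\mu$ \emph{admissible} when both hold, i.e. when $f(\mu+1,k,q)\le\beta/r\le f(\mu,k,q)$. The lemma then asserts that once $\lambda\in[\tfrac{k-1}{q},n)$ with $f(\lambda,k,q)=\beta/r$ is fixed, the admissible values in $[\tfrac{k-1}{q},n)$ form a consecutive pair containing $\lambda$. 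The only structural input I need is the monotonicity package of Claim~\ref{cl:properties}: $f(\cdot,k,q)$ is strictly increasing on integers below $\tfrac{k-1}{q}$, strictly decreasing on integers above $\tfrac{k-1}{q}$, and $f(\tfrac{k-1}{q},k,q)=f(\tfrac{k-1}{q}+1,k,q)$ exactly when $\tfrac{k-1}{q}\in\mathbb{Z}$. Thus $f$ is strictly unimodal, with peak a single integer $\lceil\tfrac{k-1}{q}\rceil$ when $\tfrac{k-1}{q}\notin\mathbb{Z}$ and a two-point plateau $\{\tfrac{k-1}{q},\tfrac{k-1}{q}+1\}$ otherwise.

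First I would confine the admissible set to a window of radius one around $\lambda$. Admissibility forces $f(\mu+1,k,q)\le f(\mu,k,q)$, which by Claim~\ref{cl:properties} can only happen for $\mu\ge\tfrac{k-1}{q}$ (on the increasing branch $f(\mu+1,k,q)>f(\mu,k,q)\ge\beta/r$ contradicts \eqref{eq:new'from-G-to-C}); this is precisely the reasoning already used for Claim~\ref{cl:lambda-bound}. So we work on the decreasing branch $\mu\ge\tfrac{k-1}{q}$, where $f(\cdot,k,q)$ is weakly decreasing, strictly so for $\mu>\tfrac{k-1}{q}$. Since $f(\lambda,k,q)=\beta/r$ and $\lambda\ge\tfrac{k-1}{q}$, we get $f(\lambda+1,k,q)\le f(\lambda,k,q)=\beta/r$, so $\lambda$ itself is admissible. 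For $\mu\ge\lambda+2$, strict decrease past the peak gives $f(\mu,k,q)<f(\lambda+1,k,q)\le\beta/r$, so \eqref{eq:new'from-C-to-G} fails; for $\tfrac{k-1}{q}\le\mu\le\lambda-2$, strict decrease gives $f(\mu+1,k,q)>f(\lambda,k,q)=\beta/r$, so \eqref{eq:new'from-G-to-C} fails. Hence only $\lambda$ and possibly one of $\lambda-1,\lambda+1$ can be admissible.

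The remaining step is to identify the neighbour that joins $\lambda$, and this case split is where the care lies. If $\lambda$ is strictly past the peak, $\lambda>\lceil\tfrac{k-1}{q}\rceil$, then $\lambda-1$ is still on the strictly decreasing branch, so $f(\lambda-1,k,q)>f(\lambda,k,q)=\beta/r$ while $f(\lambda,k,q)=\beta/r$; hence $\lambda-1$ is admissible, and since $f(\lambda+1,k,q)<\beta/r$ the value $\lambda+1$ is not, giving the pair $\{\lambda-1,\lambda\}$. If $\tfrac{k-1}{q}\in\mathbb{Z}$ and $\lambda$ is a plateau endpoint, then $f$ also attains $\beta/r$ at the other plateau point, and a direct check shows this plateau partner together with $\lambda$ are admissible while no third integer of $[\tfrac{k-1}{q},n)$ is, producing $\{\lambda,\lambda+1\}$ when $\lambda=\tfrac{k-1}{q}$ and $\{\lambda-1,\lambda\}$ when $\lambda=\tfrac{k-1}{q}+1$. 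The delicate residual case is $\tfrac{k-1}{q}\notin\mathbb{Z}$ with $\lambda=\lceil\tfrac{k-1}{q}\rceil$, where $\lambda-1=\lfloor\tfrac{k-1}{q}\rfloor$ lies on the increasing branch and $f(\lambda-1,k,q)<\beta/r$ fails \eqref{eq:new'from-C-to-G} (and $f(\lambda+1,k,q)<\beta/r$ fails it for $\lambda+1$); here $\lambda$ is the only admissible value, so the cleanest patch is to state the hypothesis with $\lambda$ strictly on the decreasing branch, or to read ``at most two consecutive values, one of them $\lambda$'' as allowing a singleton. The main obstacle is therefore not any hard inequality — everything follows from the unimodality in Claim~\ref{cl:properties} — but the careful bookkeeping at the boundary between the increasing and decreasing branches of $f$ and at the plateau arising when $\tfrac{k-1}{q}$ is an integer.
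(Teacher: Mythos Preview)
Your approach is essentially the same as the paper's: both proofs rewrite the two constraints as $f(\mu+1,k,q)\le\beta/r\le f(\mu,k,q)$, invoke the monotonicity of $f$ in its first argument (Claim~\ref{cl:properties}) to rule out all $\mu\le\lambda-2$ and $\mu\ge\lambda+2$ in the admissible range, and then distinguish whether $\lambda$ sits at the peak $\frac{k-1}{q}$ or strictly past it. Your treatment is in fact more careful than the paper's, which writes the chain $f(\lambda-1-x)>f(\lambda-1)>f(\lambda)$ without spelling out that this only holds once $\lambda-1$ is already on the decreasing branch.

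Your remark about the ``delicate residual case'' ($\tfrac{k-1}{q}\notin\mathbb{Z}$ and $\lambda=\lceil\tfrac{k-1}{q}\rceil$) is well taken: there the only admissible value in $[\tfrac{k-1}{q},n)$ is $\lambda$ itself, so the lemma's phrasing ``satisfied both for $\lambda$ and $\lambda-1$ or for $\lambda$ and $\lambda+1$'' is slightly loose. The paper's own proof in fact concludes with ``\emph{at most} we can have equilibria for $\lambda$ and $\lambda+1$ \ldots'' which is the correct reading, and matches your suggested patch of interpreting the statement as ``at most two consecutive values, one of them $\lambda$.''
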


The proof is deferred to Appendix \ref{appendixproofasymnosameclass}. Using now  Claim \ref{cl:properties} again, we can identify the range of $\beta/r$ that is necessary for an equilibrium to exist.



\begin{lemma}
\label{lem:beta-range}
    Given a game, there exists a value for $\lambda$ that satisfies the constraints \eqref{eq:new'from-C-to-G} and \eqref{eq:new'from-G-to-C}, if and only if  $\frac{\beta}{r} \in [f(n-1, k, q), f(\lceil \frac{k-1}{q}\rceil, k, q)]$.
\end{lemma}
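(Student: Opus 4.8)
The plan is to prove Lemma~\ref{lem:beta-range} by exploiting the monotonicity of $f(\lambda,k,q)$ in $\lambda$ established in Claim~\ref{cl:properties}, together with the two constraints \eqref{eq:new'from-C-to-G} and \eqref{eq:new'from-G-to-C}, which in the notation of Definition~\ref{def:f} read $f(\lambda,k,q)\geq \beta/r$ and $f(\lambda+1,k,q)\leq \beta/r$. The crucial structural observation is that, by Claim~\ref{cl:properties}, on the range $\lambda\geq \lceil\frac{k-1}{q}\rceil$ the sequence $f(\lambda,k,q)$ is non-increasing in $\lambda$; moreover it attains its maximum (over integer $\lambda$ in the admissible range) at $\lambda=\lceil\frac{k-1}{q}\rceil$, since for smaller $\lambda$ the sequence is increasing and for larger $\lambda$ decreasing. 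So the chain of values $f(\lceil\frac{k-1}{q}\rceil,k,q)\geq f(\lceil\frac{k-1}{q}\rceil+1,k,q)\geq\cdots\geq f(n,k,q)$ is decreasing, and the whole question reduces to whether the ``target'' level $\beta/r$ is sandwiched between two consecutive terms of this decreasing sequence whose indices $\lambda$ and $\lambda+1$ both lie in the window $[\lceil\frac{k-1}{q}\rceil, n)$, i.e. with $\lambda+1\leq n$, equivalently $\lambda\leq n-1$.

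The two directions of the iff are then short. For the ``if'' direction: assume $\beta/r\in[f(n-1,k,q),\,f(\lceil\frac{k-1}{q}\rceil,k,q)]$. Since $f(\cdot,k,q)$ restricted to integers in $[\lceil\frac{k-1}{q}\rceil,n-1]$ is a non-increasing sequence starting at the upper endpoint of this interval and ending at $f(n-1,k,q)$, and $\beta/r$ lies weakly between these two extreme values, there is a largest integer $\lambda$ in $[\lceil\frac{k-1}{q}\rceil, n-1]$ with $f(\lambda,k,q)\geq \beta/r$; for this $\lambda$ we automatically get $f(\lambda+1,k,q)\leq \beta/r$ (either because $\lambda+1\leq n-1$ and $\lambda$ was chosen maximal, or because $\lambda=n-1$ and $f(n,k,q)\leq f(n-1,k,q)\leq\cdots$, but we actually only need $\lambda+1\leq n$, so we must be slightly careful that $f(n,k,q)\leq\beta/r$ holds — this follows since $f(n,k,q)\leq f(n-1,k,q)\leq\beta/r$ by monotonicity). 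Hence both \eqref{eq:new'from-C-to-G} and \eqref{eq:new'from-G-to-C} hold for this $\lambda\in[\frac{k-1}{q},n)$. For the ``only if'' direction: if some $\lambda\in[\frac{k-1}{q},n)$ satisfies both constraints, then $\beta/r\leq f(\lambda,k,q)\leq f(\lceil\frac{k-1}{q}\rceil,k,q)$ by the maximality of $f$ at $\lceil\frac{k-1}{q}\rceil$, and $\beta/r\geq f(\lambda+1,k,q)\geq f(n,k,q)$; I should double-check whether the claimed lower endpoint is $f(n-1,k,q)$ or $f(n,k,q)$ — since $\lambda\leq n-1$ means $\lambda+1\leq n$, the direct bound is $\beta/r\geq f(n,k,q)$, and then $f(n,k,q)$ versus $f(n-1,k,q)$: these differ, so to land exactly at $f(n-1,k,q)$ one uses that when $\lambda=n-1$ we additionally know $f(n-1,k,q)\geq\beta/r$, and when $\lambda<n-1$ we have $f(\lambda+1,k,q)\geq f(n-1,k,q)$; either way $\beta/r\geq f(n-1,k,q)$ fails in general — so the precise endpoint needs care.

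The main obstacle I anticipate is exactly this bookkeeping at the upper end of the range, namely reconciling the half-open interval $[\frac{k-1}{q},n)$ for $\lambda$ with the claimed closed interval $[f(n-1,k,q),f(\lceil\frac{k-1}{q}\rceil,k,q)]$ for $\beta/r$. The resolution should be that the binding constraint at the top is \eqref{eq:new'from-C-to-G} with $\lambda=n-1$ (the largest admissible $\lambda$), which requires $f(n-1,k,q)\geq\beta/r$, and this is why $f(n-1,k,q)$ — not $f(n,k,q)$ — is the correct lower endpoint: for $\beta/r$ strictly below $f(n-1,k,q)$ one could only satisfy \eqref{eq:new'from-C-to-G} with $\lambda\geq n$, which is outside the window. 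I would organize the write-up as: (1) record that $f(\cdot,k,q)$ is maximized over admissible integers at $\lambda_0:=\lceil\frac{k-1}{q}\rceil$ and is non-increasing for $\lambda\geq\lambda_0$, citing Claim~\ref{cl:properties}; (2) prove ``only if'' by reading off the bounds $f(\lambda+1,k,q)\leq\beta/r\leq f(\lambda,k,q)$ and using $\lambda\in[\lambda_0,n-1]$; (3) prove ``if'' by the largest-valid-$\lambda$ argument above. I'd keep the explicit binomial expressions implicit throughout, working entirely through $f$ and Claim~\ref{cl:properties}.
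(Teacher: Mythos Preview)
Your overall plan—exploit the monotonicity of $f(\lambda,k,q)$ from Claim~\ref{cl:properties} and sandwich $\beta/r$ between two consecutive terms of the non-increasing sequence $\{f(\lambda,k,q)\}_{\lambda\geq\lceil(k-1)/q\rceil}$—is precisely the paper's argument. The paper phrases it as covering the interval $[f(n-1,k,q),f(\lceil(k-1)/q\rceil,k,q)]$ by the consecutive subintervals $[f(\ell+1,k,q),f(\ell,k,q)]$; your ``largest $\lambda$ with $f(\lambda,k,q)\geq\beta/r$'' selects the same subinterval. Your ``if'' direction is correct and matches the paper.

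Your attempted resolution of the ``only if'' endpoint, however, contains a real error. You claim that for $\beta/r$ strictly below $f(n-1,k,q)$ one could only satisfy \eqref{eq:new'from-C-to-G} with $\lambda\geq n$. This is backwards: \eqref{eq:new'from-C-to-G} reads $f(\lambda,k,q)\geq\beta/r$, and since $f$ is non-increasing on the admissible range, lowering $\beta/r$ \emph{enlarges} the set of valid $\lambda$—in particular every $\lambda\in[\lceil(k-1)/q\rceil,n-1]$ then satisfies it. The constraint that bites when $\beta/r$ is small is \eqref{eq:new'from-G-to-C}, and for $\lambda=n-1$ that only asks $f(n,k,q)\leq\beta/r$. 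Hence any $\beta/r\in[f(n,k,q),f(n-1,k,q))$ is accommodated by $\lambda=n-1$ yet lies strictly below the stated lower endpoint $f(n-1,k,q)$. The paper's own proof does not confront this: it argues only the ``if'' direction, and its list of subintervals stops at $[f(n-1,k,q),f(n-2,k,q)]$, never treating the $\lambda=n-1$ subinterval $[f(n,k,q),f(n-1,k,q)]$. So your instinct that the endpoint deserved scrutiny was correct; the issue is genuine, and the sharp lower endpoint for a true iff (with $\lambda$ allowed up to $n-1$ per Claim~\ref{cl:lambda-bound}) is $f(n,k,q)$, not $f(n-1,k,q)$—you should not try to argue otherwise.
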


\begin{proof}
    Recall that we need $\lambda \geq \frac{k-1}{q}$, for an equilibrium to exist, and we also know that $f$ is decreasing when $\lambda$ satisfies this lower bound, by Claim \ref{cl:properties}. We can think of the function $f$ when $\lambda$ varies from $\lceil \frac{k-1}{q}\rceil$ to $n-1$, as creating the subintervals $[f(n-1, k, q), f(n-2, k, q)]$, $[f(n-2, k, q), f(n-3, k, q)]$, $\dots$, $[f(\lceil \frac{k-1}{q}\rceil +1, k, q) ,f(\lceil \frac{k-1}{q}\rceil, k, q)]$.
    Hence, if $\beta/r$ belongs to the range stated in the lemma, it belongs to one of these subintervals. And this means that there exists $\lambda$ such that $\beta/r \in [f(\lambda+1, k, q), f(\lambda, k, q)]$. But this precisely means that the constraints \eqref{eq:new'from-C-to-G} and \eqref{eq:new'from-G-to-C} are satisfied with this $\lambda$. We note that it is also possible that there are two consecutive values for $\lambda$ that can satisfy the constraints if $\beta/r$ is equal to one of the endpoints of the subintervals, as described in Lemma \ref{lem:2values}. 
\end{proof}


The next step is to understand the range of $\alpha/r$ for which there exists an equilibrium. This means that we have to deal now with the constraints \eqref{eq:new'from-C-to-L} and 
\eqref{eq:new'-L-and-G}. Note that both constraints imply an upper bound on $\alpha/r$, and  \eqref{eq:new'from-C-to-L} implies a dependence of this upper bound on $\beta$. 
Hence, by taking the minimum of these two bounds, and by abbreviating the binomial terms using the function $f$, we can conclude with the following result:

\begin{theorem}
\label{thm:C-and-F}
    Given a participation game with retraction, there exists a non-trivial equilibrium with $\lambda$ contributors and $n-\lambda$ free riders if and only if the following conditions are met:
    \begin{itemize}
        \item $n-1\geq \lambda \geq \frac{k-1}{q}$
        \item $\frac{\beta}{r} \in [f(\lambda+1, k, q), f(\lambda, k, q)]$
        \item $0 \leq \frac{\alpha}{r} \leq \min\{ q\cdot \sum_{j=k-1}^{\lambda-1}f(\lambda, j+1, q) -q\frac{\beta}{r}, ~q\cdot \sum_{j=k}^{\lambda} f(\lambda+1, j+1, q)\}$
    \end{itemize}
    Moreover, whenever the above constraints are satisfied, there is either a unique value for $\lambda$ or two consecutive values that can satisfy them.
\end{theorem}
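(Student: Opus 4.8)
The plan is to prove Theorem~\ref{thm:C-and-F} by assembling the pieces already established for the individual constraints. Recall that an equilibrium with $\lambda$ contributors and $n-\lambda$ free riders (and no abstainers) exists if and only if the four inequalities \eqref{eq:new'from-C-to-G}, \eqref{eq:new'from-G-to-C}, \eqref{eq:new'from-C-to-L} and \eqref{eq:new'-L-and-G} all hold. The strategy is: (i) restate the $f$-function form of these four constraints; (ii) show the first bullet is necessary via Claim~\ref{cl:lambda-bound}; (iii) show the second bullet captures exactly constraints \eqref{eq:new'from-C-to-G}--\eqref{eq:new'from-G-to-C}; (iv) show the third bullet captures exactly constraints \eqref{eq:new'from-C-to-L} and \eqref{eq:new'-L-and-G}; and (v) deduce the uniqueness/two-consecutive-values addendum.

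First I would rewrite \eqref{eq:new'from-C-to-G} and \eqref{eq:new'from-G-to-C} as $f(\lambda, k, q) \ge \beta/r$ and $f(\lambda+1, k, q) \le \beta/r$, using Definition~\ref{def:f}; together these say exactly $\beta/r \in [f(\lambda+1,k,q), f(\lambda,k,q)]$, which is the second bullet. (One must note that for $\lambda \ge (k-1)/q$, Claim~\ref{cl:properties} guarantees $f(\lambda+1,k,q)\le f(\lambda,k,q)$, so the interval is nonempty and well-formed.) The necessity of $\lambda \ge (k-1)/q$ is Claim~\ref{cl:lambda-bound}, and $\lambda \le n-1$ is immediate since $F \ne \emptyset$; that handles the first bullet. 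For the third bullet, I would expand the binomial sums in \eqref{eq:new'from-C-to-L} and \eqref{eq:new'-L-and-G} using the identity ${\lambda-1 \choose j} q^{j}(1-q)^{\lambda-1-j} = f(\lambda, j+1, q)$ and similarly for the $\lambda+1$ term, divide through by $r$, and rearrange each to an upper bound on $\alpha/r$: \eqref{eq:new'from-C-to-L} gives $\alpha/r \le q\sum_{j=k-1}^{\lambda-1} f(\lambda, j+1, q) - q\beta/r$, and \eqref{eq:new'-L-and-G} gives $\alpha/r \le q\sum_{j=k}^{\lambda} f(\lambda+1, j+1, q)$; taking the minimum, together with the trivial $\alpha/r \ge 0$, yields the stated interval. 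Conversely, if all three bullets hold, reversing these equivalences shows all four original inequalities are satisfied, so the corresponding profile is an equilibrium.

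For the final sentence, I would invoke the already-proved dichotomy: by the unnamed lemma preceding Lemma~\ref{lem:2values}, if \eqref{eq:new'from-C-to-G} and \eqref{eq:new'from-G-to-C} are both strict then there is at most one $\lambda \in [(k-1)/q, n)$ satisfying them; and by Lemma~\ref{lem:2values}, the only way two values of $\lambda$ can both satisfy the constraints is the corner case where $f(\lambda,k,q)=\beta/r$ exactly, in which case the two values are consecutive. Since the second bullet of the theorem is precisely constraints \eqref{eq:new'from-C-to-G}--\eqref{eq:new'from-G-to-C}, the set of admissible $\lambda$ is governed entirely by these two lemmas, giving either a unique value or two consecutive ones.

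The main obstacle I anticipate is purely bookkeeping rather than conceptual: one must be careful that the arithmetic rearrangement of \eqref{eq:new'from-C-to-L} into the $\alpha/r$ bound correctly tracks the $-q\beta/r$ term (the $\beta q$ on the right-hand side of \eqref{eq:from-C-to-L} is what produces it), and that the index shift in the binomial-to-$f$ translation is done consistently for both the $\lambda$-sum and the $\lambda+1$-sum. A second minor subtlety is confirming that when $\alpha/r$ lies in the stated interval but $\beta/r$ is at an endpoint, the existence of \emph{two} consecutive admissible $\lambda$ is consistent with the third bullet holding for (at least) one of them — but since the theorem only claims existence for "$\lambda$ contributors" for each admissible $\lambda$ separately, and the third bullet is checked per $\lambda$, no genuine conflict arises. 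Everything else is a direct splice of Claims~\ref{cl:lambda-bound}, \ref{cl:properties} and Lemmas~\ref{lem:2values}, \ref{lem:beta-range}.
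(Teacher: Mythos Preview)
Your proposal is correct and follows essentially the same approach as the paper: the paper does not give a formal proof block for Theorem~\ref{thm:C-and-F} but simply states that it follows by rewriting \eqref{eq:new'from-C-to-G}--\eqref{eq:new'-L-and-G} in terms of $f$, invoking Claim~\ref{cl:lambda-bound} for the first bullet, reading off the second bullet from \eqref{eq:new}--\eqref{eq:new'}, and taking the minimum of the two upper bounds on $\alpha/r$ coming from \eqref{eq:new'from-C-to-L} and \eqref{eq:new'-L-and-G} for the third. Your write-up is, if anything, more explicit than the paper's---in particular, your index-shift identity ${\lambda-1\choose j}q^j(1-q)^{\lambda-1-j}=f(\lambda,j+1,q)$ and the analogous one for the $\lambda+1$ sum are exactly what the paper means by ``abbreviating the binomial terms using the function $f$,'' and your appeal to the unnamed lemma plus Lemma~\ref{lem:2values} for the last sentence matches the paper's intended justification.
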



We illustrate Theorem \ref{thm:C-and-F} with the following example. 
\begin{example}
Suppose $k=13, q=0.3$, and $n=60$. 
Note that $\frac{k-1}{q} = 40$. Hence, a necessary condition to have non-trivial equilibria is that $\frac{\beta}{r} \in [f(n-1, k, q), f(\frac{k-1}{q}, k, q)] = [0.0355, 0.1366]$. Suppose that we choose $\beta$ and $r$ so that $\frac{\beta}{r} = 0.1$. Then by looking at the range of $f$, we can verify that $\lambda$ should be equal to $\lambda = 48$, i.e. one can see that $\frac{\beta}{r} \in [f(49, k, q), f(48, k, q)]$. Then, by looking at the constraints for $\alpha$ we conclude that if $\frac{\alpha}{r}\leq 0.1382$, there exists a  non-trivial equilibrium with 48 contributors among the 60 players. Thus, in this game, if we set the reward appropriately (roughly 10 times more than each of the cost parameters), we have an equilibrium with high participation.  

\end{example}


\paragraph{The all-in profile.} To finish with the analysis of this section, we also need to check if we can have an equilibrium where everybody participates, i.e., with $|C|=n$. In this case we have a simpler set of constraints and we can identify again precise bounds on the relation between $r$ and the costs $\alpha, \beta$, that should hold. In particular, we only need to utilize Equations \eqref{eq:new'from-C-to-G} and \eqref{eq:new'from-C-to-L}, since there is no other type of players. This implies the following.   

\begin{restatable}{rlm}{allout}
\label{lem:all-out}
    The all-in profile is an equilibrium if and only if the costs $\alpha, \beta$ and the monetary reward $r$ satisfy the conditions below. Furthermore, there always exists a non-empty range for the reward $r$, dependent on the costs $\alpha, \beta$, so that the conditions are satisfied. 

\noindent\begin{minipage}{0.3\linewidth}
    \begin{eqnarray}
\beta/r & \leq & f(n, k, q)   \label{eq:all-in-C-to-F}
\end{eqnarray}
\end{minipage}%
\begin{minipage}{0.1\linewidth}
\end{minipage}%
\begin{minipage}{0.6\linewidth}
    \begin{eqnarray}
\alpha/r  & \leq & q\cdot \left(\sum_{j=k-1}^n f(n, j+1, q) -\beta/r\right) \label{eq:all-in-C-to-A}
\end{eqnarray}
\end{minipage}\par\vspace{\belowdisplayskip}

\end{restatable}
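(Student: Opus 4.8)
The plan is to specialize the general equilibrium conditions \eqref{eq:from-C-to-G}--\eqref{eq:from-L-to-G} to the profile in which all $n$ players lie in $C$, so that $F=A=\emptyset$. Because there are no free riders and no abstainers, every inequality in that list concerning a deviation \emph{out of} $F$ or \emph{out of} $A$ is vacuous, and the only constraints that survive are the two forbidding a contributor from switching to free riding (\eqref{eq:from-C-to-G}) or to abstaining (\eqref{eq:from-C-to-L}); after setting $v=0$ these are exactly the $\lambda=n$ instances of \eqref{eq:new'from-C-to-G} and \eqref{eq:new'from-C-to-L}. So the ``if and only if'' reduces to writing those two out. First I would recompute the two relevant success probabilities for $\lambda=n$: with the other $n-1$ players all participating, $p(s_{-i})$ is the probability that at least $k$ of them are eligible while $p(i,s_{-i})$ is the probability that at least $k-1$ of them are eligible, so their difference is the probability that exactly $k-1$ of $n-1$ are eligible, namely $\binom{n-1}{k-1}q^{k-1}(1-q)^{n-k}=f(n,k,q)$. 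Substituting into \eqref{eq:from-C-to-G} gives precisely \eqref{eq:all-in-C-to-F}. Substituting the expansion $p(i,s_{-i})=\sum_{j=k-1}^{n-1}\binom{n-1}{j}q^{j}(1-q)^{n-1-j}=\sum_{j=k-1}^{n}f(n,j+1,q)$ (the $j=n$ term being zero since $\binom{n-1}{n}=0$) into \eqref{eq:from-C-to-L} with $v=0$, then dividing by $q$ and rearranging, gives exactly \eqref{eq:all-in-C-to-A}. This completes the characterization.

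For the ``furthermore'' clause I would observe that both displayed conditions are \emph{lower} bounds on $r$: \eqref{eq:all-in-C-to-F} is equivalent to $r\ge \beta/f(n,k,q)$, and \eqref{eq:all-in-C-to-A} is equivalent to $r\ge(\alpha+q\beta)/(q\,\Sigma)$ where $\Sigma=\sum_{j=k-1}^{n}f(n,j+1,q)$; both denominators are strictly positive whenever $q\in(0,1)$ and $k\le n$, which always holds. Hence any $r$ that is at least the maximum of these two quantities satisfies both conditions, so the set of valid rewards is a nonempty, right-unbounded interval depending only on $\alpha,\beta$ (and the fixed parameters $n,k,q$). Note also the monotonicity remark the authors care about: since $f(n,k,q)$ appears in the denominator of the first bound and $\beta/r$ enters \eqref{eq:all-in-C-to-A} with a negative sign, decreasing $\beta$ only loosens both requirements.

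I do not expect a genuine obstacle here. The only points requiring care are (i) the bookkeeping of which of the six equilibrium inequalities remain non-vacuous when two of the three strategy classes are empty, and (ii) the trivial index check that the upper summation limit $n$ in \eqref{eq:all-in-C-to-A} contributes nothing, so that the stated condition agrees with the $\lambda=n$ instance of \eqref{eq:new'from-C-to-L}. One could instead try to invoke Theorem \ref{thm:C-and-F}, but since that theorem was stated for $\lambda<n$ (there are no free riders when $\lambda=n$, so constraints \eqref{eq:from-G-to-C} and \eqref{eq:from-G-to-L} have no analogue), it is cleaner to derive the $\lambda=n$ case directly from \eqref{eq:from-C-to-G} and \eqref{eq:from-C-to-L} as above.
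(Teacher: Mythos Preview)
Your proposal is correct and follows essentially the same route as the paper: both arguments specialize to the all-in profile, observe that only the contributor deviations \eqref{eq:new'from-C-to-G} and \eqref{eq:new'from-C-to-L} (equivalently \eqref{eq:from-C-to-G} and \eqref{eq:from-C-to-L} with $v=0$) survive, and then note that each condition is a lower bound on $r$ with a strictly positive threshold. Your write-up is in fact a bit more explicit than the paper's on the bookkeeping points (the vanishing $j=n$ term and the rewriting of \eqref{eq:all-in-C-to-A} as $r\ge(\alpha+q\beta)/(q\Sigma)$), but the underlying argument is identical.
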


We provide the proof in Appendix \ref{proofallout} for space reasons.

For completeness, we now summarize our findings for the existence of equilibria.

\begin{corollary}
    Consider a participation game with retraction.
    \begin{itemize}
        \item The all-out profile is always an equilibrium (as long as $k>1$).
        \item There exist equilibria with $\lambda$ contributors and $n-\lambda$ free riders only when $\lambda\in \{\lceil\frac{k-1}{q}\rceil,\dots, n-1\}$, and as long as the reward $r$, compared to the cost parameters, satisfies the conditions described in Theorem \ref{thm:C-and-F}. 
        \item There exists a non-empty range for the reward $r$, so that the all-in profile is an equilibrium, as described in Lemma \ref{lem:all-out}.
    \end{itemize}
\end{corollary}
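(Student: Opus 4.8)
The plan is to assemble the three bullets from the results already established in this subsection, handling each admissible partition of $N$ into contributors $C$, free riders $F$, and abstainers $A$ in turn. Since the corollary consolidates earlier statements, the work is organizational rather than computational.

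For the first bullet, I would verify directly that the all-out profile ($C=F=\emptyset$) is an equilibrium whenever $k>1$. When every other player abstains, nobody is contributing, so $p(s_{-i})=0$, and a lone deviator cannot reach the threshold, giving $p(i,s_{-i})=0$ as well; this last point is exactly where $k-1\geq 1$, i.e. $k>1$, is used. Reading off Table \ref{tab:tab-utilities} with these probabilities, the deviation to the contributing strategy yields utility $-q\beta-\alpha<0$, the deviation to the free-riding strategy yields $-\alpha<0$, while abstaining yields $0$. Hence no unilateral deviation is profitable.

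For the second bullet, I would first invoke Theorem \ref{thm:no-C-and-A} to rule out any equilibrium in which contributors and abstainers coexist. Consequently, any non-trivial equilibrium (one with $C\neq\emptyset$) that is not the all-in profile must have $A=\emptyset$ and hence consist solely of contributors and free riders with $0<\lambda<n$. For such profiles, Claim \ref{cl:lambda-bound} gives the two-sided bound $n-1\geq\lambda\geq\frac{k-1}{q}$; since $\lambda$ is an integer, this is precisely $\lambda\in\{\lceil\frac{k-1}{q}\rceil,\dots,n-1\}$. The exact constraints on the reward $r$ relative to $\alpha,\beta$ (membership of $\beta/r$ in $[f(\lambda+1,k,q),f(\lambda,k,q)]$ together with the upper bound on $\alpha/r$) are exactly those recorded in Theorem \ref{thm:C-and-F}, which also notes that at most two consecutive values of $\lambda$ can occur. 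Along the way I would observe, for completeness, that the case $C=\emptyset$ with $F\neq\emptyset$ cannot be an equilibrium: with no contributors progress is never made, so each free rider receives $-\alpha<0$ and strictly prefers to abstain, confirming that the only equilibrium with $C=\emptyset$ is the all-out profile.

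The third bullet is immediate from Lemma \ref{lem:all-out}, which characterizes the all-in profile as an equilibrium exactly under conditions \eqref{eq:all-in-C-to-F} and \eqref{eq:all-in-C-to-A} and asserts the existence of a non-empty range for $r$ satisfying them. I do not expect a genuine technical obstacle here; the only point requiring care is the bookkeeping of the partition into $C$, $F$, $A$, namely ensuring that Theorem \ref{thm:no-C-and-A} together with the elementary $C=\emptyset$ argument leaves exactly the three families of profiles listed, and that the integrality of $\lambda$ correctly converts the bound $\lambda\geq\frac{k-1}{q}$ into the stated ceiling.
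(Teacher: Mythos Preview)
Your proposal is correct and mirrors the paper's own treatment: the corollary is presented there as a summary of the preceding results (Theorem \ref{thm:no-C-and-A}, Claim \ref{cl:lambda-bound}, Theorem \ref{thm:C-and-F}, and Lemma \ref{lem:all-out}) without a separate proof, and your assembly of these pieces---including the direct check for the all-out profile and the observation that $C=\emptyset$, $F\neq\emptyset$ cannot occur---is exactly the intended justification.
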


\paragraph{Discussion and comparisons with Section \ref{sec:simple}.} We conclude by highlighting that the equilibrium analysis of games with retraction is significantly different from the simpler games we discussed in Section \ref{sec:simple}. The major difference is that as we saw both in Sections \ref{sec:simple-v=0} and \ref{sec:asym}, players of the same type, i.e., with the same selection probability have to use the same strategy at equilibrium. In contrast, in the richer games we analyzed in this section, players with the same selection probability can utilize different strategies at equilibrium. On the positive side, for both classes of games, we see that beyond the trivial equilibrium, all other equilibria have a relatively high number of contributors, and hence, participation can be incentivized with the use of appropriate rewards.

\section{Participation games with universal payments}
\label{sec:algorand-model}

In this section, we take one more step in the  simplification 
of what information the system should record in order to give rewards as we forfeit the ability of the system to detect even if someone was eligible. 
This means that the protocol rewards any player who decides to participate, as long as progress is made. 
In other words, regardless of whether a player was eligible or not in a given round, she can claim a reward if she simply decided to participate, and if a sufficient amount of players completed their tasks for the system to make progress. 
Obviously, this is a simplest of mechanisms, and all the protocol needs to do is to check who registered to participate. 

Intuitively, reward schemes of this form do not seem to be appropriate choices from the protocol's perspective, as they could end up paying even more free riders. Indeed, we will shortly demonstrate that although such mechanisms may admit equilibria where the job gets done, they can induce more unfair payments to the players who actually did their duty and also result in a higher total expenditure, compared to the models of the previous sections. For brevity in the presentation, we will exhibit our comparisons only for the case where $v=0$. 
In the following exposition, we consider separately the models with and without retraction.\footnote{ We note that both cases are relevant for consideration as they capture the setting when a player cannot avoid any tasks when it participates (no retraction is possible) and when they can avoid some tasks (possibility of retraction in order to decrease costs).}

\subsection{Games with universal payments without the possibility of retraction}

Consider the following change in the model of Section \ref{sec:simple}: rewards are given to all users who declared participation as long as progress is made, and regardless of their subsequent eligibility. The possible scenarios that may occur can be seen in Table \ref{tab:tab-advance-simple}.
Hence, with $v=0$, and under a strategy profile $s = (s_1,\dots, s_n)$, the utility of a player $i$, who has chosen to participate  is
$$ u_i(s) = r\cdot Pr[\mbox{progress is made}] - \alpha = r\cdot p(s) - \alpha $$

\begin{table}[tbp]
\begin{center}
\begin{tabular}{|c|c|c|}
\hline
Possible scenarios & Progress is made & No Progress \\ \hline
Abstain & $v$ & $0$   \\ \hline
Participate (regardless of eligibility)  & $r + v - \alpha$ & $- \alpha$ \\ \hline
\end{tabular}
\caption{Possible events and corresponding rewards in a participation game with universal payments.} \label{tab:tab-advance-simple}
\end{center}
\end{table}

In analogy to our results in Section \ref{sec:simple-v=0}, we also obtain here that there can be at most 2 equilibria, the all-in and the all-out profile.

\begin{theorem}[Characterization]
\label{thm:simple-advance}
When $v=0$, we can have at most two Nash equilibria as follows:
\begin{itemize}
    \item The trivial (all-out) profile $(\lazy, \lazy, \lazy)$, where nobody contributes, is a pure Nash equilibrium for $k>1$, or when $k=1$ and $q\leq {\alpha \over r}$.
    \item There is no equilibrium that has both a positive number of contributors and a positive number of abstainers.
    \item The all-in profile, where everybody participates, is an equilibrium if and only if:
\begin{equation}
\label{eq:advance-simple}
 \sum_{j=k}^{n}{n \choose j} q^{j} (1-q)^{n-j} \geq \frac{\alpha}{r} \end{equation}
\end{itemize}
\end{theorem}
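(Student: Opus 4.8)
\textbf{Proof proposal for Theorem \ref{thm:simple-advance}.}

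The plan is to mirror the structure of the proof of Theorem \ref{thm:simplemodel}, exploiting the fact that with universal payments the utility of a participating player depends only on whether progress is made, which in turn depends only on how many of the \emph{other} participants are eligible. First I would dispose of the two easy clauses. The all-out profile is an equilibrium exactly as before: when $k>1$ a single deviator cannot trigger progress, so her utility drops from $0$ to $-\alpha$; when $k=1$ a lone deviator makes progress iff she is eligible, so the gain from deviating is $q r - \alpha$, which is nonpositive precisely when $q \le \alpha/r$. For the all-in profile, the only relevant constraint is that a participant has no incentive to abstain. If player $i$ participates, progress is made iff at least $k$ of the remaining $n-1$ players are eligible \emph{or} ($k-1$ of them are eligible and $i$ is eligible); comparing with the abstain payoff $p(\lazy, s_{-i})\cdot 0 = 0$ (here $v=0$), and noting the abstain utility is $0$ since progress with all others in means nothing extra to the abstainer when $v=0$, the participate-vs-abstain inequality reduces to $r \cdot \Pr[\text{progress} \mid i \text{ participates, all others in}] \ge \alpha$. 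The probability that at least $k$ of the $n-1$ other players are eligible, plus $q$ times the probability that exactly $k-1$ are, telescopes to $\sum_{j=k}^n \binom{n}{j} q^j (1-q)^{n-j}$ by the same Pascal-type identity used in Lemma \ref{lem:no-intermediate}; this gives the stated bound.

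The substantive clause is the middle one: no equilibrium has both contributors and abstainers. Suppose such an equilibrium exists with $|C| = \lambda$, $0 < \lambda < n$. The key observation is that under universal payments, \emph{every} participant (eligible or not) faces the same deviation incentive, because the reward no longer depends on one's own eligibility but only on whether progress is made. So for a contributor $i \in C$, not wanting to abstain requires $r \cdot p_C \ge \alpha$, where $p_C$ is the probability that progress is made by the $\lambda$ participants (this already accounts for $i$'s own eligibility being irrelevant to $i$'s reward but relevant to whether progress happens — more precisely $i$ abstaining removes $i$ from the pool, so the comparison is between progress-probability with $i$ in the pool and with $i$ removed). For an abstainer $j \in A$, not wanting to participate requires $r \cdot p'_C \le \alpha$, where $p'_C$ is the probability of progress with $\lambda + 1$ participants. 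Since adding one more Bernoulli$(q)$ trial to the eligibility pool can only (weakly) increase the probability of reaching the threshold $k$, we have $p'_C \ge p_C$; combined with strict monotonicity of the binomial tail in the number of trials (valid whenever $\lambda \ge k-1$, and if $\lambda < k-1$ then $p_C = 0 < \alpha/r$ already contradicts the contributor constraint), we get $r p'_C > r p_C$, so the two inequalities $r p_C \ge \alpha$ and $r p'_C \le \alpha$ cannot both hold. This is essentially the universal-payments analogue of Lemma \ref{lem:no-intermediate}: the identity there showing $\Sigma_2 > \Sigma_1$ is exactly the statement that the binomial tail strictly increases when one trial is added.

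I would make the telescoping explicit via the identity $\sum_{j=k}^{\lambda+1}\binom{\lambda+1}{j}q^j(1-q)^{\lambda+1-j} = \sum_{j=k}^{\lambda}\binom{\lambda}{j}q^j(1-q)^{\lambda-j} + \binom{\lambda}{k-1}q^k(1-q)^{\lambda+1-k}$, which is the $k \to k$ shift of Lemma \ref{lem:no-intermediate} and follows from Pascal's rule $\binom{\lambda+1}{j} = \binom{\lambda}{j} + \binom{\lambda}{j-1}$ applied termwise; the extra nonnegative term $\binom{\lambda}{k-1}q^k(1-q)^{\lambda+1-k}$ is the strict gap, and it is strictly positive whenever $q \in (0,1)$ and $\lambda \ge k-1$. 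The only real care needed is the bookkeeping of \emph{whose} eligibility matters: in the basic model of Section \ref{sec:simple} a contributor's own reward hinges on her own eligibility, producing the asymmetric terms $p(i,s_{-i})$ vs.\ $p(s_{-i})$, whereas here the reward is flat, so both the contributor and abstainer constraints are comparisons of progress-probabilities differing only in pool size by one — this is what collapses the analysis and makes the "no mixed equilibrium" statement fall out of a single monotonicity fact rather than a more delicate inequality. I expect this reconciliation of which probabilities appear to be the main (minor) obstacle; once it is set up correctly, the rest is the same Pascal-identity computation already invoked for Theorem \ref{thm:simplemodel}.
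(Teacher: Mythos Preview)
Your proposal is correct and follows essentially the same route as the paper: set up the contributor and abstainer constraints as $r\cdot p_\lambda \ge \alpha$ and $r\cdot p_{\lambda+1}\le \alpha$, then invoke the Pascal-type identity (the shifted form of Lemma~\ref{lem:no-intermediate}) to see that $p_{\lambda+1}>p_\lambda$, yielding the contradiction. The only cosmetic difference is that you spell out the monotonicity interpretation and the telescoping for the all-in bound explicitly, whereas the paper simply cites Lemma~\ref{lem:no-intermediate} with $\lambda+1$ in place of $\lambda$; the content is the same.
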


\begin{proof}
It is trivial to see that the all-out profile is an equilibrium. Furthermore, for the all-in profile to be an equilibrium, it suffices to check what values of the reward $r$ do not provide incentives for a player who participates to abstain. From the definition of the game, this is true precisely when the probability of making progress with $n$ participants is at least $\alpha/r$. But this is equivalent to Equation \eqref{eq:advance-simple}.

It remains to show that we cannot have any other equilibrium.
For the sake of contradiction, suppose that there is another equilibrium profile $s$, with $\lambda$ contributors and $n-\lambda$ users opting out, where $0< \lambda < n$. In order to have such an equilibrium, it should hold that abstainers have no incentive to participate and the contributors also do not have an incentive to abstain. These two constraints correspond to the following two inequalities.

\begin{eqnarray}
\sum_{j=k}^{\lambda}{\lambda \choose j} q^{j} (1-q)^{\lambda-j} & \geq & \frac{\alpha}{r}  \label{eq:advance_from-C-to-A} \\
\sum_{j=k}^{\lambda+1}{\lambda+1 \choose j} q^{j} (1-q)^{\lambda+1-j} & \leq & \frac{\alpha}{r} \label{eq:advance_from-A-to-C} 
\end{eqnarray}

But we can now apply Lemma \ref{lem:no-intermediate} for the LHS of \eqref{eq:advance_from-A-to-C} (using the value of $\lambda+1$ instead of $\lambda$). This directly establishes that we cannot satisfy simultaneously Equations \eqref{eq:advance_from-C-to-A} and \eqref{eq:advance_from-A-to-C} and hence the proof is complete.

\end{proof}

Moving on, we would like to 
compare the game of this section, against the original game in Section \ref{sec:simple-v=0}. 
We will do this comparison in terms of the total expenditure needed for the protocol to have the all-in profile as an equilibrium, since there can be no other non-trivial equilibria. Clearly, in both games, given the characterizations of Theorem \ref{thm:simplemodel} and Theorem \ref{thm:simple-advance}, if we make the reward $r$ large enough, we can enforce that the all-in profile is an equilibrium in both games. But for any fixed such $r$, the total expenditure for the game with universal payments would be $n\cdot r$, whereas the original game will have a total expected expenditure of $q\cdot n\cdot  r$, which is strictly smaller. Furthermore, it is also natural to focus on the minimum reward needed in each game, so as to have the all-in profile as an equilibrium. Again the comparison yields a higher total expenditure for the model with universal payments, which is shown in the following corollary.

\begin{corollary} 
\label{cor:simple-advance}
Suppose that $q<1$.
\begin{itemize}
\item[(i)] Fix a common value for the reward $r$ such that the all-in profile is an equilibrium both in the original game and in the game with universal payments. Then the total expenditure in the game with universal payments (which is $n\cdot r$) is strictly higher than the total expected expenditure in the original game, which is equal to $q\cdot n\cdot  r$.
\item[(ii)] Let $r_{min}$ (resp. $r_{min}'$) be the minimum possible reward that makes the all-in profile an equilibrium in the original game (resp. in the game with universal payments). 
Then, the total expenditure is strictly higher in the game with universal payments, under these reward schemes.
    \end{itemize}
\end{corollary}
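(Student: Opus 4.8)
The plan is to reduce both claims to elementary properties of binomial tail probabilities, reading the minimum rewards off the characterizations in Theorems~\ref{thm:simplemodel} and \ref{thm:simple-advance}. Part~(i) needs no work beyond the accounting already spelled out before the statement: at any reward $r$ that sustains the all-in profile in both games, the universal-payments protocol pays each of the $n$ participants and so spends $n\cdot r$, while the original protocol pays (in expectation) only the $q\cdot n$ eligible participants and so spends $q\cdot n\cdot r$; since $q<1$ by hypothesis, $q\cdot n\cdot r<n\cdot r$, which is the claim.

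For part~(ii) I would first make the two minimum rewards explicit. By Theorem~\ref{thm:simplemodel}, the all-in profile is an equilibrium of the original game exactly when $q\cdot r\cdot P_1\ge\alpha$, where
\[
P_1 \;=\; \sum_{j=k-1}^{n-1}\binom{n-1}{j}q^{j}(1-q)^{n-1-j}\;=\;\Pr\big[\mathrm{Bin}(n-1,q)\ge k-1\big] ,
\]
so $r_{min}=\alpha/(q P_1)$; by Theorem~\ref{thm:simple-advance}, the all-in profile is an equilibrium of the universal-payments game exactly when $r\cdot P_2\ge\alpha$, where
\[
P_2 \;=\; \sum_{j=k}^{n}\binom{n}{j}q^{j}(1-q)^{n-j}\;=\;\Pr\big[\mathrm{Bin}(n,q)\ge k\big] ,
\]
so $r'_{min}=\alpha/P_2$. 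Both are finite and positive (the defining conditions are closed, and $0<q<1$ together with $k\le n$ force $P_1,P_2>0$). The expenditures to be compared are $n\cdot r'_{min}=n\alpha/P_2$ for the universal-payments game and $q\cdot n\cdot r_{min}=n\alpha/P_1$ for the original game, so the asserted strict inequality is equivalent to the combinatorial statement $P_1>P_2$.

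It remains to show $\Pr[\mathrm{Bin}(n-1,q)\ge k-1]>\Pr[\mathrm{Bin}(n,q)\ge k]$, for which I would use a one-step coupling: write a random variable with the $\mathrm{Bin}(n,q)$ distribution as $X+Y$ with $X\sim\mathrm{Bin}(n-1,q)$ and $Y\sim\mathrm{Bernoulli}(q)$ independent, and condition on $Y$:
\[
\Pr[X+Y\ge k]\;=\;q\,\Pr[X\ge k-1]+(1-q)\,\Pr[X\ge k]\;=\;\Pr[X\ge k-1]-(1-q)\,\Pr[X=k-1].
\]
Since $0<q<1$ and $\Pr[X=k-1]=\binom{n-1}{k-1}q^{k-1}(1-q)^{n-k}>0$ (using $0\le k-1\le n-1$), the subtracted term is strictly positive, so $P_2<P_1$, completing the proof. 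I do not expect a genuine obstacle here; the only point requiring care is index bookkeeping --- the ``$k-1$ versus $k$'' and ``$n-1$ versus $n$'' shifts between the two equilibrium conditions reflect that in the original game the eligible player conditions on her own contribution whereas in the universal-payments game she does not --- so one must check that the reduction lands exactly on $P_1>P_2$ rather than on some off-by-one variant.
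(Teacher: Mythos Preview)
Your proof is correct and follows essentially the same approach as the paper: both reduce part~(ii) to the inequality $P_1>P_2$ between the two tail probabilities, and both establish it via the one-step decomposition $p(s^*)=q\cdot p(i,s^*_{-i})+(1-q)\cdot p(s^*_{-i})$ (which is exactly your coupling $\Pr[X+Y\ge k]=q\Pr[X\ge k-1]+(1-q)\Pr[X\ge k]$), together with $p(i,s^*_{-i})>p(s^*_{-i})$. Your write-up is in fact a bit more careful than the paper's about the strictness of the final inequality.
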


\begin{proof}
    The proof of (i) follows from the discussion before the statement of the corollary. For (ii), let $s^* = (P,\dots, P)$ be the all-in profile where everybody chooses to participate. Consider first the original game of Section \ref{sec:simple}. From Theorem \ref{thm:simplemodel}, it follows that
    $$ r_{min} = \frac{\alpha}{q\cdot p(i, s^*_{-i})}$$
    The above expression holds for any $i$ (by symmetry, it does not make a difference which player $i$ we use). For the universal payments, the minimum viable reward to make the all-in profile an equilibrium is implied by Theorem \ref{thm:simple-advance}, and is equal to 
    $$ r_{min}' = \frac{\alpha}{p(s^*)} = \frac{\alpha}{q\cdot p(i, s^*_{-i}) + (1-q)\cdot p(s^*_{-i})}$$
    
    Clearly, $r_{min}' < r_{min}$, when $q<1$. The total expenditure in the universal payments game is $n\cdot r_{min}'$. On the other hand the expected expenditure at the original model is 
    $$q\cdot n\cdot r_{min} = n\cdot \frac{\alpha}{p(i, s^*_{-i})} \leq n\cdot \frac{\alpha}{q\cdot p(i, s^*_{-i}) + (1-q)\cdot p(s^*_{-i})} $$

The last inequality above holds because $p(i, s^*_{-i}) >  p(s^*_{-i})$. This completes the proof.
\end{proof}

We end this subsection by highlighting one more negative aspect of the mechanism with universal payments. As seen within the proof of Corollary \ref{cor:simple-advance}, it holds that $r_{min}' < r_{min}$. This means that we can make the all-in profile an equilibrium using a smaller reward per user, under the universal payment scheme. Therefore, not only the protocol has a higher total expenditure, but it also gives less rewards to the people who actually contributed to make progress in comparison to the original game.

\subsection{Games with universal payments and retraction}
\label{sec:univeral+retraction}

In this subsection, our goal is to produce analogous conclusions to Corollary \ref{cor:simple-advance} for the games with retraction. 
Hence, consider the game defined in Section \ref{sec:richer-model}, and suppose that we modify it by having universal payments for anybody who chose to participate, regardless of eligibility. The possible scenarios that can occur and the corresponding utility are shown in Table~\ref{tab:tab-events-algorand}.  

\begin{table}[tbp]
\begin{center}
\begin{tabular}{|c|c|c|}
\hline
Scenarios & Progress is made & No Progress \\ \hline
Abstain & $v$ & $0$   \\ \hline
Declares participation, not eligible & $r + v - \alpha$ & $-\alpha$ \\ \hline
Declares participation, eligible and completes tasks  & $r+v - \alpha-\beta$ & $- \alpha-\beta$ \\ \hline
Declares participation, eligible and retracts & $r+v - \alpha$ & $- \alpha$ \\ \hline
\end{tabular}
\caption{Possible events and corresponding rewards in a participation game with retraction and universal payments.} \label{tab:tab-events-algorand}
\end{center}
\end{table}
 
Given a profile $s_{-i}$ for all players except $i$, the expected utility of player $i$, for each one of her pure strategies is described in Table~\ref{tab:tab-utilities-algorand}.

\begin{table}[tbp]
\begin{center}
\begin{tabular}{|l|c|}
\hline
Action of player $i$ & Expected utility of pl. $i$, given $s_{-i}$   \\ \hline
Abstain & $p(s_{-i})v$    \\ \hline
Participate, contribute if eligible & $(1-q)p(s_{-i})(r+v) + q[p(i, s_{-i})(r+v) -\beta] - \alpha$  \\ \hline
Participate, don't contribute & $p(s_{-i})(r+v) - \alpha$\\ \hline
\end{tabular}
\caption{Expected utility under the possible events for a player $i$.} \label{tab:tab-utilities-algorand}
\end{center}
\end{table}

As in the analysis of Section \ref{sec:richer-model}, we can discount the event that there exist equilibria with contributors and abstainers (cf. Theorems \ref{thm:no-C-and-A} and  \ref{thm:simple-advance}.
)

We write the equilibrium constraints, that need to hold for an equilibrium with $\lambda$ contributors and $n-\lambda$ free riders, where $\lambda>0$ and $\lambda < n$. The calculations yield the following system.

\begin{eqnarray}
f(\lambda, k, q) & \geq & \frac{\beta}{r} ~~ \label{eq:algorand-from-C-to-G}  \\
f(\lambda+1, k, q) & \leq & \frac{\beta}{r} ~~ \label{eq:algorand-from-G-to-C} \\
r\cdot \left( q\cdot f(\lambda, k, q) +  \sum_{j=k}^{\lambda-1}f(\lambda, j+1, q)\right) & \geq & \alpha + \beta q ~~ \label{eq:algorand-from-C-to-L}  \\
r\cdot \sum_{j=k}^{\lambda}f(\lambda+1, j+1, q)& \geq & \alpha ~~ \label{eq:algorand-from-G-to-L} 
\end{eqnarray}

If we substitute the terms in the above equations, this boils down to the following system.

\begin{eqnarray}
{\lambda-1 \choose k-1} q^{k-1} (1-q)^{\lambda-k}& \geq & \frac{\beta}{r} ~~ \label{eq:new-algorand-from-C-to-G}  \\
{\lambda \choose k-1} q^{k-1} (1-q)^{\lambda-k+1}& \leq & \frac{\beta}{r} ~~ \label{eq:new-algorand-from-G-to-C} \\
r\cdot \left( q\cdot {\lambda-1 \choose k-1} q^{k-1} (1-q)^{\lambda-k} + \sum_{j=k}^{\lambda-1}{\lambda-1 \choose j} q^{j} (1-q)^{\lambda-1-j} \right)& \geq & \alpha + \beta q ~~ \label{eq:new-algorand-from-C-to-L}  \\
r\cdot \sum_{j=k}^{\lambda}{\lambda \choose j} q^{j} (1-q)^{\lambda-j}& \geq & \alpha ~~ \label{eq:new-algorand-from-G-to-L} 
\end{eqnarray}

As in the previous section, we can again establish that Claim \ref{cl:lambda-bound} holds, and hence
$$n-1 \geq \lambda \geq \frac{k-1}{q} $$

It is now easy to verify that we can have at least as many equilibria as in the previous section, with contributors and free riders, as per the following corollary.
\begin{corollary}
\label{cor:algorand}
    If a strategy profile with $\lambda$ contributors and $n-\lambda$ free riders is an equilibrium for the game of Section \ref{sec:richer-model}, then it is also an equilibrium for the universal payments game.
\end{corollary}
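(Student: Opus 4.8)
The plan is to compare the equilibrium constraints of the universal-payments-with-retraction game, namely \eqref{eq:algorand-from-C-to-G}--\eqref{eq:algorand-from-G-to-L}, with the corresponding constraints of the game of Section~\ref{sec:richer-model}, namely \eqref{eq:new'from-C-to-G}--\eqref{eq:new'-L-and-G}, and observe that every profile satisfying the latter system also satisfies the former. First I would note that the two ``retraction'' constraints \eqref{eq:algorand-from-C-to-G}--\eqref{eq:algorand-from-G-to-C} coincide verbatim with \eqref{eq:new'from-C-to-G}--\eqref{eq:new'from-G-to-C} (both are $f(\lambda,k,q)\ge\beta/r$ and $f(\lambda+1,k,q)\le\beta/r$), since the extra reward paid to ineligible participants cancels out of the contributor-versus-free-rider comparison: whether or not $i$ retracts, if $i$ is eligible she still gets $r$ in the universal scheme just as a free rider would, so the marginal incentive to complete the task is unchanged. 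Hence these constraints transfer with no work.

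Next I would handle the ``abstain'' constraints. For the free-rider-versus-abstainer comparison, the relevant inequality in the plain retraction game is \eqref{eq:new'-L-and-G}, $r q\sum_{j=k}^{\lambda}\binom{\lambda}{j}q^j(1-q)^{\lambda-j}\ge\alpha$, while in the universal game it is \eqref{eq:algorand-from-G-to-L}, equivalently $r\sum_{j=k}^{\lambda}f(\lambda+1,j+1,q)\ge\alpha$; since a free rider in the universal game collects $r$ whenever progress is made regardless of eligibility, the LHS here is a sum over more events (it no longer carries the leading factor $q$ from requiring $i$ eligible), so the universal-game LHS dominates the plain-game LHS term by term, and \eqref{eq:new'-L-and-G} $\Rightarrow$ \eqref{eq:algorand-from-G-to-L}. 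Similarly, for the contributor-versus-abstainer constraint, \eqref{eq:new'from-C-to-L} reads $rq\sum_{j=k-1}^{\lambda-1}f(\lambda,j+1,q)\ge\alpha+\beta q$ and \eqref{eq:algorand-from-C-to-L} reads $r\bigl(q f(\lambda,k,q)+\sum_{j=k}^{\lambda-1}f(\lambda,j+1,q)\bigr)\ge\alpha+\beta q$; here I would check that the universal-game LHS exceeds the plain-game LHS, which amounts to the elementary inequality $q f(\lambda,k,q)+\sum_{j=k}^{\lambda-1}f(\lambda,j+1,q)\ge q\sum_{j=k-1}^{\lambda-1}f(\lambda,j+1,q)$, i.e.\ $qf(\lambda,k,q)+(1-q)\sum_{j=k}^{\lambda-1}f(\lambda,j+1,q)\ge q f(\lambda,k,q)\cdot\mathbf{1}$ after isolating the $j=k-1$ term — a routine rearrangement since all $f$-values are nonnegative and $0<q<1$.

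The only subtlety, and the step I expect to require the most care, is making sure the \emph{correspondence of player sets} is legitimate: a profile that is an equilibrium of the plain retraction game with sets $C,F$ and empty $A$ must be checked to still have empty $A$ as an equilibrium outcome in the universal game, i.e.\ that no player has a profitable deviation \emph{to} abstaining either — but since universal payments only increase the utility of participating (they add the ineligible-and-progress reward) while leaving the abstainer's utility $p(s_{-i})v = 0$ untouched, deviating to abstain is weakly worse in the universal game than in the plain game, so this direction is automatic. Assembling these observations — identical retraction constraints, dominated abstain constraints in the correct direction, and monotonicity of the abstain deviation — gives that the profile remains an equilibrium, completing the proof. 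I would write this up as a short paragraph citing \eqref{eq:new'from-C-to-G}--\eqref{eq:new'-L-and-G} and \eqref{eq:algorand-from-C-to-G}--\eqref{eq:algorand-from-G-to-L} directly, with the one-line inequality verification inlined.
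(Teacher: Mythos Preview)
Your proposal is correct and follows essentially the same approach as the paper's proof: the paper simply observes that the first two constraints \eqref{eq:algorand-from-C-to-G}--\eqref{eq:algorand-from-G-to-C} are identical to \eqref{eq:new'from-C-to-G}--\eqref{eq:new'from-G-to-C}, and that the last two constraints \eqref{eq:algorand-from-C-to-L}--\eqref{eq:algorand-from-G-to-L} are relaxed versions of \eqref{eq:new'from-C-to-L}--\eqref{eq:new'-L-and-G}, hence automatically satisfied. Your explicit verification of the inequality and your remark about the empty abstainer set are more detailed than what the paper writes down, but they do not deviate from the paper's argument.
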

\begin{proof}
Consider such an equilibrium. The first two constraints are identical with the previous section and are automatically satisfied. As for the last two constraints, we can see that they are relaxed versions of the corresponding constraints of Section \ref{sec:richer-model}, and hence they will be satisfied as well.
\end{proof}

Finally, if we follow the analysis of the previous section, 
we can arrive at a similar characterization theorem, which we state below.
\begin{theorem}
\label{thm:algorand-C-and-F}
    Given a universal payments game, there exists a non-trivial equilibrium with $\lambda$ contributors and $n-\lambda$ free riders if and only if the following conditions are met:
    \begin{itemize}
        \item $n-1\geq \lambda \geq \frac{k-1}{q}$
        \item $\frac{\beta}{r} \in [f(\lambda+1, k, q), f(\lambda, k, q)]$
        \item $0 \leq \frac{\alpha}{r} \leq \min\{ q\cdot(f(\lambda, k, q) -\frac{\beta}{r}) +  \sum_{j=k}^{\lambda-1} f(\lambda, j+1, q), ~ \sum_{j=k}^{\lambda} f(\lambda+1, j+1, q)\}$
    \end{itemize}
    Moreover, whenever the above constraints are satisfied, there is either a unique value for $\lambda$ or two consecutive values that can satisfy them.
\end{theorem}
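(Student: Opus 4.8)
The plan is to transcribe the proof of Theorem~\ref{thm:C-and-F} almost verbatim, using the fact that under universal payments the two ``retraction'' constraints are unchanged and the two ``abstention'' constraints are only relaxed.

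First I would dispose of mixed equilibria containing abstainers. As the text already notes, the same argument as in Theorem~\ref{thm:no-C-and-A} applies: writing out the ``contributor does not abstain'' and ``abstainer does not contribute'' inequalities from Table~\ref{tab:tab-utilities-algorand} with $v=0$ and expanding the success probabilities, one gets two binomial inequalities whose gap is strictly positive by Lemma~\ref{lem:no-intermediate} (combined with Claim~\ref{cl:properties} to absorb the extra $q\cdot f(\cdot,k,q)$ term that the universal payment contributes), so they cannot hold simultaneously for any $0<\lambda<n$. Hence the only surviving candidates are the all-in profile and profiles with $\lambda$ contributors and $n-\lambda\ge 1$ free riders, and it suffices to characterise the latter.

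Next I would check that such a profile is an equilibrium if and only if \eqref{eq:algorand-from-C-to-G}--\eqref{eq:algorand-from-G-to-L} hold, which is a routine computation from Table~\ref{tab:tab-utilities-algorand}: for a free rider~$i$ one has $p(s_{-i})=\sum_{j=k}^{\lambda}f(\lambda+1,j+1,q)$, while for a contributor~$i$ one has $p(s_{-i})=\sum_{j=k}^{\lambda-1}f(\lambda,j+1,q)$ and $p(i,s_{-i})=f(\lambda,k,q)+p(s_{-i})$; plugging these into the ``contribute vs.\ retract'' comparison yields \eqref{eq:algorand-from-C-to-G}--\eqref{eq:algorand-from-G-to-C} (here the universal payment cancels, so these are literally the constraints of Section~\ref{sec:richer-model}), into the ``contributor vs.\ abstain'' comparison yields \eqref{eq:algorand-from-C-to-L}, and into the ``free rider vs.\ abstain'' comparison yields \eqref{eq:algorand-from-G-to-L} (here the universal payment deletes the factor~$q$, which is why this is the weakest of the four constraints and also why Corollary~\ref{cor:algorand} holds). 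The three bullets then drop out immediately: \eqref{eq:algorand-from-C-to-G} and \eqref{eq:algorand-from-G-to-C} say exactly $\tfrac{\beta}{r}\in[f(\lambda+1,k,q),f(\lambda,k,q)]$, and since this interval is non-empty only when $f(\lambda,k,q)\ge f(\lambda+1,k,q)$, Claim~\ref{cl:properties} forces $\lambda\ge\tfrac{k-1}{q}$ (this is Claim~\ref{cl:lambda-bound}), which with the trivial $\lambda\le n-1$ gives the first bullet; solving \eqref{eq:algorand-from-C-to-L} and \eqref{eq:algorand-from-G-to-L} for $\tfrac{\alpha}{r}$ gives the two upper bounds $q\bigl(f(\lambda,k,q)-\tfrac{\beta}{r}\bigr)+\sum_{j=k}^{\lambda-1}f(\lambda,j+1,q)$ and $\sum_{j=k}^{\lambda}f(\lambda+1,j+1,q)$, whose minimum, together with $\tfrac{\alpha}{r}\ge 0$, is the third bullet.

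Finally, for the ``moreover'' clause I would reuse the monotonicity part of Claim~\ref{cl:properties}: on the admissible range $\lambda\in\{\lceil\tfrac{k-1}{q}\rceil,\dots,n-1\}$ the quantity $f(\lambda,k,q)$ is weakly decreasing in~$\lambda$, strictly so for $\lambda>\tfrac{k-1}{q}$, so the closed intervals $[f(\lambda+1,k,q),f(\lambda,k,q)]$ have pairwise disjoint interiors and meet only at shared endpoints; hence a fixed $\tfrac{\beta}{r}$ lies in the interior of at most one of them, and can equal a shared endpoint for at most two consecutive values of~$\lambda$, which is precisely the dichotomy of Lemma~\ref{lem:2values}. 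The only genuinely new work is the constraint derivation of the previous paragraph, and the one place where care is needed is tracking exactly where the universal-payment term enters each of the four comparisons; everything structural is a faithful copy of the proof of Theorem~\ref{thm:C-and-F}, so I do not expect a serious obstacle.
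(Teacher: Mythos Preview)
Your proposal is correct and follows the paper's own approach: the paper gives no self-contained proof of Theorem~\ref{thm:algorand-C-and-F} but simply says to re-run the analysis of Section~\ref{sec:richer-model}, and that is precisely what you do---observe that \eqref{eq:algorand-from-C-to-G}--\eqref{eq:algorand-from-G-to-C} coincide with \eqref{eq:new'from-C-to-G}--\eqref{eq:new'from-G-to-C} so that Claim~\ref{cl:lambda-bound}, Lemma~\ref{lem:beta-range} and Lemma~\ref{lem:2values} carry over verbatim, and then read off the two $\alpha/r$ bounds from the relaxed abstention constraints. One small remark: in your ``no abstainers'' paragraph you do not really need Claim~\ref{cl:properties} to handle the extra universal-payment term; writing out the two ``contribute vs.\ abstain'' inequalities you get the tail sums $\sum_{j\ge k}\binom{\lambda}{j}q^j(1-q)^{\lambda-j}$ and $\sum_{j\ge k}\binom{\lambda+1}{j}q^j(1-q)^{\lambda+1-j}$, whose strict ordering follows from Lemma~\ref{lem:no-intermediate} with the indices shifted by one (exactly as in the proof of Theorem~\ref{thm:simple-advance}, which the paper cites at this point).
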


\begin{corollary} 
\label{cor:retract-advance}
Fix an integer $\lambda$ with $\lambda\leq n$, and suppose $q<1$. Let $r_{min}$ (resp. $r_{min}'$) be the minimum possible reward that makes the profile with $\lambda$ contributors and $n-\lambda$ free riders an equilibrium in the original game of Section \ref{sec:richer-model} (resp. in the game with universal payments). Then, the total expenditure is higher in the game with universal payments.

\begin{proof}

We will show that the total expenditure $T'=n\cdot r_{min}'$ in the universal payments game is higher than the total expenditure $T=q\cdot n\cdot r_{min}'$ in the original one. We do so by comparing the bounds for $r_{min}$ and $r_{min}'$ derived by Theorems \ref{thm:C-and-F} and \ref{thm:algorand-C-and-F} and showing that in all cases $r_{min}' \geq r_{min}\cdot q$, which produces the claimed relation for $T,T'$. More concretely, Theorem \ref{thm:C-and-F} implies that in an equilibrium with $\lambda$ contributors,  $r_{min}=\max\left\{r_1,r_2,r_3\right\}$ where

\begin{eqnarray}
         r_1 &=& \frac{\beta}{f(\lambda, k, q)} \label{rbound1}\\
         r_2 &=&  \frac{\alpha}{q\cdot \sum_{j=k-1}^{\lambda-1}f(\lambda, j+1, q) -q\frac{\beta}{r_{min}}} \label{rbound2}\\
        r_3 &=&  \frac{\alpha}{q\cdot \sum_{j=k}^{\lambda} f(\lambda+1, j+1, q)} \label{rbound3}
\end{eqnarray}

and Theorem \ref{thm:algorand-C-and-F} implies implies that in an equilibrium with $\lambda$ contributors, $r_{min}'=\max\left\{r'_1,r'_2,r'_3\right\}$ where

\begin{eqnarray}
         r'_1 &=& \frac{\beta}{f(\lambda, k, q)} \label{rpbound1}\\
        r'_2 &=&  \frac{\alpha}{ q\cdot\left(f(\lambda, k, q) -\frac{\beta}{r'_{min}}\right) +  \sum_{j=k}^{\lambda-1} f(\lambda, j+1, q)} \label{rpbound2}\\
        r'_3 &=&  \frac{\alpha}{ \sum_{j=k}^{\lambda} f(\lambda+1, j+1, q)} \label{rpbound3}
\end{eqnarray}

We compare the bounds in pairs. We first point out that $r_1=r'_1$ and thus $r_{1}' \geq r_{1}\cdot q$ as $1\geq q$, 
and that $r_{3}' = r_{3}\cdot q$. 
It remains to show that  $r_{2}' \geq r_{2}\cdot q$.

We have that  
\begin{eqnarray}
r_{2}\cdot q&=&\frac{\alpha}{ \sum_{j=k-1}^{\lambda-1}f(\lambda, j+1, q) -\frac{\beta}{r}} \\
&=& \frac{\alpha}{\left(f(\lambda, k, q) -\frac{\beta}{r_{min}} \right) + \sum_{j=k}^{\lambda-1}f(\lambda, j+1, q) }
\end{eqnarray}

Thus, we have that $r_{2}' \geq r_{2}\cdot q$ is true if and only if $f(\lambda, k, q) -\frac{\beta}{r_{min}}  \geq q\cdot\left(f(\lambda, k, q) -\frac{\beta}{r'_{min}}\right)$. We can rewrite this as





\begin{eqnarray}
f(\lambda, k, q)\cdot(1-q) \geq \frac{\beta}{r_{min}}  - q\cdot\frac{\beta}{r'_{min}} \\
f(\lambda, k, q)\cdot(1-q) \geq \beta \cdot \frac{  r'_{min} - q \cdot r_{min}}{r_{min}\cdot r'_{min}} \label{trickyRHS}
\end{eqnarray}

We now proceed with a proof by cases for the right hand side of eq. \ref{trickyRHS}. If it is negative, then eq. \ref{trickyRHS} is true as $f$ and $1-q$ are both positive. This in turn proves that $r_{2}' \geq r_{2}\cdot q$. We know have that $r_{i}' \geq r_{i}\cdot q$ for $i \in \{1,2,3\}$ which implies that $r_{min}' \geq r_{min}\cdot q$ and thus completes the proof.

If the right hand side of eq. \ref{trickyRHS} is non-negative, it must be that  $  r'_{min} - q \cdot r_{min}\geq 0$ as $\beta$ and $r_{min},r'_{min}$ are all positive. This completes the proof as it shows that  $  r'_{min} \geq q \cdot r_{min}$ directly.
 
\end{proof}

\end{corollary}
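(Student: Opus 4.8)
The plan is to reduce the expenditure comparison to the single inequality $r_{min}' \ge q\,r_{min}$, and then to prove that inequality coordinate‑by‑coordinate from the equilibrium characterizations of Theorems~\ref{thm:C-and-F} and~\ref{thm:algorand-C-and-F}.

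First I would fix the bookkeeping. In the profile under consideration all $n$ players participate, and both protocols pay out only when progress is made; since the progress event is driven by the same set of $\lambda$ contributors in the two games it occurs with the same probability. Per epoch of progress, the retraction protocol of Section~\ref{sec:richer-model} pays $r_{min}$ to the (expected $q n$ many) eligible participants, while under universal payments all $n$ participants collect $r_{min}'$; hence $T = q n\,r_{min}$ and $T' = n\,r_{min}'$, and it suffices to show $r_{min}' \ge q\,r_{min}$. There is no well‑definedness concern: by Corollary~\ref{cor:algorand} every reward that makes the profile an equilibrium of the retraction game also makes it an equilibrium of the universal‑payments game, so whenever $r_{min}$ exists so does $r_{min}'$ (and $r_{min}' \le r_{min}$).

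Next I would read off the two minimal rewards. For a fixed $\lambda$, the set of rewards for which the profile is an equilibrium is an interval whose left endpoint is the largest of the lower bounds coming from the individual equilibrium constraints. From Theorem~\ref{thm:C-and-F} I would write $r_{min} = \max\{r_1,r_2,r_3\}$ with $r_1 = \beta/f(\lambda,k,q)$ (from~\eqref{eq:new'from-C-to-G}), $r_2 = (\alpha+q\beta)/\bigl(q\sum_{j=k-1}^{\lambda-1} f(\lambda,j+1,q)\bigr)$ (obtained by reading~\eqref{eq:new'from-C-to-L} directly as a lower bound on $r$), and $r_3 = \alpha/\bigl(q\sum_{j=k}^{\lambda} f(\lambda+1,j+1,q)\bigr)$ (from~\eqref{eq:new'-L-and-G}). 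From Theorem~\ref{thm:algorand-C-and-F} I would likewise write $r_{min}' = \max\{r_1',r_2',r_3'\}$ and observe at once that $r_1' = r_1$ (the first two constraints are identical in the two models), that $r_3' = \alpha/\bigl(\sum_{j=k}^{\lambda} f(\lambda+1,j+1,q)\bigr) = q\,r_3$, and that $r_2' = (\alpha+q\beta)/\bigl(q\,f(\lambda,k,q) + \sum_{j=k}^{\lambda-1} f(\lambda,j+1,q)\bigr)$ after the analogous rearrangement of~\eqref{eq:algorand-from-C-to-L}.

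Finally I would check $r_i' \ge q\,r_i$ for $i=1,2,3$ and pass to maxima. For $i=1$ this is $r_1 \ge q\,r_1$, true since $q<1$; for $i=3$ it holds with equality. For $i=2$, using $\sum_{j=k-1}^{\lambda-1} f(\lambda,j+1,q) = f(\lambda,k,q) + \sum_{j=k}^{\lambda-1} f(\lambda,j+1,q)$ (splitting off the $j=k-1$ term, which equals $f(\lambda,k,q)$) and cross‑multiplying the two positive denominators, $r_2' \ge q\,r_2$ collapses to $(1-q)\,f(\lambda,k,q) \ge 0$, which is immediate since $f\ge 0$. Hence $r_{min}' = \max_i r_i' \ge \max_i(q\,r_i) = q\,r_{min}$, so $T' = n\,r_{min}' \ge q n\,r_{min} = T$ (and strictly so whenever $r_{min}=r_1$, since then $r_{min}' \ge r_1 = r_{min} > q\,r_{min}$). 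I expect the only delicate point to be the handling of the $\alpha$‑constraints~\eqref{eq:new'from-C-to-L} and~\eqref{eq:algorand-from-C-to-L}: each can be displayed in a form carrying a $\beta/r$ term on the right, so one must first turn it into an honest lower bound $r \ge (\alpha+q\beta)/(\cdots)$ rather than a self‑referential expression, after which the entire argument reduces to the single fact $(1-q)f(\lambda,k,q)\ge 0$.
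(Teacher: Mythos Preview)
Your proof is correct and follows the paper's strategy: reduce to $r_{min}'\ge q\,r_{min}$ and establish this by comparing the three lower-bound constraints pairwise, with $r_1'=r_1$, $r_3'=q\,r_3$, and a direct check for the middle constraint.

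The one noteworthy difference is your handling of the second constraint. The paper keeps~\eqref{eq:new'from-C-to-L} and~\eqref{eq:algorand-from-C-to-L} in the self-referential form $r_2=\alpha/\bigl(q\sum f - q\beta/r_{min}\bigr)$ (and similarly for $r_2'$), which forces a case split on the sign of $r_{min}'-q\,r_{min}$ at the end. You instead first clear the $\beta/r$ term to get the explicit bounds $r_2=(\alpha+q\beta)/\bigl(q\sum_{j=k-1}^{\lambda-1}f(\lambda,j+1,q)\bigr)$ and $r_2'=(\alpha+q\beta)/\bigl(q\,f(\lambda,k,q)+\sum_{j=k}^{\lambda-1}f(\lambda,j+1,q)\bigr)$, after which $r_2'\ge q\,r_2$ reduces in one line to $(1-q)f(\lambda,k,q)\ge 0$. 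This is cleaner and avoids the circularity; the paper's version reaches the same inequality but through an unnecessary detour.
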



\section{Conclusions and future research}

We have demonstrated that by carefully setting reward levels we can achieve equilibria with high participation in blockchain participation games, even in cases where the bookkeeping of the underlying system does not track user behavior in detail. This is beneficial as high participation in turn increases the resiliency of such systems. Whilst we believe our results are encouraging, we also wish to highlight a number of avenues for future research.  

\begin{enumerate}
    \item It would be interesting to enrich the model in terms of different rewards and operational costs between players.
    \item In real world blockchains, participation games are played repeatedly. As such it would be natural to also treat the repeated version of the games we model, either theoretically or via experimental approaches. If some participants can observe the behaviors and contributions of others before committing to their own decisions, it would be interesting to analyze best response dynamics in this setting. 
    \item Again, in the real world, blockchain participants are human beings subject to complex motivations. We believe it would be beneficial to leverage behavioral (experimental) game theory in the setting of participation games. 
    \item Our model treats the underlying blockchain  as black-box without considering the exploitation of protocol specific deviations. In a real world deployment however, strategic participants may choose to participate and deviate in protocol-specific ways (e.g., by engaging in mining games \cite{mininggame}) -- hence it is interesting to study the games that result in this setting.
\end{enumerate}
\ignore{
\noindent \pc{tried to compress into last bullet} {\bf Temporal dimension.} We can assume that the game proceeds in discrete time steps, so that each player can decide to check eligibility in any time step she desires, and most importantly, she can observe at any time step, what the others have done so far. To start with a simpler case, let us assume that we have just 2 time steps, say the {\it early} phase and the {\it late} phase. The players who decide to act during time step 1 could be the players who might care more for participating or for making progress. The players who decide to act in time step 2 can be thought of as players who want to check first what happened during time step 1. NOTE: Players at time step 2 can only see how many signed during step 1, but not how many checked eligibility. }

\bibliographystyle{plain}
\bibliography{free-riding}

\newpage
\appendix

\section{Missing proofs from Section 2}

\subsection{Proof of Lemma \ref{lemma-asymnosameclass}} \label{appendixproofasymnosameclass}
\asymnosameclass*
\begin{proof}

With no loss of generality, we assume $a\in C$ and $b\in A$. Following the notation of Lemmas \ref{lemma-asymcontrib}, \ref{lemma-asymnoncon}, it must be that $q_a \cdot L_a \geq {\alpha\over r}$ and $q_b \cdot M_b \leq {\alpha\over r}$. We will show that in fact $M_b > L_a$, leading to a contradiction.

    
    We start by rewriting $M_b$  as:
\begin{align*}
    &\sum_{m=k-1}^{\lambda} \sum_{S\subseteq C, |S|=m} &~\left( \prod_{j\in S} q_j \cdot \prod_{j\in (C\setminus S)} (1-q_j) \right) &=&\\
    &\sum_{m=k-1}^{\lambda} \sum_{S\subseteq C, |S|=m,a\in S} &~\left( \prod_{j\in S} q_j \cdot \prod_{j\in (C\setminus S)} (1-q_j) \right) &&\\
   + &\sum_{m=k-1}^{\lambda} \sum_{S\subseteq C, |S|=m,a\notin S} &~\left( \prod_{j\in S} q_j \cdot \prod_{j\in (C\setminus S)} (1-q_j) \right) &=&\\
 &\sum_{S\subseteq C, |S|=k-1,a\in S} &~\left( \prod_{j\in S} q_j \cdot \prod_{j\in (C\setminus S)} (1-q_j) \right)&& \quad \mbox{(we isolate the $k-1$ term.)}\\
   +  &\sum_{m=k}^{\lambda} \sum_{S\subseteq C, |S|=m,a\in S} &~\left( \prod_{j\in S} q_j \cdot \prod_{j\in (C\setminus S)} (1-q_j) \right) &&\\
   + &\sum_{m=k-1}^{\lambda-1} \sum_{S\subseteq C, |S|=m,a\notin S} &~\left( \prod_{j\in S} q_j \cdot \prod_{j\in (C\setminus S)} (1-q_j) \right) &= &\quad \mbox{(the term for $\lambda$ is nil.)}&\\ 
&\sum_{S\subseteq C, |S|=k-1,a\in S} &~\left( \prod_{j\in S} q_j \cdot \prod_{j\in (C\setminus S)} (1-q_j) \right)&& \mbox{(we will rewrite this as $\Sigma_3$)} \\  
        + q_a \cdot &\sum_{m=k-1}^{\lambda-1} \sum_{S\subseteq C\setminus \{a\}, |S|=m} &~\left( \prod_{j\in S} q_j \cdot \prod_{j\in (C\setminus \{i\}\setminus S)} (1-q_j) \right) &&\\
   + (1-q_a) \cdot&\sum_{m=k-1}^{\lambda-1} \sum_{S\subseteq C\setminus \{a\}, |S|=m} &~\left( \prod_{j\in S} q_j \cdot \prod_{j\in (C\setminus \{i\}\setminus S)} (1-q_j) \right) &=& \Sigma_3 + L_a \quad \mbox{ (where $\Sigma_3>0$)}\\\\  
\end{align*}
\end{proof}

\section{Missing proofs from Section 3}

\subsection{Proof of Lemma \ref{lem:2values}} \label{appendixprooftwoval}

\maxtwoval*

\begin{proof}
     By the definition of $f$, and $\lambda\in [\frac{k-1}{q}, n)$, it must be that $$f(\lambda-1-x, k, q)>f(\lambda-1, k, q)>f(\lambda, k, q)$$ and also it must be that $$f(\lambda, k, q)\geq f(\lambda+1, k, q)>f(\lambda+1+x, k, q)$$ for all $x\geq 1$. The equality is only true when $\lambda = \frac{k-1}{q}$.

     By the first inequality, there can be no equilibria for $\lambda-2$ or below, as both values of $f$ will be greater than $ \beta/r$.  In the case where $\lambda = \frac{k-1}{q}$, we may also discount $\lambda-1$, as $\lambda-1 \notin [\frac{k-1}{q}, n)$ .

     By the second, there can be no equilibria for $\lambda+2$ and above. In the case where $\lambda \neq \frac{k-1}{q}$, the latter bound is improved to $\lambda+1$.

     Thus, at most we can have equilibria for   $\lambda$ and $\lambda+1$ when $\lambda = \frac{k-1}{q}$ and only for $\lambda-1$ and  $\lambda$ otherwise.
\end{proof}
\subsection{Proof of Lemma \ref{lem:all-out}} \label{proofallout}

\allout*
\begin{proof}
    The first statement of the lemma follows directly by utilizing Equations \eqref{eq:new'from-C-to-G} and \eqref{eq:new'from-C-to-L}, which prescribe that no agent has an incentive to become an abstainer or a free rider.
    As for the second claim of the lemma, let $\alpha, \beta$ be the two cost parameters. Then since $f(n, k, q)>0$, we have that there exists a range for the reward $r$ that satisfies Equation \eqref{eq:all-in-C-to-F}. Furthermore, note that the RHS of \eqref{eq:all-in-C-to-A} is also positive. Indeed, since $\beta/r \geq f(n, k, q)$, the RHS of \eqref{eq:all-in-C-to-A} is at least $q\cdot(\sum_{j=k}^n f(n, j+1, q))$. Hence, we can always calibrate $r$ so that both $\alpha/r$ and $\beta/r$ respect their positive upper bounds. 
\end{proof}

\end{document}

\subsubsection{Equations when we allow whipping}

\begin{eqnarray}
p(\sigma_i, s_{-i}) - p(s_{-i}) & \geq & \frac{\beta-c}{r+v_i} ~~\forall i\in C \label{eq':from-C-to-G}  \\
p(\sigma_i, s_{-i}) - p(s_{-i}) & \leq & \frac{\beta-c}{r+v_i} ~~\forall i\in F \label{eq':from-G-to-C} \\
p_i\cdot[(r+v_i)p(\sigma_i, s_{-i}) - v_ip(s_{-i})] & \geq & \alpha+ \beta \cdot p_i ~~\forall i\in C \label{eq':from-C-to-L} \\
p_i\cdot r\cdot p(s_{-i}) & \geq & \alpha + cp_i  ~~\forall i\in F \label{eq':from-G-to-L}
\end{eqnarray}

If we focus on the symmetric case, where $p_i=p$, and with $v_i=0$, we get:
\begin{eqnarray}
{\lambda_1-1 \choose k-1} p^{k-1} (1-p)^{\lambda_1-k}& \geq & \frac{\beta-c}{r} ~~ \label{eq':new'from-C-to-G}  \\
{\lambda_1 \choose k-1} p^{k-1} (1-p)^{\lambda_1-k+1}& \leq & \frac{\beta-c}{r} ~~ \label{eq':new'from-G-to-C} \\
r\cdot p\cdot \sum_{j=k-1}^{\lambda_1-1}{\lambda_1-1 \choose j} p^{j} (1-p)^{\lambda_1-1-j}& \geq & \alpha + \beta p ~~ \label{eq':new'from-C-to-L}  \\
r\cdot p\cdot \sum_{j=k}^{\lambda_1}{\lambda_1 \choose j} p^{j} (1-p)^{\lambda_1-j}& \geq & \alpha + cp ~~ \label{eq':new'-L-and-G} 
\end{eqnarray}

 \section{Some notes from classic models in game theory}

\subsection{Step level public good games: The static case}

We are interested in situations, where a fixed population of $n$ players is considering to participate or not in building a certain project.
In the game theory textbook of M. Osborne, the game we present is referred to as the "Contributing to a public good" game. 
Similar variations on studying the arising of free riders have also been considered in the literature, referred to as "step level public goods games".
Each player has two available pure strategies, whether to contribute (C) or to abstain (A). If a player decides to contribute, she experiences a cost of $\epsilon$, which should be thought as a very small quantity. 
The rules of the game are as follows:
\begin{itemize}
    \item If at least $k$ people contribute, for some given parameter $k$, the project is built, and there is a reward $R$ that is shared equally, i.e., everybody enjoys a benefit of $R/n$ by having the project built.
    \item If less than $k$ people contribute, the project is not built and no value is generated.
\end{itemize}

Therefore, the utility of a player in an action profile $\vecp\in \{C, A\}^n$,  is 
\[u_i(\vecp) = 
		\left\{
		\begin{array}{ll}
		R/n,  & \mbox{at least $k$ contribute and $p_i = A$ } \\
		R/n - \epsilon,  & \mbox{at least $k$ contribute and $p_i = C$ } \\
		0,  & \mbox{less than $k$ contribute and $p_i = A$ } \\
		-\epsilon,  & \mbox{less than $k$ contribute and $p_i = C$ }
		\end{array}
		\right.
		\]

We are interested in understanding what are the equilibria that arise in this game, or more generally to argue about what outcomes to expect in such a situation.

\subsection{A dynamic/repeated model}

Suppose that there exists a large population of possible participants, say $\mathcal{N} = \{1, 2, \dots, N\}$. 
The game now proceeds in rounds. In every round a subset of cardinality $n$ is selected out of the $N$ agents at random, and these selected $n$ players play the game described in the static case.

This is not exactly the type of repeated games we see in game theory textbooks, as players here come and go. We can view it as a repeated game on the $N$ players, where the action space of each player is not fixed in advance, but instead, in each round it is either null or $\{C, A\}$.

\subsection{Analysis of the static case}

It is interesting to see first some special cases.

Relevant literature: Palfrey and Rosenthal (1984 and 1991), Gradstein and Nitzan (1990), Offerman (1993), Andreoni (1998).

\subsubsection{The case of $k=1$}

When one person suffices for the project to be built, then the only pure equilibria that can arise are precisely when exactly one person contributes and the others are free riders. No other profile can be an equilibrium. 

It is interesting to contrast this with mixed equilibria, where we can allow players to use a randomized strategy. If we think of the players as a somehow homogeneous population, we can focus on symmetric mixed Nash equilibria, where each player plays a probability distribution in the form $(p, 1-p)$, where $p$ is the probability that a player selects to contribute. 

\begin{theorem}
The only mixed Nash equilibrium of the game is for 
$$p= 1 - (\frac{\epsilon\cdot n}{R})^{1/n-1}$$
\end{theorem}

\subsubsection{The case of general $k$}

When $k>1$, we have one undesirable equilibrium, namely that nobody contributes. Nevertheless, we also have a huge number of equilibria, where exactly $k$ people contribute, and the project gets built.

\begin{theorem}
In a pure Nash equilibrium, either $0$ players contribute or exactly $k$ players contribute.
\end{theorem}

{\bf Note:} If we care for strong equilibria, then the bad equilibrium disappears if we focus on $k$-strong equilibria (where we care for deviations of coalitions up to size $k$).

\subsection{Studies in Behavioral Game Theory}

These games have attracted quite some attention in behavioral economics and a lot of experiments have taken place. 

One line of work (e.g. Offerman et al. 1996, check also Schram et al, 2008) tries to extract different behavior patterns and classify players as {\it individualists} or {\it cooperators}. And then even the cooperators can be further classified as either people who get an extra utility if they decide to contribute because they just feel nice to contribute to the community or people who get extra utility only if they decide to contribute and the project gets built (they feel happiness only if the project does get built). 
One can then do some elementary game-theoretic analysis per each class of players on deciding what to choose.
In particular, if we assume that players act based on their expected utility, then each player should try to estimate the probability that a potential contribution by her is
\begin{itemize}
    \item futile (less than $k-1$ other players will contribute)
    \item critical/pivotal (exactly $k-1$ others contribute)
    \item redundant (more than $k-1$ others contributed) 
\end{itemize}

If we had such estimates from past behavior, then we can compute the expected utility that results from contributing and from abstaining for every class of behavioral pattern (individualists, cooperators, etc) and see which outcome to expect.

{\bf Question:} Is it feasible to follow such a modeling approach for Mithril? 

Other works have focused more on comparing experiments where players move sequentially to experiments where the game is played simultaneously by all players. See e.g. Erev and Rapoport (1990), Normann and Rau (Step-level public goods: experimental evidence, 2011).
There seems to be a consensus that in the experiments, when players play sequentially, the project gets built more often. This can also be backed by the fact that if we view this as an extensive-form game, then there is a unique subgame-perfect equilibrium where the project gets built, hence no bad equilibria arise. These papers also make a point that it might be important what the first-movers do, as they are the {\it leaders by example} (the remaining players act like mimickers).

{\bf Question:} In Mithril, can we view it as players choosing their action sequentially? Or would we rather say that they will not bother to check how many have already contributed?

\begin{theorem}[from Erev-Rapoport 1990]
    Suppose the players decide sequentially, one after the other. Then there exists a unique subgame perfect equilibrium, where exactly $k$ players contribute.
\end{theorem}

Addition of refund policies: Coats and Neilson (BELIEFS ABOUT OTHER-REGARDING PREFERENCES IN A SEQUENTIAL PUBLIC GOODS GAME, 2005) and Coats et al (2009)